\def\EPARA{{\rm E}\kern-.03em{\rm\sc\large p\kern-.065em a\kern-.03em r\kern-.01em a}\xspace} %正文epara
\def\EPARAbf{{\rm\bf E}\kern-.05em{\sc\bfseries\large p\kern-.065em a\kern-.03em r\kern-.01em a}\xspace} %加粗EPARA
\def\EPARAit{{\rm\itshape E}\kern-.05em{\sc\itshape\large p\kern-.065em a\kern-.03em r\kern-.01em a}\xspace} %斜体EPARA
\renewcommand\footnotetextcopyrightpermission[1]{} 
\pgfplotsset{compat=1.9}
\DeclareMathAlphabet{\mathcal}{OMS}{cmsy}{m}{n}
\titlespacing{\section}{0pt}{1\baselineskip plus 0.1\baselineskip minus 0.1\baselineskip}{0.2\baselineskip}
\titlespacing{\subsection}{0pt}{0.5\baselineskip plus 0.05\baselineskip minus 0.05\baselineskip}{0.2\baselineskip}
\titlespacing{\subsubsection}{0pt}{0.2\baselineskip}{0pt}
\newcommand{\bnm}{\begin{newmath}}
\newcommand{\enm}{\end{newmath}}
\newcommand{\bea}{\begin{eqnarray*}}%
\newcommand{\eea}{\end{eqnarray*}}%
\newcommand{\bne}{\begin{newequation}}
\newcommand{\ene}{\end{newequation}}
\newcommand{\bal}{\begin{newalign}}
\newcommand{\eal}{\end{newalign}}
\newenvironment{newalign}{\begin{align}%
\setlength{\abovedisplayskip}{4pt}%
\setlength{\belowdisplayskip}{4pt}%
\setlength{\abovedisplayshortskip}{6pt}%
\setlength{\belowdisplayshortskip}{6pt} }{\end{align}}
\newenvironment{newmath}{\begin{displaymath}%
\setlength{\abovedisplayskip}{4pt}%
\setlength{\belowdisplayskip}{4pt}%
\setlength{\abovedisplayshortskip}{6pt}%
\setlength{\belowdisplayshortskip}{6pt} }{\end{displaymath}}
\newenvironment{newequation}{\begin{equation}%
\setlength{\abovedisplayskip}{4pt}%
\setlength{\belowdisplayskip}{4pt}%
\setlength{\abovedisplayshortskip}{6pt}%
\setlength{\belowdisplayshortskip}{6pt} }{\end{equation}}
\newcounter{ctr}
\newcounter{mytable}
\def\mytable{\begin{centering}\refstepcounter{mytable}}
\def\endmytable{\end{centering}}
\newcounter{myfig}
\def\myfig{\begin{centering}\refstepcounter{myfig}}
\def\endmyfig{\end{centering}}
\newlength{\saveparindent}
\newlength{\saveparskip}
\newcommand{\E}{{\rm I\kern-.3em E}}
\renewcommand{\eqref}[1]{\mbox{Equation~(\ref{#1})}}
\def \part {part}
\renewcommand{\paragraph}[1]{\vspace*{6pt}\noindent\textbf{#1}\;}
\def \blackslug{\hbox{\hskip 1pt \vrule width 4pt height 8pt
    depth 1.5pt \hskip 1pt}}
\def \qed{\quad\blackslug\lower 8.5pt\null\par}
\newcounter{mynote}[section]
\newcommand\ignore[1]{}
\newcounter{rcnote}[section]
\newcounter{mrnote}[section]
\newcounter{fknote}[section]
\newcounter{anote}[section]
\DeclareMathSymbol{\mlq}{\mathord}{operators}{``}
\DeclareMathSymbol{\mrq}{\mathord}{operators}{`'}
\newcommand{\rhf}[2]{R_{f, \gamma}}
\DeclareDocumentCommand{\edist}{o o}{
  \ensuremath{
    \IfNoValueTF{#1}{{d}}{{\sf d}(#1,#2)}
  }
}
\newcommand{\olrk}[1]{\ifx\nursymbol#1\else\!\!\mskip4.5mu plus 0.5mu\left(\mskip0.5mu plus0.5mu #1\mskip1.5mu plus0.5mu \right)\fi}
\NewDocumentCommand{\indseq}{ O{1} O{r} }{{#1}\ldots {#2}}
\title{{EPARA}: Parallelizing Categorized AI Inference in Edge Clouds}
\author{Yubo Wang$^{1}$, Yubo Cui$^{1}$, Tuo Shi$^{2}$, Danyang Li$^{3}$, Wenxin Li$^{1\ast}$, Lide Suo$^{1}$, Tao Wang$^{1}$, Xin Xie$^{1}$}
\affiliation{
  \institution{Tianjin Key Laboratory of Advanced Networking, Tianjin University$^{1}$}
  \country{}
}
\affiliation{
  \institution{Department of Computer Science, Aalto University$^{2}$}
  \country{}
}
\affiliation{
  \institution{College of Computer Science, Nankai University$^{3}$}
  \country{}
}
\begin{document}

\begin{abstract}
\noindent With the increasing adoption of AI applications such as large language models and computer vision AI, the computational demands on AI inference systems are continuously rising, making the enhancement of task processing capacity using existing hardware a primary objective in edge clouds. 
We propose \EPARA, an end-to-end AI parallel inference framework in edge, aimed at enhancing the edge AI serving capability. 
Our key idea is to categorize tasks based on their sensitivity to latency/frequency and requirement for GPU resources, thereby achieving both request-level and service-level task-resource allocation. 
% The main challenge is to achieve high-precision parallelism while ensuring real-time service. 
% Therefore, in the architectural design, \EPARA adopts three time granularities ranging from fine to coarse, corresponding to request handling, information synchronization, and service placement, three critical components, integrating decentralized and centralized mechanisms to enable real-time parallelism.
\EPARA consists of three core components: 1) a task-categorized parallelism allocator that decides the parallel mode of each task, 2) a distributed request handler that performs the calculation for the specific request, and 3) a state-aware scheduler that periodically updates service placement in edge clouds. 
% During implementation, we established \EPARA protocols for device management and adaptive deployment. 
We implement a \EPARA prototype and conduct a case study on the \EPARA operation for LLMs and segmentation tasks. 
% We evaluate \EPARA through testbed experiments involving edge servers, embedded devices, and microcomputers, demonstrating its up to 2.1$\times$ higher goodput in production workloads compared to prior frameworks, while adapting to multiple edge AI inference services.
% Additionally, we perform large-scale simulations based on the innovative ParaSim simulator and conducted deep-dive experiments to validate the effectiveness of individual components.
Evaluation through testbed experiments involving edge servers, embedded devices, and microcomputers shows that \EPARA achieves up to 2.1$\times$ higher goodput in production workloads compared to prior frameworks, while adapting to various edge AI inference tasks.
\end{abstract}

%%
%% The code below is generated by the tool at http://dl.acm.org/ccs.cfm.
%% Please copy and paste the code instead of the example below.
%%
%% wyb: right indent is important
% \begin{CCSXML}
%   <ccs2012>
%      <concept>
%          <concept_id>10003120.10003138</concept_id>
%          <concept_desc>Human-centered computing~Ubiquitous and mobile computing</concept_desc>
%          <concept_significance>500</concept_significance>
%          </concept>
%      <concept>
%          <concept_id>10010147.10010178</concept_id>
%          <concept_desc>Computing methodologies~Artificial intelligence</concept_desc>
%          <concept_significance>500</concept_significance>
%          </concept>
%      <concept>
%          <concept_id>10003120.10003121</concept_id>
%          <concept_desc>Human-centered computing~Human computer interaction (HCI)</concept_desc>
%          <concept_significance>300</concept_significance>
%          </concept>
%    </ccs2012>
% \end{CCSXML}
  
% \ccsdesc[500]{Human-centered computing~Ubiquitous and mobile computing}
% \ccsdesc[500]{Computing methodologies~Artificial intelligence}
% \ccsdesc[300]{Human-centered computing~Human computer interaction (HCI)}
% 20250702 new concepts SoCC
\begin{CCSXML}
<ccs2012>
   <concept>
       <concept_id>10010520.10010521.10010537.10003100</concept_id>
       <concept_desc>Computer systems organization~Cloud computing</concept_desc>
       <concept_significance>500</concept_significance>
       </concept>
   <concept>
       <concept_id>10003120.10003138</concept_id>
       <concept_desc>Human-centered computing~Ubiquitous and mobile computing</concept_desc>
       <concept_significance>500</concept_significance>
       </concept>
   <concept>
       <concept_id>10010147.10010178</concept_id>
       <concept_desc>Computing methodologies~Artificial intelligence</concept_desc>
       <concept_significance>500</concept_significance>
       </concept>
 </ccs2012>
\end{CCSXML}

\ccsdesc[500]{Human-centered computing~Ubiquitous and mobile computing}
\ccsdesc[500]{Computer systems organization~Cloud computing}
\ccsdesc[500]{Computing methodologies~Artificial intelligence}

\keywords{Edge cloud, edge computing, distributed artificial intelligence, inference, parallelism, categorized, decentralized, multi-scale}

% \received{\color{green}18 March 2025}
% \received[revised]{\color{green}12 March 2009}
% \received[accepted]{\color{green}5 June 2009}

\maketitle

% !TeX root = ../main.tex

% \section{INTRODUCTION}
\section{Introduction}

\noindent With the rapid development of AI applications, primarily including generative~\cite{grattafioriLlama3Herd2024,linOpenSoraPlanOpenSource2024} and analytical~\cite{chengMaskedAttentionMaskTransformer2022,khanamYOLOv11OverviewKey2024} models, the resource demands for AI services continue to escalate. This trend motivates AI service providers to enhance inference system capabilities using existing hardware. The key objective lies in fulfilling service level objectives (SLOs) by implementing \textit{task-resource allocation}~\cite{zhangSHEPHERDServingDNNs2023}.

Existing centralized AI serving schemes primarily fall into three categories: 
1) serving specific AI tasks such as LLMs~\cite{agrawalTamingThroughputLatencyTradeoff2024,linParrotEfficientServing2024,zhongDistServeDisaggregatingPrefill2024}, 
2) serving AI tasks without parallelism considerations~\cite{soiferDeepLearningInference2019,hazelwoodAppliedMachineLearning2018,parkDeepLearningInference2018}, and 
3) serving AI tasks with parallelism adaptability~\cite{shubhaUSHERHolisticInterference2024,choiServingHeterogeneousMachine2022,zhangSHEPHERDServingDNNs2023}.

However, these centralized architectures still encounter service overload due to intensive requests and insufficient resources~\cite{brownLanguageModelsAre2020,deepseek-aiDeepSeekV3TechnicalReport2024}. \textit{Edge computing} with idle resources closer to end users has become an extension option~\cite{wenSurveyIntegratedSensing2024}. Though edge servers and devices exhibit scattered and individually limited resources~\cite{brownArchitectureEdgeNetworking2024}, such fragmented but cumulatively abundant resources can collaborate to meet SLOs requirements~\cite{shengyuanyeJupiterFastResourceEfficient2024}. 
Additionally, the edge system obtains more specific request patterns and is compatible with customized services.
Coupled with rising privacy concerns~\cite{yeGalaxyResourceEfficientCollaborative2024a}, \textit{AI is moving closer to the edge}~\cite{rajagopalanRunningDistilledDeepSeek2025}.
Unfortunately, compared to datacenters, edge servers and devices have limited resources and cannot deploy multiple AI services separately to fully meet SLOs~\cite{kongAccuMOAccuracyCentricMultitask2023}. Meanwhile, edge clouds and services exhibit decentralization, heterogeneity, and eruption~\cite{dongLinkLab20Multitenant2023}.
Therefore, conventional centralized AI serving systems cannot be directly applied to edge clouds.

Existing edge-based serving systems mainly encompass three directions: 
1) AI inference on multiple wireless or resource-limited edge devices~\cite{yeGalaxyResourceEfficientCollaborative2024a,shengyuanyeJupiterFastResourceEfficient2024,zhangEdgeShardEfficientLLM2024}, which encounters challenges in resource sharing and heterogeneous hardware, 
2) offloading and coordination between edge servers and devices~\cite{huangCLIOEnablingAutomatic2020,wangDependentTaskOffloading2022}, typically requiring model-specific adaptations, and 
3) inter-edge-cloud scheduling without parallelism~\cite{liMultiHopTaskOffloading2025,farhadiServicePlacementRequest2021,ioannidisAdaptiveCachingNetworks2018}, often formulated as NP-hard problems that hinder real-time serving.

However, these edge-based schemes solely focus on universal non-AI tasks, or apply partial existing datacenter AI inference strategies to edge, and they fail to achieve end-to-end edge AI inference through multi-server and multi-device. Finer-grained task-resource allocation is required to cope with diverse or resource-limited hardware and abrupt or uneven requests in edge, thereby enhancing the serving capability of edge cloud for AI inference tasks.

% However, these existing works in edge fail to comprehensively discuss the task-resource allocation of end-to-end AI inference between multi-servers and multi-devices. Most critically, all aforementioned schemes remain confined to the parallelism of the AI service itself, while neglecting that request SLOs inherently exhibit patterned characteristics of being \textit{frequency/latency-sensitive} (smoother or quicker). 
% \textcolor{blue}{Due to the layered encapsulation of APIs and complex custom load balancing in datacenter, this characteristic is easily accessible only to edge servers closer to users.}
% This characteristic is one of the few categorical features in severely heterogeneous edge environments.

% {\bf Design space:}  
Based on the above discussion, we try to investigate task-resource allocation from \textit{request-level} and \textit{service-level}, two finer-grained perspectives. 
% Specifically, we decouple tasks into request and service components.
First, at service-level, similar to datacenter practices, we categorize tasks based on their requirements for GPU resources. This primarily involves coordinating multiple GPUs for each heavy task and packing multiple lightweight tasks to each GPU~\cite{shenNexusGPUCluster2019}.
Second, at request-level, we further categorize tasks into \textit{frequency-sensitive} and \textit{latency-sensitive} types based on their requirements (smoother or quicker) and request patterns.

For example, Fig.~\ref{fig:introduction-frameDP} illustrates a heavy frequency-sensitive video task in resource-limited edge environments, where single GPU fails to meet its frame rate demands. Multi-GPU service-level parallelism introduces additional communication overhead and may still fall short of target fps, but simple request-level round-robin processing can achieve linear scalability in frame rates.
% }
% 
% \textcolor{blue}{
Notably, integrating both service-level and request-level allocation necessitates additional design efforts. Implementing such allocation mechanisms in edge further requires a pragmatic end-to-end scheduling system.

% %% introduction pic
% \begin{figure}[!t]
%     % \vspace{-0.4cm}
% 	\centering
% 	\abovecaptionskip=1pt
% 	% \belowcaptionskip=-10pt
% 	\includegraphics[width=1\columnwidth]{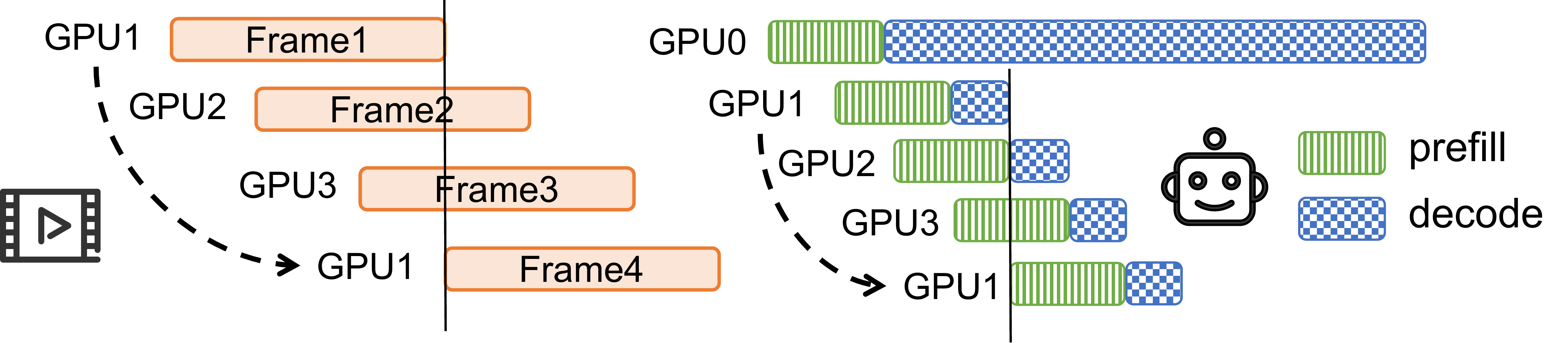}
% 	\caption{Request-level and service-level allocation.}\label{fig:introduction-frameDP}
% 	\vspace{7pt}
% \end{figure}

%% 20250317太长了，直接在这里提具体的service level 和request level，不说这些车轱辘话了
% \textcolor{blue}{These developments raise an interesting question -- can we integrate the characteristics of frequency/latency-sensitive tasks in \textit{request-level} with the parallel inference paradigms of AI systems in \textit{service-level}, and adapt them to device-server edge AI inference scenarios?} 
%% The critical challenge resides in precise computational resource scheduling across clouds and devices to realize effective parallelism.

{\bf The \EPARAbf approach:} 
Our core innovation lies in applying both request-level and service-level allocation strategies to different tasks categorized by latency/frequency and required resources, coupled with real-time request handling and periodic service placement in edge clouds. 
\EPARA not only maximizes edge system goodput through task-resource allocation, but also guarantees end-to-end edge cloud AI serving through pragmatic scheduling. 
%% 20250317 没空说design了，太长了
% \EPARA employs three key designs: task-categorized allocation, distributed request handling, and state-aware service placement. First, for allocation, \EPARA categorizes AI inference tasks by request-level SLOs characteristics (frequency or latency sensitive) and service-level computational requirements (solved with single GPU or multiple GPUs), forming four task categories with distinct allocation strategies. Second, for request handling, \EPARA performs real-time processing or timely offloading for user requests by efficient decentralized online scheduling. Third, for service placement, \EPARA periodically updates the AI services deployed on edge servers based on the system state, ensuring \EPARA's efficiency and adaptability in dynamic edge environments.

For implementation, \EPARA incorporates adaptive strategies for parameter configuration and establishes hierarchical management protocols. Case studies on LLMs and segmentation models demonstrate \EPARA's framework validity and operational workflow.

We conduct extensive experiments to evaluate \EPARA: 1)~Testbed experiments evaluate \EPARA's performance with embedded devices and microcomputers. 2)~Large-scale co-simulation measures \EPARA's goodput and consumed resources. 3)~Deepdive component-level analyses further prove \EPARA's effect and stability.

Our work makes the following technical contributions:
\begin{itemize}[leftmargin=*]
	\item In edge clouds, we adopt different allocation strategies at request-level and service-level tailored to distinct task categories, while achieving optimized task-resource allocation between AI inference tasks and edge resources.
	\item To perform allocation strategies and enable end-to-end services, we propose a pragmatic edge cloud serving system that includes distributed request handling, submodular service placement, and information synchronization, simultaneously maintaining decoupling, simplicity, and flexibility.
	\item We implement a \EPARA prototype with adaptive strategies and pragmatic management policies. Extensive testbed experiments show that \EPARA achieves up to 2.1$\times$ goodput improvement over InterEdge~\cite{brownArchitectureEdgeNetworking2024} and AlpaServe~\cite{liAlpaServeStatisticalMultiplexing2023}.
\end{itemize}
% !TeX root = ../main.tex

% \section{MOTIVATION}
\section{Motivation}
\noindent In this section, we motivate the requirements for applying task-resource allocation for categorized edge AI inference tasks in both request-level and service-level (shown in Fig.~\ref{fig:introduction-frameDP}), through background~(\cref{section:motivation-background}), desirable situation~(\cref{section:motivation-desirable}), and design space~(\cref{section:motivation-insight}). 
\subsection{Background and Limitations}\label{section:motivation-background}
\noindent When a user \emph{request} specifies a \emph{service} as its target, such combination constitutes a \emph{task}. Our final goal is to serve the most tasks bounded by SLOs with existing hardware. This essentially involves determining an optimal task-resource allocation. We discuss the limitations of three types of related schemes as follows.

\noindent{\bf Centralized AI inference system:}
Existing task-resource allocation strategies for improving centralized AI Inference system goodput mainly includes parallelism, batching, and multi-task~\cite{zhongDistServeDisaggregatingPrefill2024,shenNexusGPUCluster2019,shubhaUSHERHolisticInterference2024}. Parallelism primarily includes Data Parallelism (DP) and Model Parallelism (MP)~\cite{liAlpaServeStatisticalMultiplexing2023}. DP is utilized in training scenarios~\cite{shoeybiMegatronLMTrainingMultiBillion2020}, while MP comprises intra-operator and inter-operator parallelism, also known as Tensor Parallelism (TP) and Pipeline Parallelism (PP)~\cite{zhongDistServeDisaggregatingPrefill2024,liAlpaServeStatisticalMultiplexing2023}, respectively, as illustrated in Fig.~\ref{fig:motivation-parallelism-DPMP}. Batching refers to setting appropriate batch size to enhance GPU efficiency~\cite{choiServingHeterogeneousMachine2022}. Multi-task involves partitioning GPU resources for distinct tasks through frameworks such as Nvidia MPS~\cite{nvidiaNVIDIAMultiProcessService2025} or virtual GPU technologies.

{\bf \emph{Limitations:}}
Centralized serving systems like USHER~\cite{shubhaUSHERHolisticInterference2024}, AlpaServe~\cite{liAlpaServeStatisticalMultiplexing2023}, and gpulet~\cite{choiServingHeterogeneousMachine2022} leverage MP, batching, and multi-task (or replication degree~\cite{shubhaUSHERHolisticInterference2024}/bin-packing~\cite{choiServingHeterogeneousMachine2022}) to meet SLOs. 

\emph{Their allocation strategies are all service-level rather than request-level, because:}
1)~In datacenter, when a single GPU can sufficiently meet the frame rate SLOs of frequency-sensitive tasks, scaling the number of concurrent frames processed can maximize GPU goodput. Even though AlpaServe~\cite{liAlpaServeStatisticalMultiplexing2023} suggests that MP should be deployed for lightweight models to enhance goodput stability, it does not provide enough motivation for request-level allocation. 
2)~It is difficult to maintain a complete flow when the requests reach different GPUs in datacenter, due to the layered encapsulation of APIs and packet disorder caused by complex custom load balancing~\cite{zhangResilientDatacenterLoad2017}.

\emph{Meanwhile, these schemes lack adaptation to edge cloud environments.}
1)~Single edge node often has limited resources and cannot solely handle multiple or abrupt requests with stringent SLO, for which evenly distributing requests and utilizing idle resources at request-level are required.
2)~Edge clouds cannot easily gather information from each other, rendering centralized datacenter schedulers ineffective.
3)~The heterogeneity of hardware and services in edge increases the complexity of customized datacenter policies such as topology optimization~\cite{miaoSpecInferAcceleratingLarge2024} and graph merging~\cite{shubhaUSHERHolisticInterference2024,chenMAGISMemoryOptimization2024}, making their deployment in edge clouds unrealistic.
%% introduction pic
\begin{figure}[!t]
    \vspace{-0.15cm} % 20250704之前0
	\centering
	\abovecaptionskip=-2pt % 20250704之前-2pt
	% \belowcaptionskip=-10pt
	\includegraphics[width=1\columnwidth,trim=0cm 0.2cm 0cm 0cm, clip]{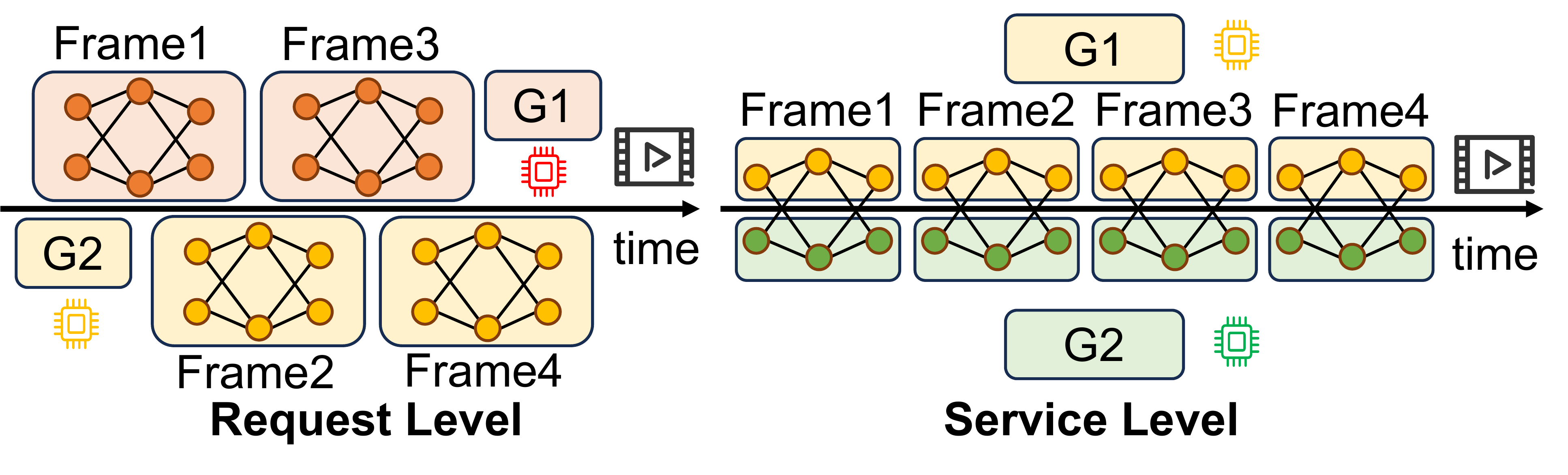}
	\caption{Request-level and service-level allocation.}\label{fig:introduction-frameDP}
	\vspace{2pt} % 20250704之前0
\end{figure}
%% motivation-parallelism
\begin{figure}[!t]
    \centering
    % \vspace{-0.2cm}
    % \abovecaptionskip=0pt
	% \belowcaptionskip=-15pt
    \begin{subfigure}[thbp]{0.49\columnwidth}
        \centering
        \includegraphics[width=1\textwidth,trim=0cm 0cm 0cm 0cm, clip]{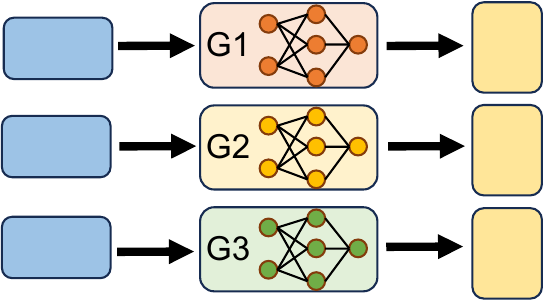}
        \captionsetup{font=footnotesize, justification=centering}
        \vspace{-0.49cm} % 20250704之前-0.5
        \caption{Data parallelism for training.}\label{fig:motivation-parallelism-DP}
        % \vspace{-0.3cm}
    \end{subfigure}
    ~~
    % \centering
    \begin{subfigure}[thbp]{0.49\columnwidth}
        \centering
		\includegraphics[width=0.96\textwidth,trim=0.1cm 0cm 0cm 0.3cm, clip]{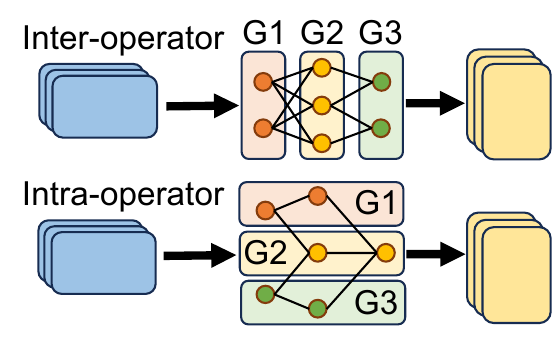}
        \captionsetup{font=footnotesize,justification=centering}
        % \vspace{-0.3cm}
        \vspace{-0.19cm} % 20250704之前-0.18
        \caption{Two kinds of model parallelism.}
        % \vspace{-0.3cm}
        \label{fig:motivation-parallelism-MP}
    \end{subfigure}
    % \vspace{-0.02cm}
    \caption{Examples of service-level AI parallelism.}\label{fig:motivation-parallelism-DPMP}
    \vspace{9pt} % 20250704之前7pt
\end{figure}
% %% motivation frame DP 后来取消了，换为introduction的图了。
% \begin{figure}[!t]
%     % \vspace{-0.4cm}
% 	\centering
% 	\abovecaptionskip=1pt
% 	% \belowcaptionskip=-10pt
% 	\includegraphics[width=1\columnwidth]{graph/motivation-frame.pdf}
% 	\caption{DP for request-level AI parallelism.}\label{fig:motivation-frameDP}
% 	\vspace{7pt}
% \end{figure}
%% motivation 6pic
\begin{figure*}[thbp]
    \centering
    \begin{subfigure}[b]{0.162\textwidth}
        \centering
        \includegraphics[width=\textwidth,trim=0.57cm 17.5cm 10.8cm 1.5cm, clip]{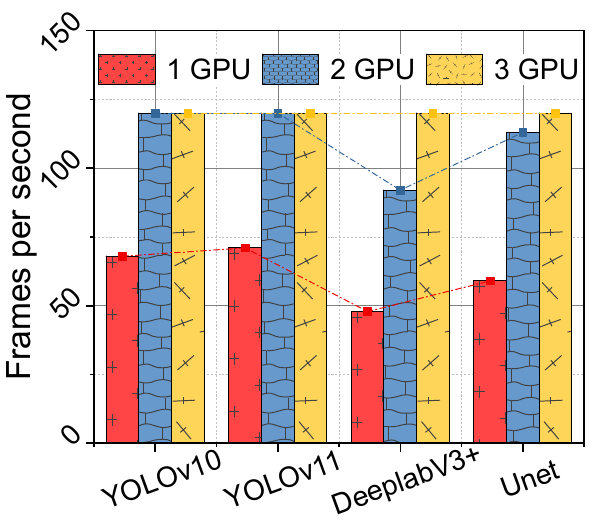}
        % \captionsetup{font={\footnotesize,bf,stretch=1.25}}
        \captionsetup{font=footnotesize, justification=centering}
        \vspace{-0.6cm}
        \caption{DP for inference.}
        \label{fig:motivation-DP}
    \end{subfigure}
    ~~
    \begin{subfigure}[b]{0.162\textwidth}
        \centering
        \includegraphics[width=\textwidth,trim=0.47cm 17.5cm 10.9cm 1.5cm, clip]{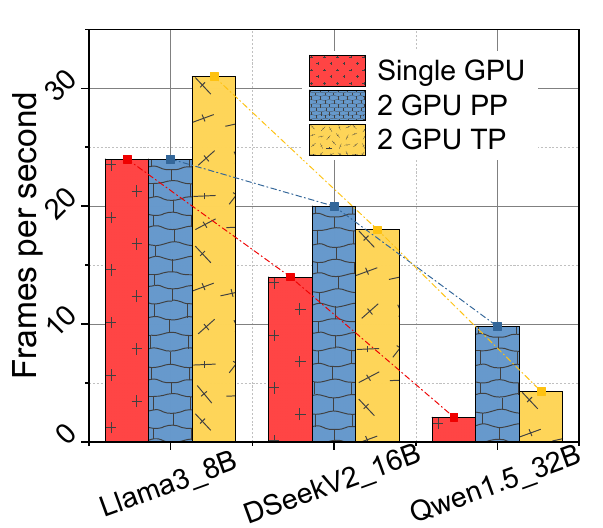}
        % \captionsetup{font={\footnotesize,bf,stretch=1.25}}
        \captionsetup{font=footnotesize, justification=centering}
        \vspace{-0.6cm}
        \caption{MP for inference.}
        \label{fig:motivation-PPTP}
    \end{subfigure}
    ~~
    \begin{subfigure}[b]{0.162\textwidth}
        \centering
        \includegraphics[width=\textwidth,trim=0.47cm 17.5cm 10.9cm 1.5cm, clip]{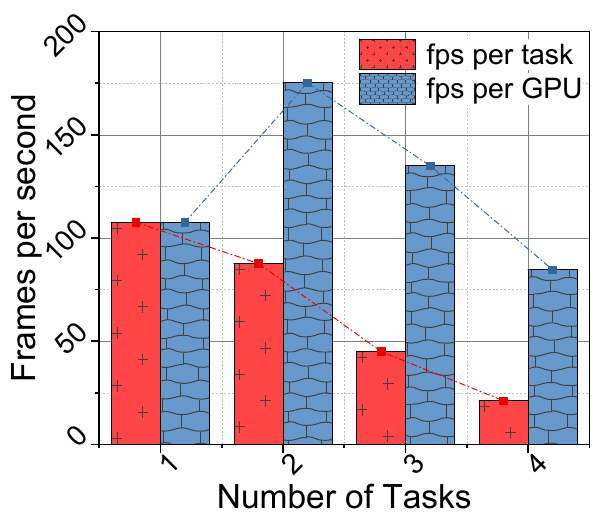}
        % \captionsetup{font={\footnotesize,bf,stretch=1.25}}
        \captionsetup{font=footnotesize, justification=centering}
        \vspace{-0.6cm}
        \caption{Multi task in one GPU.}
        \label{fig:motivation-task}
    \end{subfigure}
    ~~
    \begin{subfigure}[b]{0.162\textwidth}
        \centering
        \includegraphics[width=\textwidth,trim=0cm 0.05cm 4cm 2cm, clip]{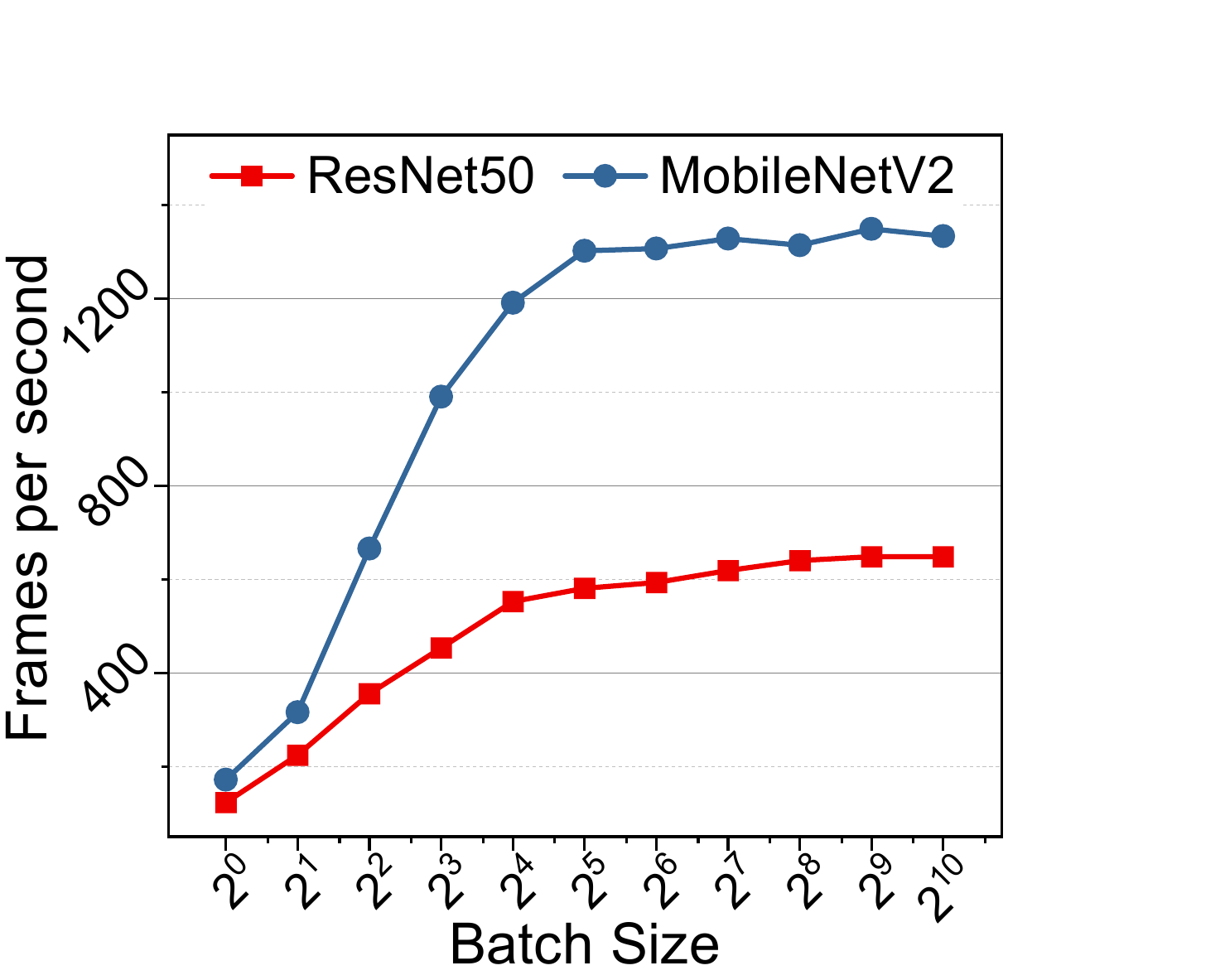}
        % \captionsetup{font={\footnotesize,bf,stretch=1.25}}
        \captionsetup{font=footnotesize, justification=centering}
        \vspace{-0.6cm}
        \caption{Batch processing.}
        \label{fig:motivation-batch}
    \end{subfigure}
    ~~
    \begin{subfigure}[b]{0.162\textwidth}
        \centering
        \includegraphics[width=\textwidth,trim=0cm 0cm 4cm 2cm, clip]{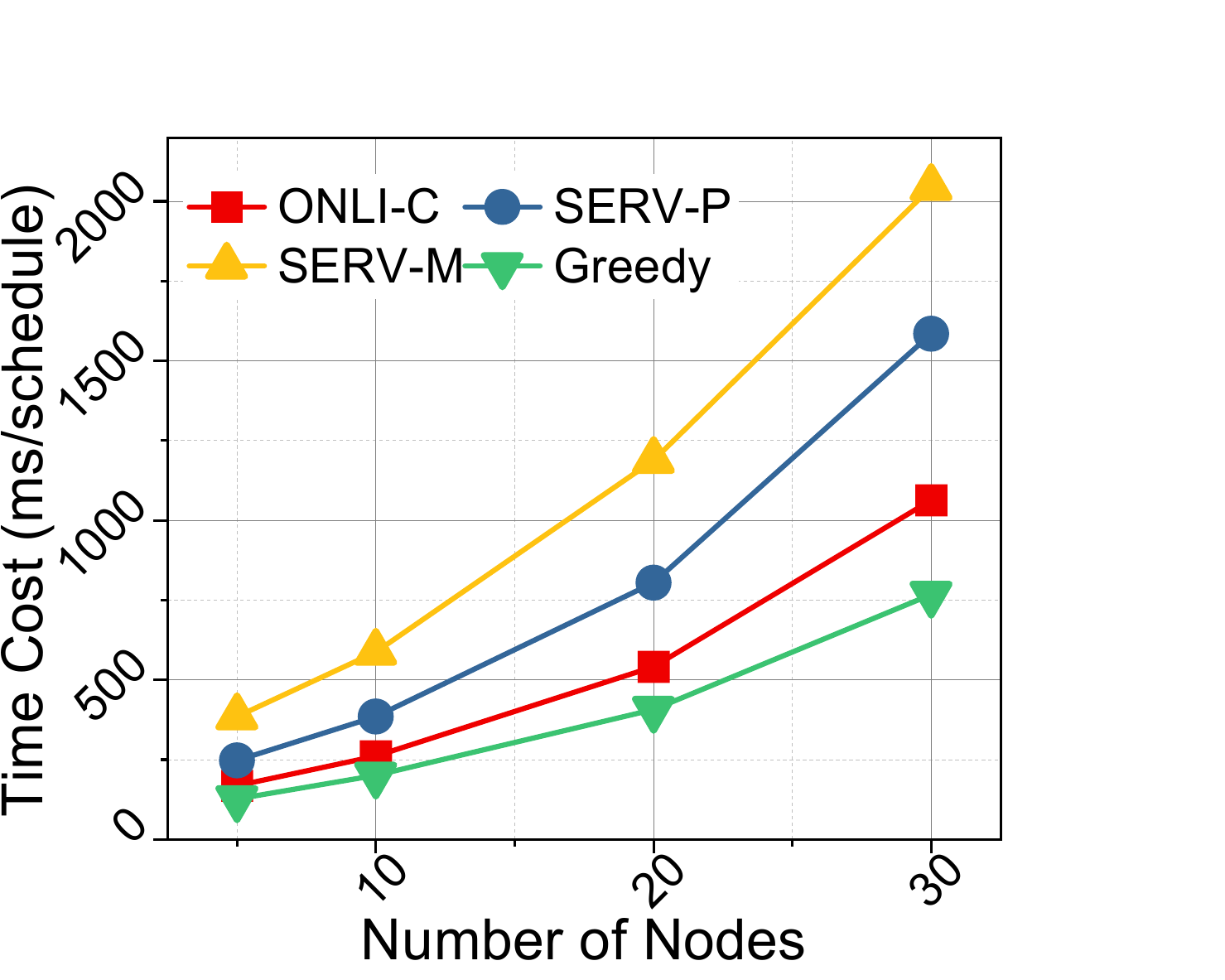}
        \captionsetup{font=footnotesize, justification=centering}
        \vspace{-0.6cm}
        \caption{Centralized schedule.}
        \label{fig:motivation-service_placement}
    \end{subfigure}
    ~~
    \begin{subfigure}[b]{0.162\textwidth}
        \centering
        \includegraphics[width=\textwidth,trim=0cm 0cm 0.35cm 0.7cm, clip]{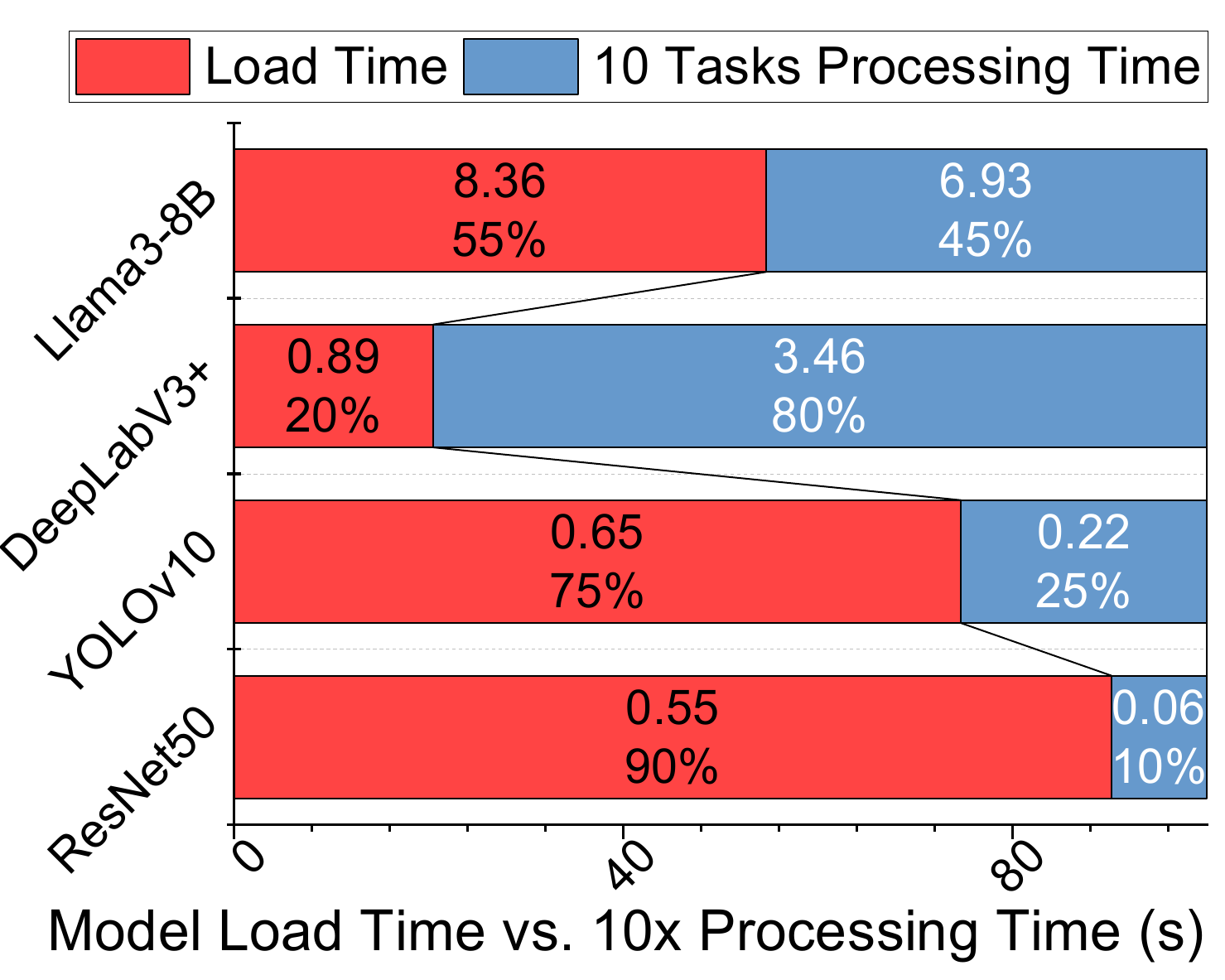}
        % \captionsetup{font={\footnotesize,bf,stretch=1.25}}
        \captionsetup{font=footnotesize, justification=centering}
        \vspace{-0.6cm}
        \caption{Load models to GPU.}
        \label{fig:motivation-model_load}
    \end{subfigure}
    % \vspace{-0.3cm}
    \vspace{-0.01cm}
    \caption{\EPARAbf is committed to serving real-time categorized AI parallel inference in edge clouds.}\label{fig:motivation}
    \vspace{-0.1cm}
\end{figure*}

\noindent{\bf Universal edge system:}
In edge computing systems, the edge cloud is generally composed of edge servers and edge devices~\cite{brownArchitectureEdgeNetworking2024}. Typically, edge devices submit requests to edge servers or engage in collaborative device-server processing. Edge servers are often physically distant or without high-bandwidth links, and exhibit resource limitation or hardware heterogeneity~\cite{brownArchitectureEdgeNetworking2024,wenSurveyIntegratedSensing2024,devotoAdaptiveSemanticToken2025}.

{\bf \emph{Limitations:}}
InterEdge~\cite{brownArchitectureEdgeNetworking2024}, SERV-P~\cite{farhadiServicePlacementRequest2021}, and TORS~\cite{liMultiHopTaskOffloading2025} focus on serving universal tasks. 
1)~These approaches are designed for non-AI tasks, and do not employ task-GPU allocation tailored for AI inference. 
2)~Centralized edge schemes~\cite{farhadiServicePlacementRequest2021,liMultiHopTaskOffloading2025,liOnlineCachingNetworks2021} exhibit significant scheduling complexity and latency, which poses challenges for real-time solving, as our experiment results shown in Fig.~\ref{fig:motivation-service_placement}.

\noindent{\bf Edge $+$ AI:}
Current edge AI inference systems typically adopt one of three paradigms~\cite{wenSurveyIntegratedSensing2024}: leveraging inter-device communication to achieve MP inference, enabling device-to-cloud communication for cloud-assisted inference, or focusing on specific services. They aim to facilitate the execution of computationally intensive services within resource-limited edge environments.

{\bf \emph{Limitations:}}
Galaxy~\cite{yeGalaxyResourceEfficientCollaborative2024a}, Detransformer~\cite{weiCommunicationEfficientModelParallelism2024}, and EdgeShar\-ed~\cite{zhangEdgeShardEfficientLLM2024} primarily focus on MP inference where edge devices belong to a unified and centralized edge cloud. CLIO~\cite{huangCLIOEnablingAutomatic2020} focuses on PP inference in single-device and single-cloud systems. REMIX~\cite{jiangFlexibleHighresolutionObject2021}, NeuLens~\cite{houNeuLensSpatialbasedDynamic2022}, and AccuMO~\cite{kongAccuMOAccuracyCentricMultitask2023} focus on specific tasks in edge. 
However, they still exhibit limitations. 1)~They insufficiently consider edge AI systems with multiple servers and devices. 2)~They incompletely implement the service-level strategies of datacenters, lacking consideration for batching~\cite{zhangEdgeShardEfficientLLM2024} or multi-task~\cite{yeGalaxyResourceEfficientCollaborative2024a,weiCommunicationEfficientModelParallelism2024}. 3)~They still regard a frequency task as multiple latency tasks and focus on the processing of each single service like in datacenters, while ignoring the allocation at request-level, so they cannot truly adapt to abrupt requests and scattered resources in edge.

% \subsection{Desirable AI Inference in Edge Clouds: Frequency or Latency? One GPU or More? }\label{section:motivation-desirable}
\subsection{What the AI Inference Tasks Really Need in Edge Clouds?}\label{section:motivation-desirable}
% \subsection{Desirable Edge AI Inference Properties}
\noindent In order to achieve better task-resource allocation, we list three key requirements that AI inference tasks have in edge clouds:

\noindent{\bf Smoother or Quicker? Request-level allocation customized for edge:}
Certain tasks cannot meet frequency SLOs with a single resource-limited edge server~\cite{kongAccuMOAccuracyCentricMultitask2023}, and the requests arrive with abruptness or non-uniformity in edge clouds. They all make SLOs more difficult to fulfill in edge clouds than in datacenters. 
Fortunately, edge is closer to users and can obtain more detailed request patterns. This gives us the opportunity to distinguish the tasks' real needs for frequency and latency, and apply request-level allocation.
As shown in Fig.~\ref{fig:introduction-frameDP} and Fig.~\ref{fig:motivation-DP}, when sending video frames to different GPUs alternately, a simple 2-GPU DP can increase the frame rate from 49 fps to 97 fps. It is workable to fully perform customized request-level allocation for AI inference in edge clouds.

% 也许还是应该放到前面，虽然不够创新，就放后面把，这里说即使考虑了request-level，还是需要service-level来处理单个帧。
\noindent{\bf One GPU or more? Service-level datacenter-like allocation:}
Although we consider request-level allocation, one frame or a single latency-sensitive request still needs to be satisfied. As shown in Fig.~\ref{fig:motivation-PPTP}-\ref{fig:motivation-batch}, our experiments demonstrate that optimized MP strategies can enhance specific task performance by up to 4.8$\times$ in fps, and superior multi-task and batching strategies can increase GPU throughput by 1.7$\times$ and 6.9$\times$ respectively. These service-level allocation strategies adopted by datacenter should still be considered.

\noindent{\bf Pragmatic end-to-end scheduling:}
% \noindent{\bf High-performance end-to-end scheduling:}
In order to achieve the above allocation, an advanced edge cloud scheduling system is essentially needed. This system should enable end-to-end real-time AI serving for edge servers and devices, and involve service placement and request handling to actually meet SLOs~\cite{farhadiServicePlacementRequest2021,liMultiHopTaskOffloading2025,chenDeepLearningEdge2019}.

% design-overview
\begin{figure*}[thbp]
	\centering
	\addtolength{\abovecaptionskip}{-10pt}
	\addtolength{\belowcaptionskip}{-5pt}
	\includegraphics[width=0.95\textwidth]{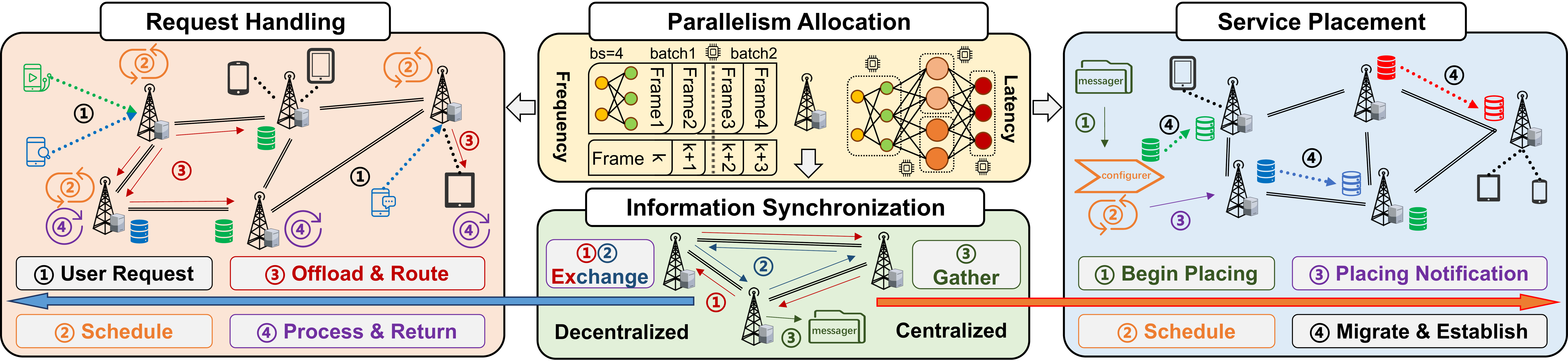}
	\caption{\EPARAbf system overview.}\label{fig:design-overview}
	% \vspace{-15pt}
\end{figure*}

\subsection{The Design Space and Challenges}\label{section:motivation-insight}
\noindent{\bf MP+multi task+batching is a good start:}
As illustrated in Fig.~\ref{fig:motivation-PPTP}-\ref{fig:motivation-batch} our findings demonstrate that adjustments in MP, multi-task, and batching all play a pivotal role in enhancing edge cloud system goodput. Based on these service-level observations, we propose that large-scale computational tasks should be distributed across multiple GPUs through systematic parallelism strategies~\cite{liAlpaServeStatisticalMultiplexing2023}, while all AI models/slices should adopt batching and be executed via multi-task on each single GPU~\cite{shubhaUSHERHolisticInterference2024}.

\noindent{\bf DP+multi frame for frequency-sensitive tasks:}
As illustrated in Fig.~\ref{fig:motivation-parallelism-DP}, we evaluate the effectiveness of DP using a 120-fps video input. By distributing frames in a round-robin manner across $k$ GPUs, we achieve a nearly $k$-fold improvement in frame rate. For frequency-sensitive applications with continuous and periodic request arrivals, DP enables efficient and scalable inference to meet frame rate SLOs. Furthermore, we define multi-frame as the scenario where an identical number of frames from homogeneous tasks are grouped within a batch. When the edge cloud serves different tasks at the same time, multi-frame can increase the batch size and better fill GPU resources, which just copes with the request abruptness in edge. The challenge for us lies in strategically combining request-level and service-level approaches to maximize system goodput.

\noindent{\bf Co-design of request handling and service placement:}
To achieve the aforementioned allocation while realizing end-to-end edge cloud AI inference, applying coordinated request handling and service placement is a must. As illustrated in Fig.~\ref{fig:motivation-service_placement}, our implementation of several centralized collaborative placement and handling approaches in edge~\cite{liOnlineCachingNetworks2021,farhadiServicePlacementRequest2021} reveals that scheduling latency exceeds 100 ms when server counts surpass 10 nodes, and escalates beyond 750 ms at 30+ nodes, which is a critical violation of SLOs in most service scenarios~\cite{shubhaUSHERHolisticInterference2024}. Fig.~\ref{fig:motivation-model_load} further demonstrates that AI model placement time constitutes at least 2.5$\times$ the single-task processing duration, necessitating proactive service pre-placement.
Therefore, designing an effective and flexible approach to achieve end-to-end AI inference in edge clouds is still a significant challenge.

\noindent{\bf Putting it all together:}
We propose three core components: task allocation, request handling, and service placement. However, the mere combination of these components fails to achieve end-to-end functionality. This limitation arises because edge servers lack real-time visibility~\cite{autosarAUTOSARAutomotiveOpen2025} into the operational states of neighboring edge servers--even basic service placement scenarios remain unobservable. Consequently, we face the critical challenge of information synchronization, which requires provisioning time-sensitive scheduling data at varying granularities to support these components.

% !TeX root = ../main.tex
% \section{DESIGN}
\section{Design}
\noindent{\bf \EPARAbf Overview}:
\EPARA is an edge cloud computing system consisting of multiple edge servers, each populated by numerous edge devices, collectively providing heterogeneous services, as illustrated in Fig.~\ref{fig:design-overview}. 
The primary functions of \EPARA include task-categorized allocation (\cref{section:design-parallelism}), request handling (\cref{section:design-handling}), and service placement (\cref{section:design-placement}). 
The main resources considered in \EPARA include GPU computational resource and GPU VRAM (video RAM) resource. \EPARA reduces the communication resource pressure faced by edge clouds through resource-saving information synchronization and temporal granularity design, and maintain practicality and flexibility (\cref{section:design-decentre}).

\subsection{Task-Categorized Parallelism Allocator}\label{section:design-parallelism}
% The edge AI services is highly diverse, making the allocation of parallel strategies for different tasks particularly crucial.
\noindent In order to achieve higher goodput when serving edge AI inference, finer-grained task-resource allocation is needed.
\EPARA employs different allocation strategies at request/service level for different types of tasks, as illustrated in Fig.~\ref{fig:design-task categorized}.

\noindent{\bf Allocation operators:} 
\EPARA adopts the following five operators.
\begin{itemize}[leftmargin=*]
	\item {\it Batching (BS):} \EPARA employs batching to group multiple tasks of the same service into a single batch for processing, improving GPU utilization and goodput at service-level.
	\item {\it Multi-task (MT):} \EPARA supports multi-task by allowing multiple tasks of different services to be processed simultaneously on the same GPU, further enhancing resource utilization at service-level.
	\item {\it Model Parallelism (MP):} For large models, \EPARA adopts model parallelism to distribute the model across multiple GPUs, enabling efficient processing at service-level (Fig.~\ref{fig:motivation-parallelism-DPMP}).
	\item {\it Multi-frame (MF):} For multi-frame, \EPARA batches an identical number of frames from homogeneous tasks into a unified batch, thereby improving GPU goodput at request-level.
	\item {\it Data Parallelism (DP):} \EPARA employs data parallelism to process frames of a task in different GPUs in a round robin manner, which better meets fps SLOs at request-level. (Fig.~\ref{fig:introduction-frameDP})
	\vspace{2pt}
\end{itemize}

\noindent{\bf Task categories:} 
\EPARA categorizes tasks based on their sensitivity to frequency/latency and their resource consumption (requiring single GPU or multiple GPUs, marked as <1 GPU and >1 GPU). 
We first clarify how latency/frequency and <1/>1 GPU are used as two criteria to categorize tasks. 
% These two criteria divide tasks into four distinct categories. 
1) For frequency-sensitive tasks, such as video AI and LLM human-computer interaction (HCI) systems, requests are continuous and frequent. While meeting latency requirements is a baseline expectation, frame rate becomes the bottleneck for meeting their SLOs~\cite{kongAccuMOAccuracyCentricMultitask2023}. 
In contrast, latency-sensitive tasks, such as LLM dialogue and image AI services, handle non-continuous requests where latency is the sole SLO.
2) For tasks that do not need multi-GPU collaboration, packing allocators like MT~\cite{choiServingHeterogeneousMachine2022,shubhaUSHERHolisticInterference2024,shenNexusGPUCluster2019} and BS suffice. 
In contrast, tasks demanding multi-GPU collaboration require parallelism allocators such as MP. 

\noindent{\bf Performing operators to categories:} 
\EPARA allocator assigns tailored allocation strategies to four distinct categories.

First, for <1 GPU latency-sensitive tasks, such as single-image classification/detection or miniaturized LLM-based text dialogues, \EPARA utilizes BS and MT co-location on a single GPU as service-level strategies. This approach improves GPU utilization and service capacity by concurrently processing heterogeneous tasks.

Second, for >1 GPU latency-sensitive tasks, such as 70B LLM web dialogues and high-resolution semantic segmentation, \EPARA additionally implements TP and PP as two widely used MP operators at service-level. By modeling AI models as weighted directed acyclic graphs (DAGs), TP accelerates parallelizable segments to reduce latency, while PP mitigates VRAM bottlenecks~\cite{liAlpaServeStatisticalMultiplexing2023,yeGalaxyResourceEfficientCollaborative2024a}.

Third, for <1 GPU frequency-sensitive tasks, such as image classification in video streams, \EPARA employs MF, MT, and BS operators. For MF, when a single GPU handles multiple concurrent video streams, \EPARA batches an identical number of frames from one task (or some homogeneous tasks) into a unified batch, thereby increasing batch size and improving overall goodput.
% Benefiting from this, an Nvidia Tesla P100 can simultaneously process three ResNet50~\cite{heDeepResidualLearning2016}-based video classification streams with 60 fps in \EPARA.

Fourth, for >1 GPU frequency-sensitive tasks, such as semantic segmentation in video analytics or LLM-driven HCI~\cite{kannanSMARTLLMSmartMultiAgent2024,karliAlchemistLLMAidedEndUser2024}, \EPARA additionally adopts DP and MP allocators. Specifically, the full model is replicated across multiple DP GPU groups, with each group handling the task through MP. DP enables alternating input (e.g., frames or interaction requests) assignments across GPU groups to enhance service frequency at request-level, while MP addresses resource constraints and accelerates inference at service-level.

Finally, for applications efficiently executable on CPUs, such as MobileNetV2~\cite{sandlerMobileNetV2InvertedResiduals2018}, inference typically occurs locally on edge devices. Deploying frameworks such as CLIO~\cite{huangCLIOEnablingAutomatic2020}, which coordinates device CPU and server GPU resources, often resembles <1 GPU inference scenarios. This device-server collaborative pattern inherently aligns with the allocation discussed above.

\subsection{Distributed Request Handler}\label{section:design-handling}
\noindent To apply allocation strategies, customized scheduling of user requests is necessary, meaning that \EPARA must allocate computational power in real-time for user requests. Based on periodically synchronized decentralized information, \EPARA employs a simple yet reliable greedy strategy for scheduling user requests, and strives to ensure end-to-end simplicity between users and edge servers, as well as between edge servers themselves~\cite{brownArchitectureEdgeNetworking2024}.
Specifically, as shown in Fig.~\ref{fig:design-request handling}, when a user sends a request $r$ to edge server $n$, $n$ immediately schedules $r$ through the following steps.
\begin{itemize}[leftmargin=*]
	\item If the request $r$ has timed out (SLO violation), processing is halted, and a timeout is returned.
    \item Edge server $n$ checks whether locally placed services are sufficient to handle request $r$; if so, it prioritizes resolving request $r$ locally.
	\item If it is determined that the local resources are insufficient, $n$ will check the task offload count. If the count has reached a certain number, it returns offloading exceed. If not, $n$ analyzes local information to find the optimal offloading strategy, offloads the request $r$, and increments the offload count of $r$.
	\item If based on $n$'s local information, $n$ cannot process request~$r$ in any way, it returns resource insufficiency.
\end{itemize}
\noindent{\bf Offloading strategy:}
When edge server $ n $ determines that request $ r $ requires offloading, it identifies all feasible destination servers $ \dot{n} $ capable of handling $ r $ by calculating an idle goodput $ \tilde{p}_{\dot{n}}^r(\ddot{t}) $, namely, the processing speed of additional $ r $ that $ \dot{n} $ can accommodate. Specifically, edge server $ n $ offloads the request to $ \dot{n} $ with a probability of 
\begin{equation}
\tilde{p}_{\dot{n}}^r(\ddot{t}_n) / \sum\nolimits_{m \in N} \tilde{p}_{m}^r(\ddot{t}_m)\text{, where } \tilde{p}_{n}^r(\ddot{t}_n) = \hat{p}_{n}^r(t_n)-{p}_{n}^r(\ddot{t}_n).
\end{equation}
$ \hat{p} $ and $p$ respectively represents the theoretical and actual goodput. $t_n$ denotes the state information sync delay of server $n$ (sync method is in~\cref{section:design-decentre}), and $ \ddot{t}_n $ denotes the time interval $[-2{t_n}, -{t_n}]$. 
The destination server $ \dot{n} $ is excluded from feasible candidates if the expected computation time of its queued requests exceeds $ t_n + \text{SLO}_r $, preventing offloading with latency SLO violation.
% ensuring compliance with latency SLOs
% preventing offloading with SLO violation

\noindent{\bf Offloading paths:}
When offloaded between edge servers, the request data packets record the path of edge servers they traverse. This ensures that no loop is formed among several edge servers during offloading, preventing waste of communication resources.
% \noindent{\bf Offloading paths:}
% When offloaded between edge servers, the data packets of user requests record the path of edge servers they traverse. This ensures that no loop is formed among several edge servers during user request offloading, preventing mutual and duplicate offloading paths of a request.

\noindent{\bf Request cross-server parallelism:} 
There may be DP or MP scenarios across servers. While these circumstances are not preferred during service placement (\cref{section:design-placement}), complete avoidance is impractical and wasteful. For edge servers containing such nodes, \EPARA considers the task as locally solving, and with a lower priority than solving solely on the local edge server.

\noindent{\bf Edge device participation:} 
In \EPARA, edge devices are allowed to register GPU computational power, which is centrally scheduled by the local edge server. 
For edge devices providing GPU computing capacity (e.g., Jetson Nano~\cite{nvidiaNVIDIAJetsonNano2025}), the \EPARA edge server dynamically allocates dedicated service model weights that can be independently processed by each registered computing-capable edge device, while forwarding service requests to them as needed.
\EPARA considers offloading requests to edge devices within the same edge cloud for inference as locally solving, and with a lower priority than solving through cross-server parallelism.

% allocator
\begin{figure}[!t]
	\centering
	\vspace{-2pt}
	\addtolength{\abovecaptionskip}{-13pt}
	\addtolength{\belowcaptionskip}{5pt}
	\includegraphics[width=0.95\columnwidth]{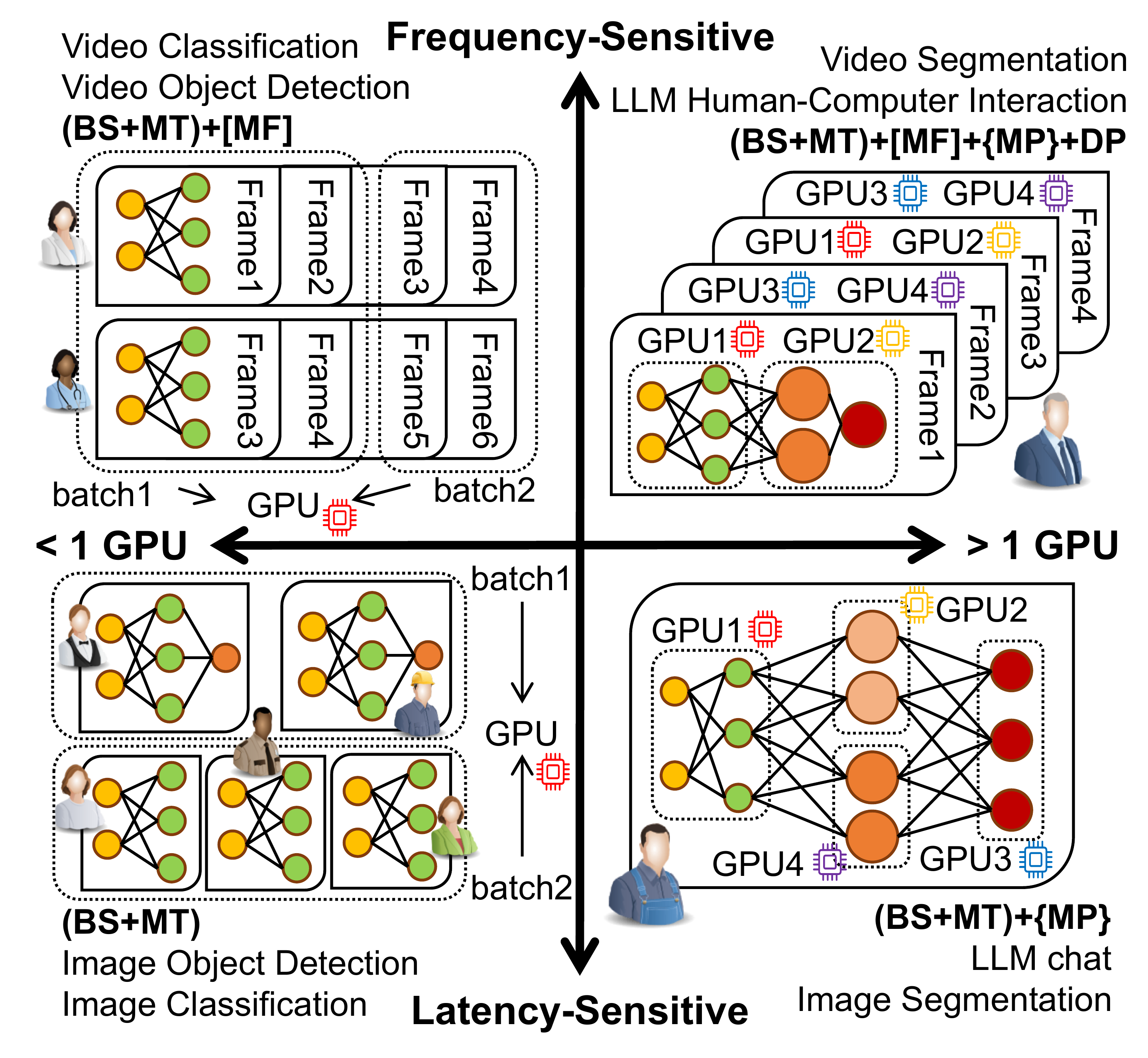}
	\vspace{-1pt}
	\caption{\EPARAbf task-categorized allocation.}\label{fig:design-task categorized}
	\vspace{4pt}
\end{figure}

\subsection{State-aware Submodular Service Placement}\label{section:design-placement}
% \subsection{State-aware Data-Aggregative Placement}

\noindent \EPARA considers service placement as a critical step in serving edge AI inference due to two main reasons. First, edge typically provides a variety of services, but loading AI model into a GPU is time-consuming especially for LLMs~\cite{ashkboosQuaRotOutlierFree4Bit2024,khareSuperServeFineGrainedInference2025}, which far exceeds the time needed for processing a single task (550ms/60ms for ResNet50~\cite{heDeepResidualLearning2016}), as shown in Fig.~\ref{fig:motivation-model_load}. Second, periodic service placement effectively mitigates computational overhead in request handling. 
A confirmed placement can reduce the action space of request handling decisions, while relaxing the real-time requirements of state monitoring. 
Otherwise we would have to use heuristic algorithms with arbitrarily bad approximation ratio for real-time request handling~\cite{liAlpaServeStatisticalMultiplexing2023}.

During the service placement process, we obtain centralized information of the edge network (detailed in \cref{section:design-decentre}). This includes the resources of edge servers, the request arrivals of a period $T$, and processing situations. For edge service placements, existing works formulate the problem as an integer program, with some employing daunting machine learning methods like DRL\cite{liMultiHopTaskOffloading2025} for resolution, while others use approximation algorithms\cite{farhadiServicePlacementRequest2021} for solutions. This paper employs a submodular function definition approach to solve the problem with a guaranteed lower bound, ensuring it can obtain a relatively optimal solution with polynomial time.

% request handling
\begin{figure}[!t]
	\centering
	\addtolength{\abovecaptionskip}{-8pt}
	\addtolength{\belowcaptionskip}{5pt}
	\includegraphics[width=1\columnwidth,trim=0cm 0.2cm 0cm 0cm,clip]{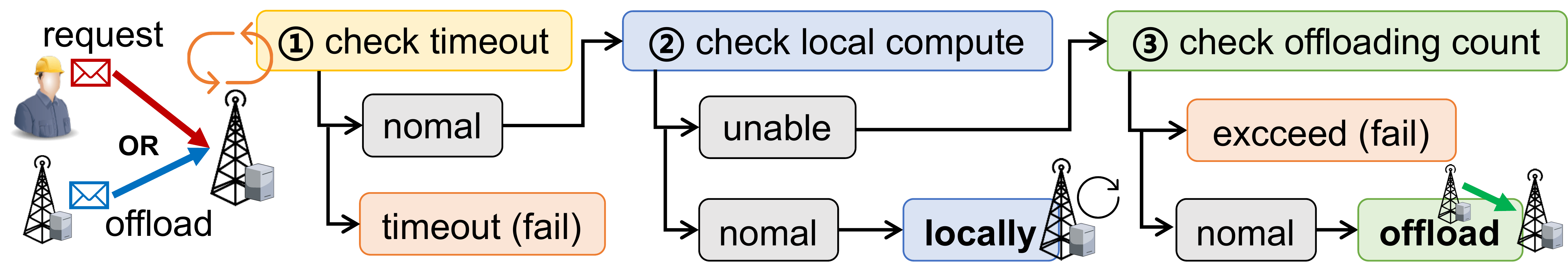}
	\caption{\EPARAbf request handling.}\label{fig:design-request handling}
	\vspace{2pt}
\end{figure}
We denote services in \EPARA as $l \in L$ and servers in \EPARA as $n \in N$. We define user requests as $r_{tln} \in R$, and satisfied user requests $r_{tln}$ by server $m\in N$ as $y_{tlnm}$. Latency-sensitive tasks are seen successfully completed if they meet the latency constraints SLO. While frequency-sensitive tasks need to satisfy the condition which is determined by the proportion of requests fulfilling the SLO-specified frequency threshold. For instance, consider a frequency-sensitive request containing 120 frames with an SLO requirement of 60 fps. If \EPARA achieves only 30 fps due to request saturation, the number of satisfied requests is calculated as $120\times 30/60=60$, reflecting the fraction of frames meeting the SLO target.
\begin{equation}\label{equation:optimal}
	{\max{\varphi(\Theta^T)=\sum\nolimits_{t \in T}\,}{\sum\nolimits_{l \in L}\,}{\sum\nolimits_{n \in N}\,}{\sum\nolimits_{m \in N}\,}y_{tlnm}}
\end{equation}

We define an AI inference service $l \in L$ placed in server $n \in N$ as a placement $x_{ln} \in X$. Our goal is to achieve relatively optimal placement list $\Theta^T\subseteq X^*$, which is, under a definite request handling strategy in Fig.~\ref{fig:design-request handling}, to handle the maximum number of user requests during a time period set $T$.

\vspace{-0.8em}
%% alg1
\begin{algorithm}[!h]
	% \SetAlgoLined
	% \DontPrintSemicolon 
	{\fontsize{8.95}{9}\selectfont
		\SetKwFunction{full}{FullModelPlacement}%定义一个函数
		\SetKwProg{Fn}{def}{\string :}{}% %定义python样式的函数格式
		\KwIn{parameters of state-aware submodular placement}
		\KwOut{submodular placement list $\Theta^T$}
	
		\text{Input} $R^T$, $X$, $\mathscr{X}$; $\Theta_0 \leftarrow \emptyset$\;
		S1:  $\Theta_1^T\leftarrow$\full{$R^T$, $\mathscr{X}$, $\Theta_0$}\; 
		S2:  $\Theta_2^T\leftarrow$\full{$R^T$, $X$, $\Theta_1^T$}\; 
	%   $\text{sample}_\epsilon(X) \leftarrow \{x_{l\epsilon}\mid x \in X\}$\;
		$\dot{X}\leftarrow \{x_{l\epsilon}\mid l \in L \text{, let } \epsilon \notin N \text{ be hypothetical}\}$\;
	%   $\text{Aggregate all GPUs in }X\text{ as a single logical server }\epsilon$\;
		$\text{All placements of }\dot{X}\text{ are considered on logical server }\epsilon$\;
		S3:  $\Theta_3^T\leftarrow$\full{$R^T$, $\dot{X}$, $\Theta_2^T$}\;
		
		\text{Output} $\Theta^T\leftarrow \Theta_3^T$ 
	}
	\caption{State-aware service placement~(SSSP)}\label{algorithm:greedy}
\end{algorithm}
\vspace{-1em}

\EPARA 's state-aware submodular service placement is shown in Algorithm~\ref{algorithm:greedy}. During the placement process, we have three placement steps. First~(\emph{S1}), we prioritize partial placement configurations in list $\mathscr{X}$, which applies to scenarios where service providers lease edge GPU resources under time-based pricing models. It can also be used to prioritize the allocation of services requiring intensive inter-GPU parallelism (e.g., DeepSeek-R1~\cite{deepseek-aiDeepSeekV2StrongEconomical2024}) to prevent resource preemption by smaller-scale services. Second~(\emph{S2}), \EPARA places multi-GPU parallel services in the same edge server to reduce communication overhead from DP and MP. Third~(\emph{S3}), we aggregate all GPUs to form a hypothetical server $\epsilon$, which is used to process the placement requiring multi-GPU parallelism across edge servers.

% We now turn to an examination of more granular placement methodologies. 
Specifically, once deploying service $l$ on server $n$, we locally maintain the computation time required to process request $l$ as $c_{ln}$, thus we can determine the completion status of all tasks within a given period under a known placement strategy based on the request handling strategy.
% , which is also our simulation experiment strategy in \cref{section:evaluation-simulation}. 
Based on the above methods, we always select the currently optimal next full-model placement to obtain a final output placement list, as the pseudo-code shown in Algorithm~\ref{algorithm:fullmodel}. 

%% alg2
\vspace{-1.0em}
\begin{algorithm}[!h]
	% {
	{\fontsize{8.95}{9}\selectfont
		\SetKwFunction{full}{FullModelPlacement}%定义一个函数
		\SetKwProg{Fn}{def}{\string :}{}% %定义python样式的函数格式
		\KwIn{request handling strategy (\cref{section:design-handling})}
		% \KwOut{Submodular placement set $\Theta^T$}
		% output $\Theta^T$ $\leftarrow$ \full{T}

		\Fn{\full{$R$: list, X: set/list, $\Theta_0$: list}}{ %Python的类型注释写法
			$\Tilde{\Theta}_0 \leftarrow \Theta_0$, $k \leftarrow 0$\;
			\Do{$\text{if S1: }\varphi(\Tilde{\Theta}_{k})\geq \varphi(\Tilde{\Theta}_{k-1})\text{, else: }\varphi(\Tilde{\Theta}_{k})> \varphi(\Tilde{\Theta}_{k-1})$}{
			$k \leftarrow k+1$, $\Tilde{\theta}_k \leftarrow \emptyset $\;
			% \For{$\delta \in X \setminus \Tilde{\Theta}_{k-1} $}{
			% \For{$\text{if    }X\text{ is set: }\delta \in X \text{, else }\delta \in X \setminus \Tilde{\Theta}_{k-1}$}{
			\For{$\text{if \ {\rm typeof}}(X)\text{ is set: }\delta \in X \text{, else: }\delta \in X \setminus \Tilde{\Theta}_{k-1}$}{
				Solve $\varphi(\Tilde{\Theta}_{k-1}+\delta)$ in~\eqref{equation:optimal} with $R$, using handling strategy~\cref{section:design-handling}\;

				\If{$\varphi(\Tilde{\Theta}_{k-1}+\delta) > \varphi(\Tilde{\Theta}_{k-1}+\iota), \forall \iota \in \Tilde{\theta}_{k}$}{
					$\Tilde{\theta}_k \leftarrow \{\delta\}$\;
				}
				\ElseIf{$\varphi(\Tilde{\Theta}_{k-1}+\delta) = \varphi(\Tilde{\Theta}_{k-1}+\iota), \forall \iota \in \Tilde{\theta}_{k}$}{
						$\Tilde{\theta}_k \leftarrow \Tilde{\theta}_k \cup \{\delta\}$\;
				}    

			}
			\If{$\Tilde{\theta}_k \neq \emptyset$}{
			%   $\Tilde{\Theta}_{k} \leftarrow \Tilde{\Theta}_{k-1} \cup \{ \iota \mid \text{Let } \iota \in \Tilde{\theta}_{k} \text{ be arbitrary}\}$\;
				$\Tilde{\Theta}_{k} \leftarrow \Tilde{\Theta}_{k-1} + \iota \text{, let } \iota \in \Tilde{\theta}_{k} \text{ be arbitrary}$\;
			\lElse{break}
				}
			} 
			\textbf{return} $\Theta \leftarrow \Tilde{\Theta}_{k-1}$\;
		}
	}
	\caption{Submodular placement for full models (SPF)}\label{algorithm:fullmodel}
\end{algorithm}
\vspace{-1.2em} %如果伪代码出现在t位置，这个缩进会导致它后面的标题异常缩进

We define $a_l$ as the MPS computational resource consumption and $b_l$ as the VRAM consumption of a service.
When each server has at most one GPU that cannot fully utilize computational power or VRAM, our algorithm yields a $1/(1+P)$ approximation, where
\vspace{-0.15em}\begin{equation}\label{equation:approximation}
	P = \left\lceil \frac{\max a_{l}}{\min_{a_{l} > 0} a_{l}} \right\rceil + \left\lceil \frac{\max b_l}{\min_{ b_l > 0} b_l} \right\rceil .
\end{equation}

For the proof of the submodular function and the approximation lower bound, please refer to appendix~\ref{section:appendix-submodular}. According to our experiments, \EPARA can get far more AI serving benefits than this lower bound in most application scenarios.

\EPARA supports both offline and online placement strategies. When the number of servers is limited, we recommend adopting offline placement. Specifically, this entails deferring the transmission and loading of AI model until after the complete placement scheduling.
Under large-scale server deployments, we need to allocate computational resources and VRAM online across multiple GPUs when placing each service on a specific server, which falls into the category of online allocation of CPU resources and memory for virtual machines (VMs). \EPARA simply employs an optimized greedy approach~\cite{openstackOpenStackDocsScheduling2025} to deal with this online situation.

% \subsection{Information Synchronization and Temporal Granularity}\label{section:design-decentre}
\subsection{Synchronization, Granularity and Flexibility}\label{section:design-decentre}
\noindent{\bf Information synchronization:}
To provide decision-critical metadata for the \emph{request handler} (\cref{section:design-handling}) and \emph{placement scheduler} (\cref{section:design-placement}), \EPARA implements an efficient yet lightweight bidirectional information synchronization mechanism~\cite{autosarAUTOSARAutomotiveOpen2025,wanBiCCBilateralCongestion2024,leitaoHyParViewMembershipProtocol2007,wangIntelligentLowOverhead2023}. Specifically, all servers form a ring topology where each edge server periodically transmits its local request arrival/processing status and cached system-wide state information to two adjacent peers through a ring-reduce-like~\cite{gibiansky2017bringing} protocol. Concurrently, each server synchronously receives updated status reports from both neighboring nodes at equivalent intervals. This topology-aware synchronization paradigm ensures minimal bandwidth consumption while maintaining sufficiently up-to-date system state for distributed coordination.

\noindent{\bf Temporal granularity:}
Fig.~\ref{fig:design-overview} shows that \EPARA is a pragmatic edge cloud system featuring three levels of temporal granularity. 
\textit{Request handling} is fine-grained (processed on-demand) and decentralized. 
When an edge server receives an offloading or user request, it immediately handles the request, processes locally or offloads it. 
% The edge server immediately handles the request and processes locally or offloads it to other edge nodes, when receiving a request from a user or another edge server.
\textit{Information synchronization} is medium-grained and semi-centra\-lized. 
At regular intervals, edge servers synchronize \textit{request handling} and \textit{service placement} data, and ultimately relay it to the \textit{messager}. 
\textit{Service placement} is coarse-grained and centralized. 
At set intervals, the \textit{configurer} executes a submodular algorithm based on \textit{messager}'s information to determine placement.
This enables \EPARA to adapt to the changing edge environment without affecting other granular events, while maintaining simplicity and practicality.

\noindent{\bf Flexibility of \EPARAbf components:} Except for task-categorized allocator, other modules are detachable from each other in \EPARA. Specifically, replacing \EPARA handler with a KubeEdge based centralized request handler~\cite{kubeedge} will not affect the approximation of \EPARA placement. Also, with parameter-server~\cite{liCommunicationEfficientDistributed2014} based sync method, the handler and placement of \EPARA remain valid.

\noindent{\bf Why this works:} \EPARA 's allocator solves service and GPU planning at service-level and addresses computing power shortages of edge nodes and abrupt requests of the edge environment at request-level. The handler ensures higher effectiveness than simple random offloading while not requiring precise global information. Placement can be solved in polynomial time with lower bound guarantee, while remaining in effect for categorized tasks and different handler strategies. These three main designs effectively address the challenges currently faced by edge cloud AI inference.
% !TeX root = ../main.tex

% \section{IMPLEMENTATION}
\section{Implementation}
\noindent 
We implement a \EPARA prototype and deploy it in a distributed manner across multiple servers. 
% In each GPU, we use Nvidia MPS~\cite{nvidiaNVIDIAMultiProcessService2025} to divide the GPU resources among the models based on their requirements, and use Nvidia Nsight~\cite{nvidiaNVIDIANsightSystems2025} to measure resource consumptions. 
\EPARA resides between the network and application layers, introducing no modifications to network protocols or additional programming requirements for users. 
In this section, we provide detailed discussions on adaptive deployment strategies (\cref{section:implement-pretraining}) and edge cloud management (\cref{section:implement-management}). 
We validate \EPARA by deploying LLMs in real-world scenarios, analyzing its operational efficacy and end-to-end workflow for deployment (\cref{section:implement-case_study}).

\subsection{Adaptive Deployment}\label{section:implement-pretraining}
% \subsubsection{Pre-training for Batch Size}
% ~\\ 
\noindent{\bf MP and BS configuration:}
\EPARA accepts user-specified MP and BS strategy. If MP is unspecified, \EPARA defaults to Deepspeed~\cite{rajbhandariZeROMemoryOptimizations2020} prescribed parallelism strategy. After MP configuration, \EPARA performs offline profiling on either the full task or its MP-sliced variants to determine the optimal BS (ranging from $2^0$ to $2^9$) when BS is not provided. Ultimately, \EPARA allocates memory-partitioned Nvidia MPS~\cite{nvidiaNVIDIAMultiProcessService2025} slices to tasks through coordinated MP+BS scheduling.

\noindent{\bf MT configuration:}
After determining the memory partitioning scheme for MPS, \EPARA performs offline profiling to identify the optimal replication degree for slice parallel execution (ranging from $2^0$ to $2^4$). The value of MT configuration is set to match this optimal replication degree. Although potential inconsistencies between memory and computational resource consumption may lead to suboptimal resource utilization, \EPARA's pricing standard based on GPU resource consumption, which is specifically determined by the MT configuration of MPS slice and service duration/processed requests count, creates intrinsic motivation for users to optimize MPS configurations. This MT strategy additionally prevents malicious users from monopolizing GPU resources through artificial replication degree inflation.

% \subsubsection{DP and Multi-frame for Frequency-sensitive}
% ~\\ 
\noindent{\bf DP and MF configuration:}
First, for DP, multiple frames of the same task may be interleaved across different GPU groups. The allocation of GPU groups is determined by the frame rate achievable by a single GPU group and the user's frame rate SLOs for the task.
\begin{equation}\label{equation:inter-task-gpu-group}
    % DP\text{-} GPU\text{-} group\ count= \frac{frame\ rate\ requirement}{frame\ rate\ of\ one\ DP\text{-} GPU\text{-} group}
    DP\ group\ count= \left\lceil\frac{frame\ rate\ requirement}{frame\ rate\ of\ one\ DP \ group}\right\rceil
    \vspace{2pt}
\end{equation}
Second, for MF, an identical number of frames from homogeneous tasks may be grouped into a single batch. Due to better filling BS with abrupt and uneven requests, MF can improve overall goodput, but also increase latency (from $1/\text{fps}$ to $[inter\ frame \ count]/\text{fps}$). Therefore, \EPARA firstly sets MF to the maximum inter-frame count defined by the task's basic latency requirement, then the inter-request count is calculated based on BS and MF.
\begin{equation}\label{equation:intra-request-batch-size}
	inter\ request\ count = \left\lfloor\frac{batch\ size\ (BS)}{\max (inter\ frame\ count\ [MF])}\right\rfloor
\end{equation}

% \subsubsection{Maximum Offloading Count}
% ~\\
\noindent{\bf Maximum offloading count:}
Increased offloading instances can improve the success rate of request handling, but concurrently incur additional network transmission overhead, thereby exerting pressure on communication resources between edge servers.
Under circumstances where robust and high-bandwidth networks exist between edge servers, inter-server communication is no longer a bottleneck, and request failure handling can be exclusively delegated to \emph{timeout}. 
% In the general case, like an autonomous system (AS) of wide area network (WAN), it is acknowledged that ordinary requests traversing multiple routers and switches typically require 1 to 8 routing hops~\cite{hongAchievingHighUtilization2013,zadtootaghajHomaEfficientTopology2020}.
In other cases, compared to common WAN~\cite{chuRFC6928Increasing2013,hongAchievingHighUtilization2013} routing, each offloading attempt in \EPARA has a high likelihood of being processed, thereby terminating offloading. 
The maximum offloading count is set to 5 by default in our experiments (\cref{section:evaluation-deepdive-effect}), and can be adjusted according to the application scenario.

\subsection{Management of Edge Clouds}\label{section:implement-management}
% \subsubsection{Management of Edge Servers}
% ~\\
\noindent{\bf Edge servers management:}
Each edge server in \EPARA is considered a node that can both offload and resolve requests. The \emph{messager} holds stationary information such as IP and MAC addresses of all edge servers. During joining or exiting processes of edge servers, they transmit their stationary configuration metadata along with join/exit requests to the centralized messager. The messager subsequently updates the device list and broadcasts it to all registered servers. To maintain service continuity, joining or exiting will not take effect until current placement cycle completion.

% \subsubsection{Management of Edge Devices}
% ~\\
\noindent{\bf Edge devices management:}
In \EPARA, edge devices often exhibit a certain degree of selfishness~\cite{brownArchitectureEdgeNetworking2024}. It is challenging to keep edge devices online and collaboratively perform AI inference tasks like Galaxy~\cite{yeGalaxyResourceEfficientCollaborative2024a} and Jupiter\cite{shengyuanyeJupiterFastResourceEfficient2024}. Therefore, we propose a compromise approach to maximize the utilization of edge devices GPU resources while ensuring stable service provision by edge servers.
Specifically, edge devices~\cite{nvidiaNVIDIAJetsonNano2025} can register to provide computational power to edge servers at any time, then edge servers rapidly allocate models to the GPUs of edge devices. These AI models need to be solvable with a single GPU without inter-device parallelism, due to the uncertain lifecycle of edge devices.

\subsection{Case Study: LLMs from Chats to Robots}\label{section:implement-case_study}
\noindent \EPARA applies operators to different AI tasks divided by <1/>1 GPU and frequency/latency. To elaborate on how \EPARA works, we present a case study employing LLMs from four distinct categories.

For LLMs, the categorical distinction of frequency and latency is determined by the temporal patterns of user request arrivals rather than specific model architectures. In \textit{latency-sensitive} scenarios exemplified by conversational interfaces (including ChatGPT-style voice-enabled applications~\cite{brownLanguageModelsAre2020}), users demand minimal response latency for input submission, interruption handling, and output text. Conversely, \textit{frequency-sensitive} paradigms manifest in interactive LLM applications such as AI virtual assistants (e.g., Manus, XiaoZhi Chatbot~\cite{xiaoxiaXiaoZhiAIChatbot2025}) and robot interactions (e.g., AIchemist~\cite{karliAlchemistLLMAidedEndUser2024}, Smart-LLM~\cite{kannanSMARTLLMSmartMultiAgent2024}), where users frequently inject additional inputs during LLMs processing phases (including both prefill and decoding phases). As shown in Fig.~\ref{fig:implementation-casestudy-llmframe}, this necessitates service providers to implement DP strategies through multi-replica deployments, enabling instantaneous switching to the most recent decoding-phase outputs when interaction interruptions occur or next action is needed.

%% casestudy pic(from motivation)
\begin{figure}[!t]
    % \vspace{-0.4cm}
	\centering
	\abovecaptionskip=0.5pt
	% \belowcaptionskip=-10pt
	\includegraphics[width=1\columnwidth]{}
	\caption{Frequency-sensitive videos and LLMs.}\label{fig:implementation-casestudy-llmframe}
	\vspace{-4pt}
\end{figure}
% four figure case study
\begin{figure}[!t]
    \begin{subfigure}[t]{0.24\textwidth}
        \centering
        \includegraphics[width=0.49\textwidth,trim=0.5cm 14.3cm 12.2cm 1.1cm, clip]{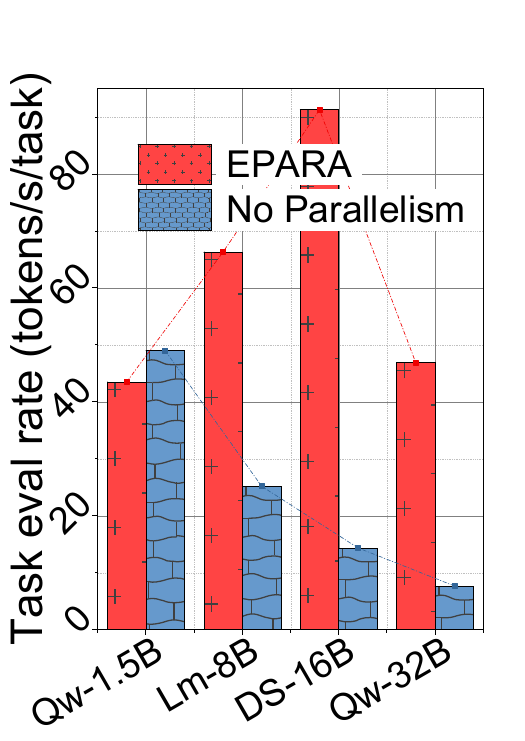}
        \includegraphics[width=0.49\textwidth,trim=0.5cm 14.3cm 12.2cm 1.1cm, clip]{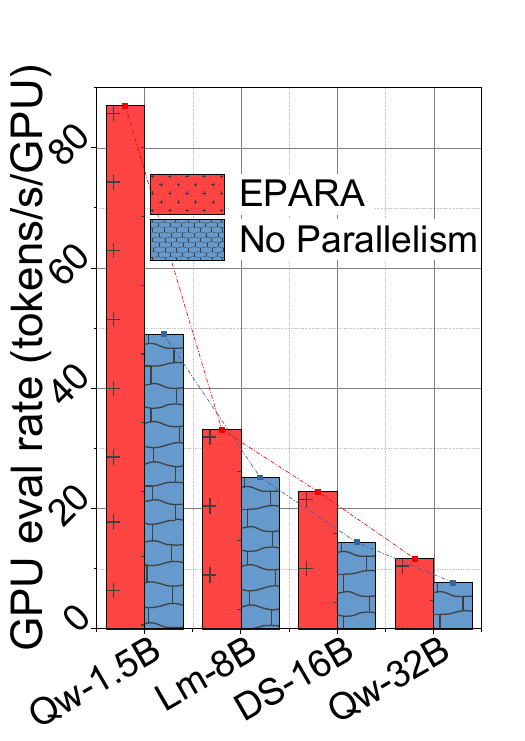}
        \captionsetup{font=footnotesize, justification=centering}
        \vspace{-0.56cm}
        \caption{LLMs HCI.}\label{fig:implementation-LLMHCI}
        % \vspace{-0.3cm}
    \end{subfigure}
    ~~
    \centering
    \begin{subfigure}[t]{0.24\textwidth}
        \centering
        \includegraphics[width=0.49\textwidth,trim=0.5cm 14.3cm 12.2cm 1.1cm, clip]{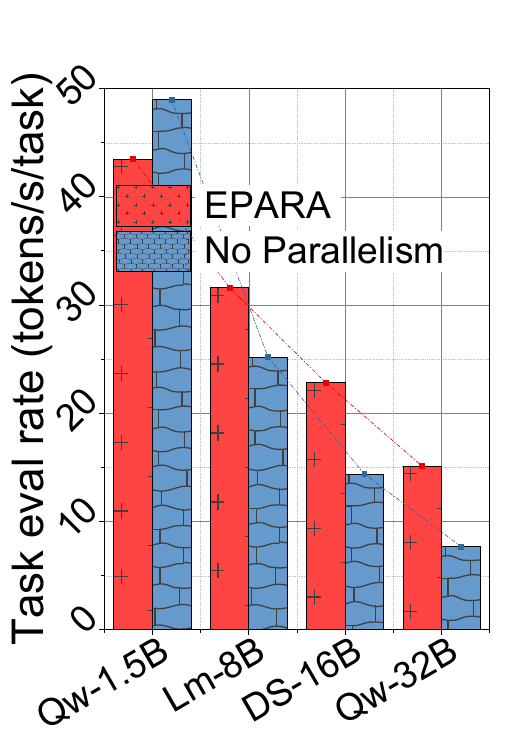}
        \includegraphics[width=0.49\textwidth,trim=0.5cm 14.3cm 12.2cm 1.1cm, clip]{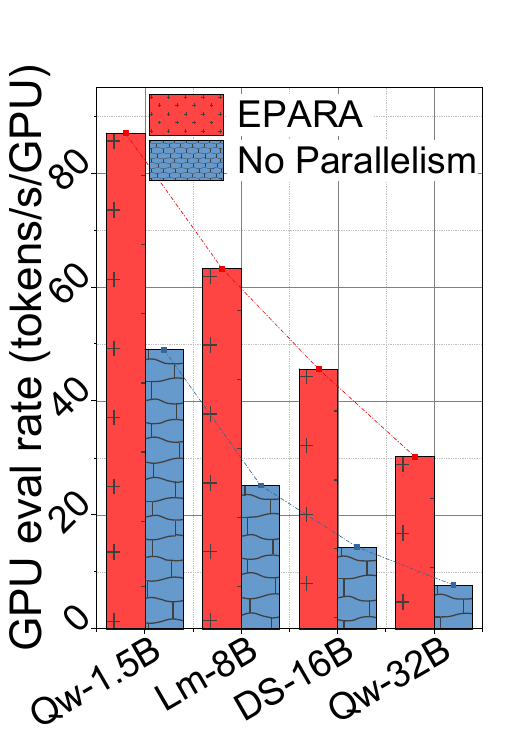}
        \captionsetup{font=footnotesize, justification=centering}
        \vspace{-0.56cm}
        \caption{LLMs chat.}\label{fig:implementation-LLMchat}
        % \vspace{-0.3cm}
    \end{subfigure}
    \vspace{-0.43cm}
    \caption{Case study: LLMs in \EPARAbf.}\label{fig:implementation-LLMcase_study}
    % \vspace{-0.1cm}
    \vspace{10pt}
\end{figure}

\textit{Remarks: DP fundamentally addresses the latency in LLMs prefill phase. Readers may question that if enhancing single-GPU computational capacity could resolve this issue. We posit that achieving equivalent performance (for example 0.1s) via hardware upgrades is a requirement that incurs prohibitively expensive expenditures in edge, particularly for chain-of-thought optimized LLMs~\cite{deepseek-aiDeepSeekV3TechnicalReport2024}.} 

For settings, we categorize mainstream LLMs within \EPARA, as illustrated in the \emph{Text} subsection of Table~\ref{table:evaluation-models}. Chat services and HCI systems are categorized as latency-sensitive and frequency-sensitive respectively. We use four servers each equipped with an Nvidia Tesla P100 GPU to deploy LLMs of four categories. 

We subsequently conduct adaptive deployment according to~\cref{section:implement-pretraining}.
First, we formulate MP strategies and batch size (BS) for these models. For Llama3-8B, DeepSeekV2-16B and Qwen2.5-32B, we adopt BS4+TP2, BS4+TP2, BS4+TP2+PP2 respectively in latency-sensitive scenarios, and adopt BS2, BS4+PP2, BS2+PP2 respectively in frequency-sensitive scenarios.
For Qwen2.5-1.5B, we only adopt BS2 to reach the highest token rate of 87 tokens/s.
Second, \EPARA measures the MT of these tasks, and sets MT to 2 for Qwen2.5-1.5B, remaining MT equal to 1 for other LLMs.
Third, we focus on MF and DP strategies for HCI applications. For Qwen2.5-1.5B with a tolerable inter-frame processing latency of 30 ms, \EPARA adopts MF1. For Llama3-8B, DeepSeekV2-16B, and Qwen2.5-32B, achieving 24 tokens/s, 46 tokens/s, and 24 tokens/s, at BS2, BS2+PP2, and BS2+PP2, respectively, \EPARA employs DP2 to fulfill HCI demands.  

We validate the above discussions in real testbed, with experimental results in Fig.~\ref{fig:implementation-LLMcase_study} demonstrating that \EPARA improves GPU efficiency while satisfying SLO requirements.

Crucially, image and video processing in visual AI correspond perfectly to chat and HCI in LLMs services. 
% Frequency-sensitive applications particularly benefit from DP+Frame deployments, achieving computational efficiency gains and frame rate enhancements in edge. 
Another case for segmentation models is provided in \cref{appendix-implement-case_study-segementation}.

% !TeX root = ../main.tex
% \section{EVALUATION}\label{section:evaluation}
\section{Evaluation}\label{section:evaluation}
\noindent We evaluate our \EPARA through a combination of testbed experiments and large-scale simulations and show that:
\begin{itemize}[leftmargin=*]
	\item \EPARA achieves higher performance in practice (\cref{section:evaluation-testbed}).
	\item \EPARA works well in large edge networks (\cref{section:evaluation-simulation}).
	\item \EPARA's design components are effective (\cref{section:evaluation-deepdive}).
\end{itemize}
% \vspace{-5pt} % 减少列表下方的空白

%% edge devices
\begin{figure}[!t]
    \centering
    \begin{subfigure}[thbp]{0.23\textwidth}
        \centering
        \includegraphics[width=0.49\textwidth]{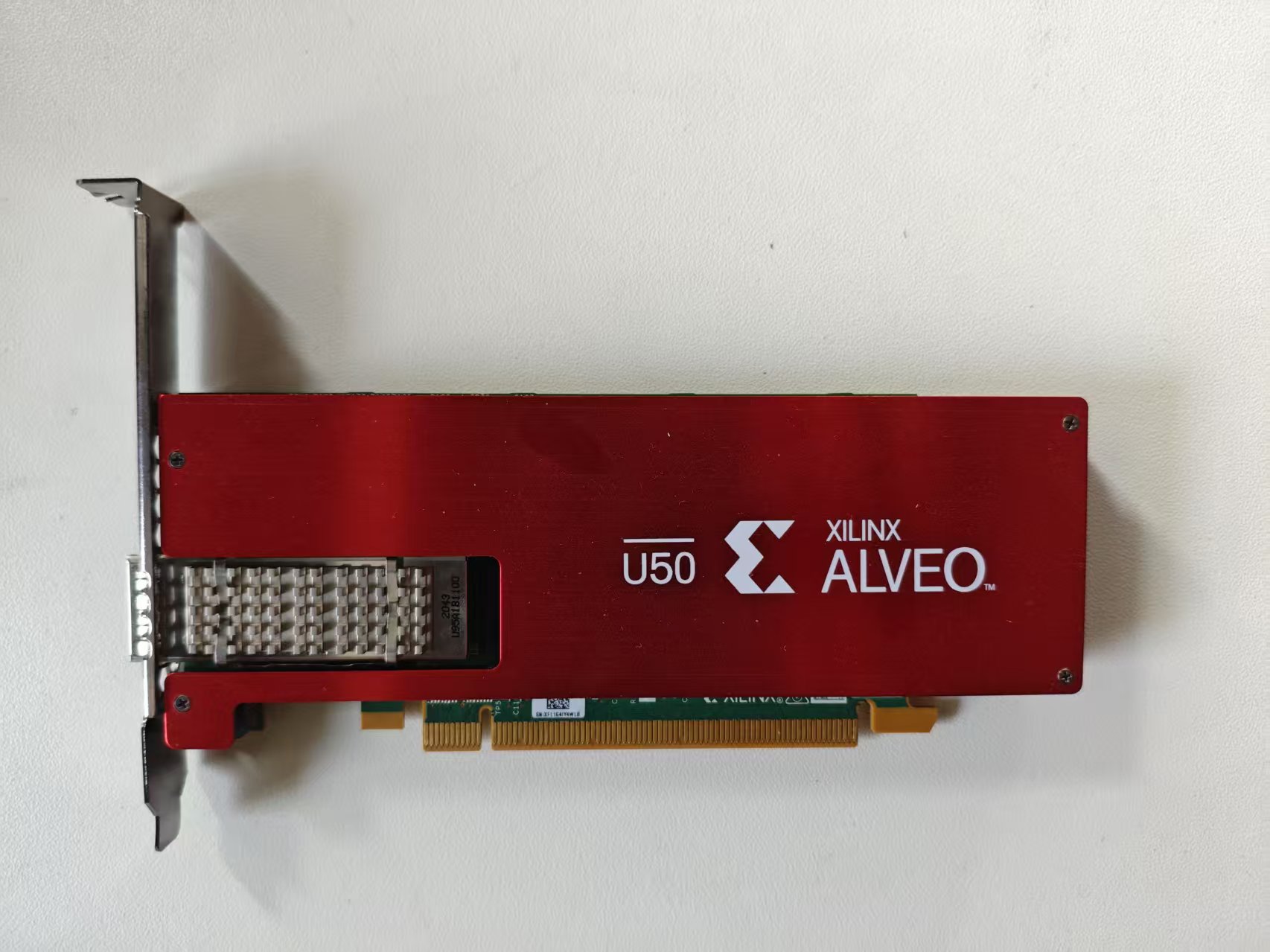}
		\includegraphics[width=0.49\textwidth]{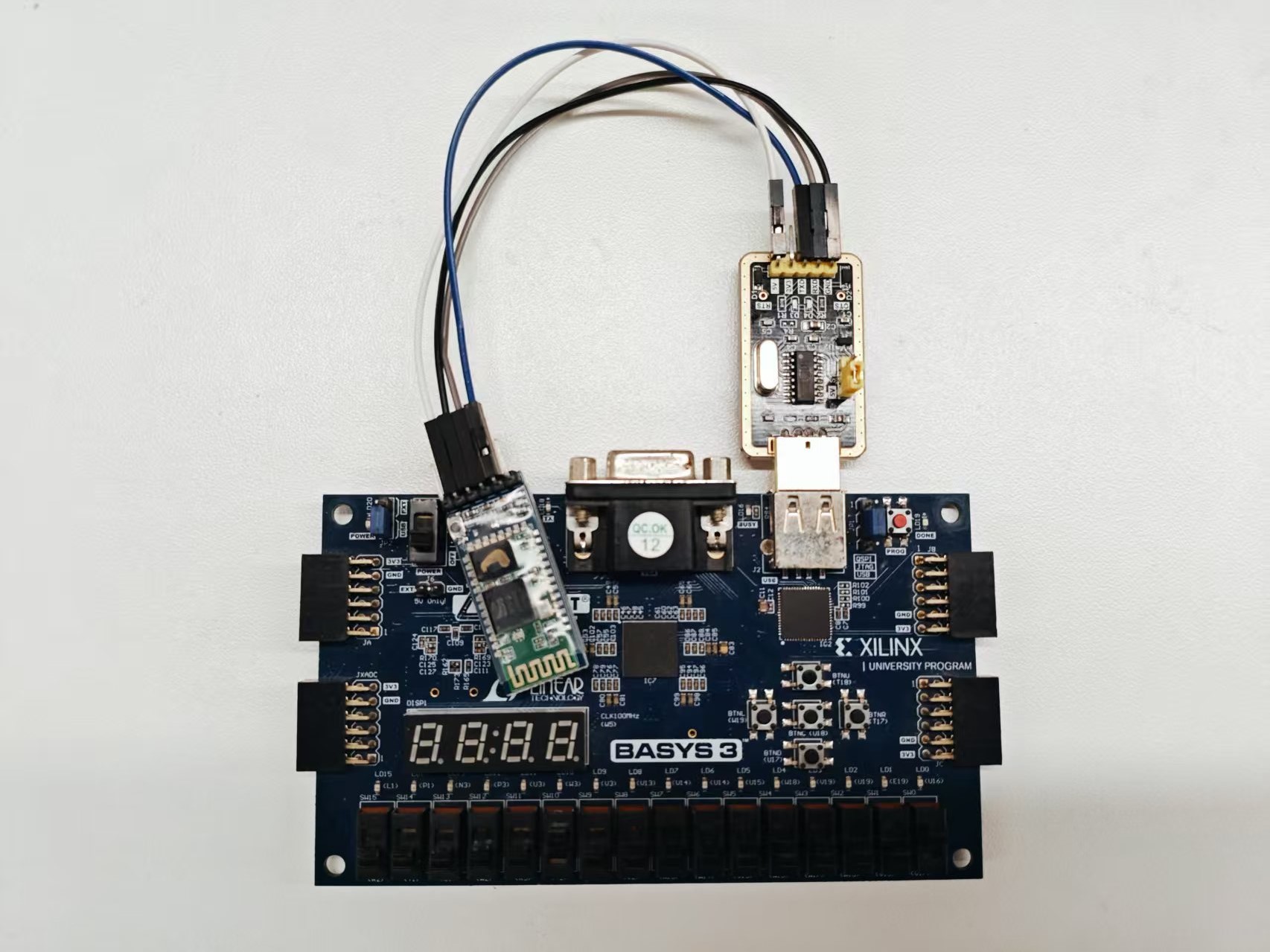}
        \captionsetup{font=footnotesize, justification=centering}
        % \vspace{-0.3cm}
        \caption{Embedded devices.}
        % \vspace{-0.3cm}
        \label{fig:evaluation-setup-device1}
    \end{subfigure}
    ~~
    \centering
    \begin{subfigure}[thbp]{0.23\textwidth}
        \centering
		\includegraphics[width=0.49\textwidth]{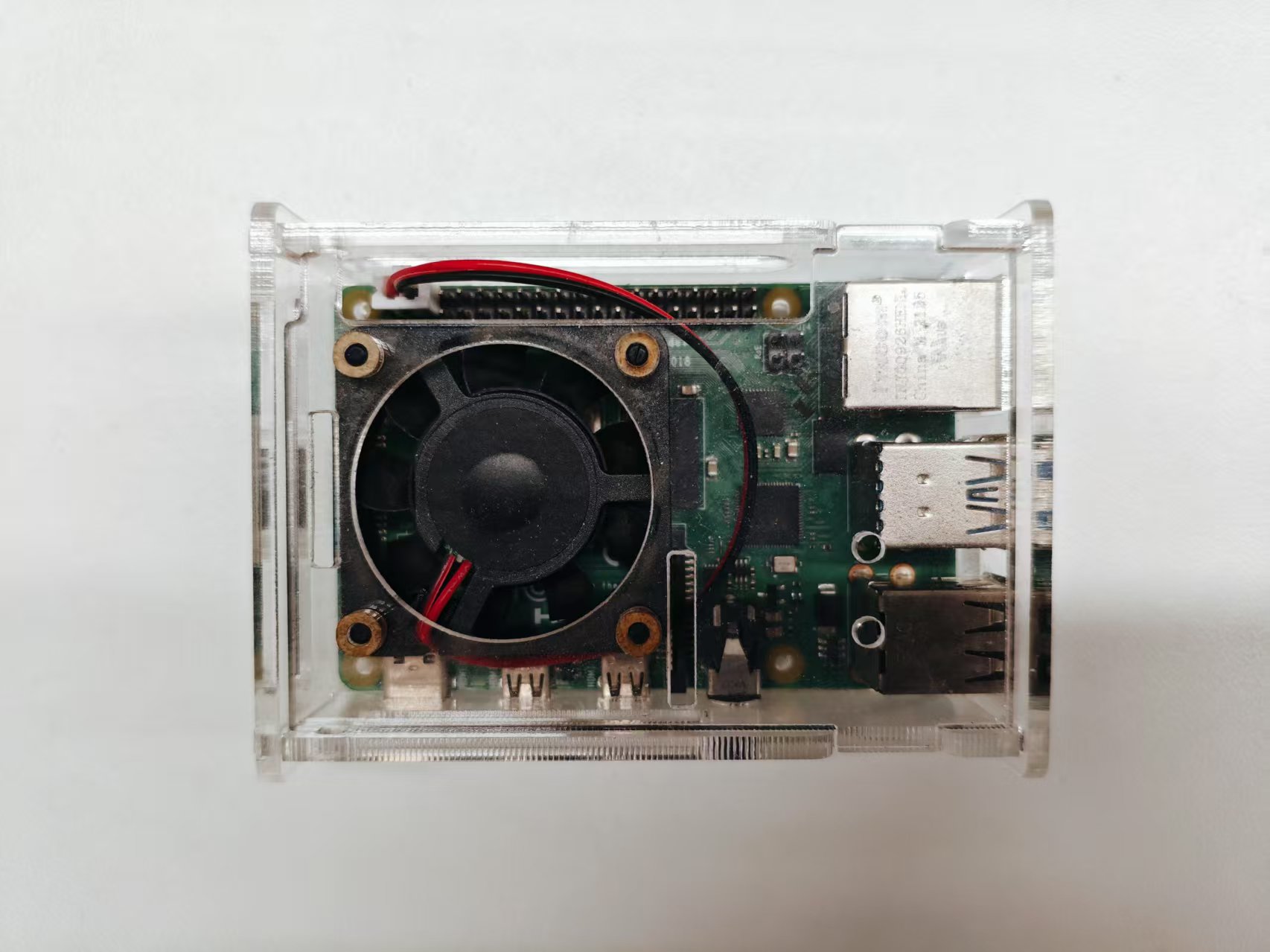}
		\includegraphics[width=0.49\textwidth]{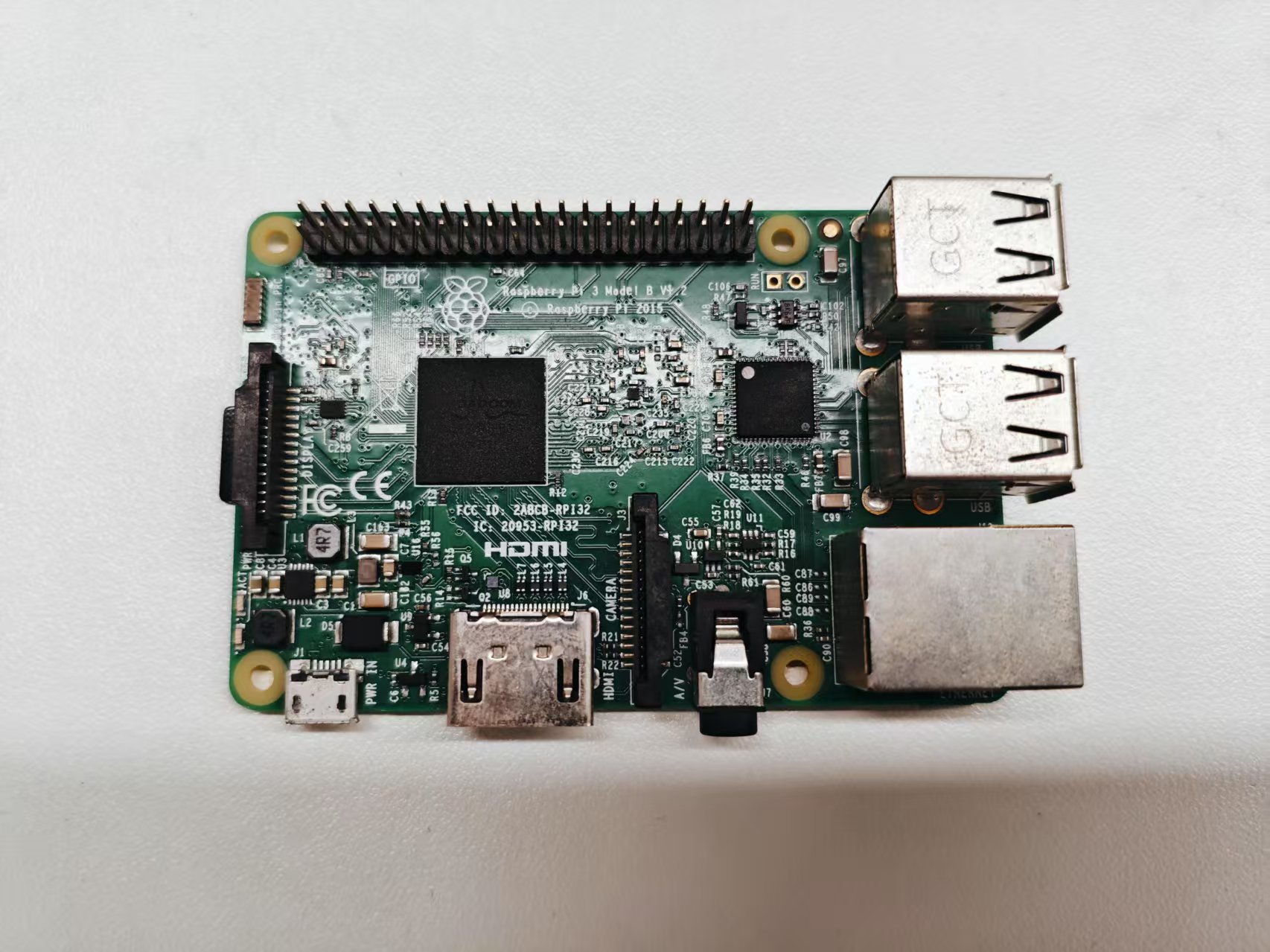}
        \captionsetup{font=footnotesize, justification=centering}
        % \vspace{-0.3cm}
        \caption{Microcomputers.}\label{fig:evaluation-setup-device2}
        % \vspace{-0.3cm}
    \end{subfigure}
    % \vspace{-0.5cm}
    \caption{Edge devices used in testbed.}\label{fig:evaluation-setup-device}
    \vspace{1pt}
\end{figure}

% %% appendix environment table
% % \begin{center}
% \begin{table}[!t]
% 	\centering
%     % \setlength{\abovecaptionskip}{6pt}
%     % \setlength{\belowcaptionskip}{-10pt}
% 	\resizebox{0.8\linewidth}{!} {
% 	\begin{tabular}{|c||c|c|}
% 	\hline 
% 	&Environment&Parameter\\
% 	\hline
% 	\hline
% 	\multirow{4}{*}{\rotatebox{90}{Hardware$\ \ \ \ \ $}}&Mellonax CX6 NIC bandwidth & 100Gb/s \\
% 	\cline{2-3}
% 	&AS4610-54T switch bandwidth & 10Gb/s \\
% 	\cline{2-3}
% 	&Nvidia Tesla P100 VRAM & 16GB \\
% 	\cline{2-3}
% 	&Switch port number & 54 \\
%     \cline{2-3}
%     &U50 clock frequency & 200MHz \\
%     \cline{2-3}
%     &Basys 3 clock frequency & 100MHz \\
% 	\hline
% 	\multirow{4}{*}{\rotatebox{90}{Software$\ \ \ \ \ \ \ \ \ $}}&Nvidia driver & 550.54.14 \\						
% 	\cline{2-3}
% 	&Nvidia cuda version & 11.8 \\						
% 	\cline{2-3}
% 	&PyTorch~\cite{paszkePyTorchImperativeStyle2019} version & 2.5.1 \\			
% 	\cline{2-3}
% 	&Ubuntu version & 20.04 LTS \\
% 	\cline{2-3}
%     &Xilinx vivado version & 2020.1 \\
%     \cline{2-3}
% 	&\EPARA max offloading count & 5 \\
%     \cline{2-3}
% 	&\EPARA placement mode & offline \\
% 	\hline 
% 	\end{tabular}
% 	}
% 	\vspace{10pt}
% 	\caption{Testbed environments.}\label{table:appendix-environment_settings}
% 	\vspace{-14pt}
% \end{table}
% % \end{center}
% % \vspace{-20pt}

\subsection{Testbed Experiments}\label{section:evaluation-testbed}

\noindent{\bf Setup}:
Our real-world experimental platform includes six Dell PowerEdge R750 servers. 
Each host is equipped with a 28-core Intel(R) Xeon(R) 3.10GHz CPU and 128GB memory.
Two hosts are linked to each other by Mellonax CX6 NIC, and the other four hosts equipped with Nvidia Tesla P100 (16GB) are linked with Edge-core AS4610-54T switch\footnote{\EPARA still performs well without high-performance networks, as shown in \cref{section:evaluation-deepdive}.}. 
For edge devices, as shown in Fig.~\ref{fig:evaluation-setup-device}, we use a Xilinx Alveo U50 and a Xilinx Basys 3 as embedded devices, and use a Raspberry Pi 3 Model B (1GB RAM) and a Raspberry Pi 4 Model B (3GB RAM) as microcomputers.
We provide a summary of the environment settings in Table~\ref{table:appendix-environment_settings} of appendix~\ref{section:appendix-testbed_setting}.

\noindent{\bf Workloads}:
As presented in Table~\ref{table:evaluation-models}, the AI models evaluated in our evaluation encompass LLMs, translation, classification of text/image, detection, and segmentation. As~\cite{shubhaUSHERHolisticInterference2024,zhangSHEPHERDServingDNNs2023,liAlpaServeStatisticalMultiplexing2023}, we use the Microsoft Azure Function Trace 2021~\cite{zhangFasterCheaperServerless2021} for the request rates and the Microsoft Azure LLM Inference Traces 2023~\cite{patelSplitwiseEfficientGenerative2024} for the token length. We assign 100,000 function streams from the function trace to the models mentioned above in a round-robin manner.

\noindent{\bf Comparisions}:
We compare \EPARA , InterEdge~\cite{brownArchitectureEdgeNetworking2024}, AlpaServe~\cite{liAlpaServeStatisticalMultiplexing2023}, Galaxy~\cite{yeGalaxyResourceEfficientCollaborative2024a} and SERV-P~\cite{farhadiServicePlacementRequest2021} in testbed experiments. For InterEdge, we implement a round-robin forwarding offloading strategy, where MP, BS and MT policies align with \EPARA. InterEdge can represent decentralized edge clouds system. For AlpaServe, by default, it refuses to process requests which need offloading or parallelism through multiple distributed edge servers. We use AlpaServe to represent datacenter schemes. For Galaxy, each GPU is treated as an edge device, facilitating collaborative operations that include cross-server GPU coordination. We use Galaxy to represent centralized edge devices AI inference works. For SERV-P, placement and handling across all servers are centrally managed by a single CPU process on one server. SERV-P can represent KubeEdge~\cite{kubeedge} based systems with complex (NP-hard) centralized handling strategies.
\begin{table}[t]
    \centering
    \vspace{-2pt}
    \resizebox{1\linewidth}{!} {
    \begin{tabular}{|c|c||c|c|c|c|}
    \hline 

    \multicolumn{2}{|c||}{} & 
    \multicolumn{2}{c|}{\textbf{< 1 GPU (Tesla P100)}} &  
    \multicolumn{2}{c|}{\textbf{> 1 GPU (Tesla P100)}} \\

    \hline
    \hline

    \multirow{6}{*}{\rotatebox{90}{\textbf{Frequency$\ \ $}}} & 
    \multirow{3}{*}{Vid} & 
    Classify &
    MobileNetV2~\cite{sandlerMobileNetV2InvertedResiduals2018}, ResNet~\cite{heDeepResidualLearning2016}  & 
    Segment & 
    DeeplabV3+~\cite{chenEncoderDecoderAtrousSeparable2018}\\

    \cline{3-5}

    & 
    & 
    Detect &
    YOLOv10~\cite{NEURIPS2024_c34ddd05}, YOLOv11~\cite{khanamYOLOv11OverviewKey2024}  & 
    \multicolumn{2}{c|}{SCTNet~\cite{xuSCTNetSingleBranchCNN2024}, MaskFormer~\cite{NEURIPS2021_950a4152},} \\
    
    \cline{3-4}

    & 
    & 
    Segment &
    Unet~\cite{ronnebergerUNetConvolutionalNetworks2015}  & 
    \multicolumn{2}{c|}{OMG-Seg~\cite{liOMGSegOneModel2024}}\\

    \cline{2-6}

     & 
    \multirow{3}{*}{HCI} & 
    Classify &
    BERT~\cite{devlinBERTPretrainingDeep2019}  & 
    Generate &
    Llama3-8B~\cite{grattafioriLlama3Herd2024}, \\

    \cline{3-5}

    & 
     & 
    Trans &
    GNMT~\cite{wuGooglesNeuralMachine2016}  & 
    \multicolumn{2}{c|}{DeepSeekV2-16B~\cite{deepseek-aiDeepSeekV3TechnicalReport2024}, Qwen2.5-32B~\cite{qwenQwen25TechnicalReport2025},}  \\

    \cline{3-4}

    & 
     & 
    Generate &
    Qwen2.5-1.5B~\cite{qwenQwen25TechnicalReport2025} & 
    \multicolumn{2}{c|}{llama3-70B~\cite{grattafioriLlama3Herd2024}}  \\

    \hline
    \hline

    \multirow{6}{*}{\rotatebox{90}{\textbf{Latency$\ \ $}}} & 
    \multirow{3}{*}{Pic} & 
    Classify &
    MobileNetV2, ResNet & 
    Segment & 
    \\

    \cline{3-5}

    & 
    & 
    Detect &
    YOLOv10, YOLOv11  & 
    \multicolumn{2}{c|}{MaskFormer,} \\
    
    \cline{3-4}

    & 
    & 
    Segment &
    Unet, DeepLabV3+, SCTNet  & 
    \multicolumn{2}{c|}{OMG-Seg}\\

    \cline{2-6}

     & 
    \multirow{3}{*}{Text} & 
    Classify &
    BERT  & 
    Generate &
    Llama3-8B, \\

    \cline{3-5}

    & 
     & 
    Trans &
    \makecell[c]{GNMT\\}  & 
    \multicolumn{2}{c|}{DeepSeekV2-16B, Qwen2.5-32B,}  \\

    \cline{3-4}

    & 
     & 
    Generate &
    Qwen2.5-1.5B  & 
    \multicolumn{2}{c|}{llama3-70B}  \\

    \hline

    \end{tabular}
    }
    \vspace{10pt}
    \caption{Models used in evaluation.}\label{table:evaluation-models}
    \vspace{-15pt}
    % \vspace{-25pt}
\end{table}

% \vspace{2pt}
\subsubsection{Overall Performance}
~\\
We employ two Raspberry Pi models as edge devices and six Dell R750 as edge servers to evaluate \EPARA's AI inference goodput.
 
%% overall
% \begin{wrapfigure}[15]{r}[-10pt]{0.3\textwidth}
%     % \hspace{-0.05\textwidth}
%     \vspace{-4pt}
% 	\centering
% 	\abovecaptionskip=-5pt
% 	\belowcaptionskip=-10pt
% 	\includegraphics[width=0.31\textwidth,trim=4cm 3cm 3.41cm 0cm, clip]{graph/evaluation-testbed-leida-Graph3.png}
%     \caption{Testbed performance (reqs/s).}\label{fig:evaluation-testbed-overall-overall}
%     \vspace{-3pt}
% \end{wrapfigure}
\begin{wrapfigure}{r}{0.3\textwidth}
    % \vspace{-12pt}
    \vspace{-12pt}
	\centering
	\abovecaptionskip=-5pt
	\belowcaptionskip=-8pt
    \leavevmode\hspace*{-0.04\textwidth}
	\includegraphics[width=0.33\textwidth,trim=4.8cm 3.5cm 3.5cm 0cm, clip]{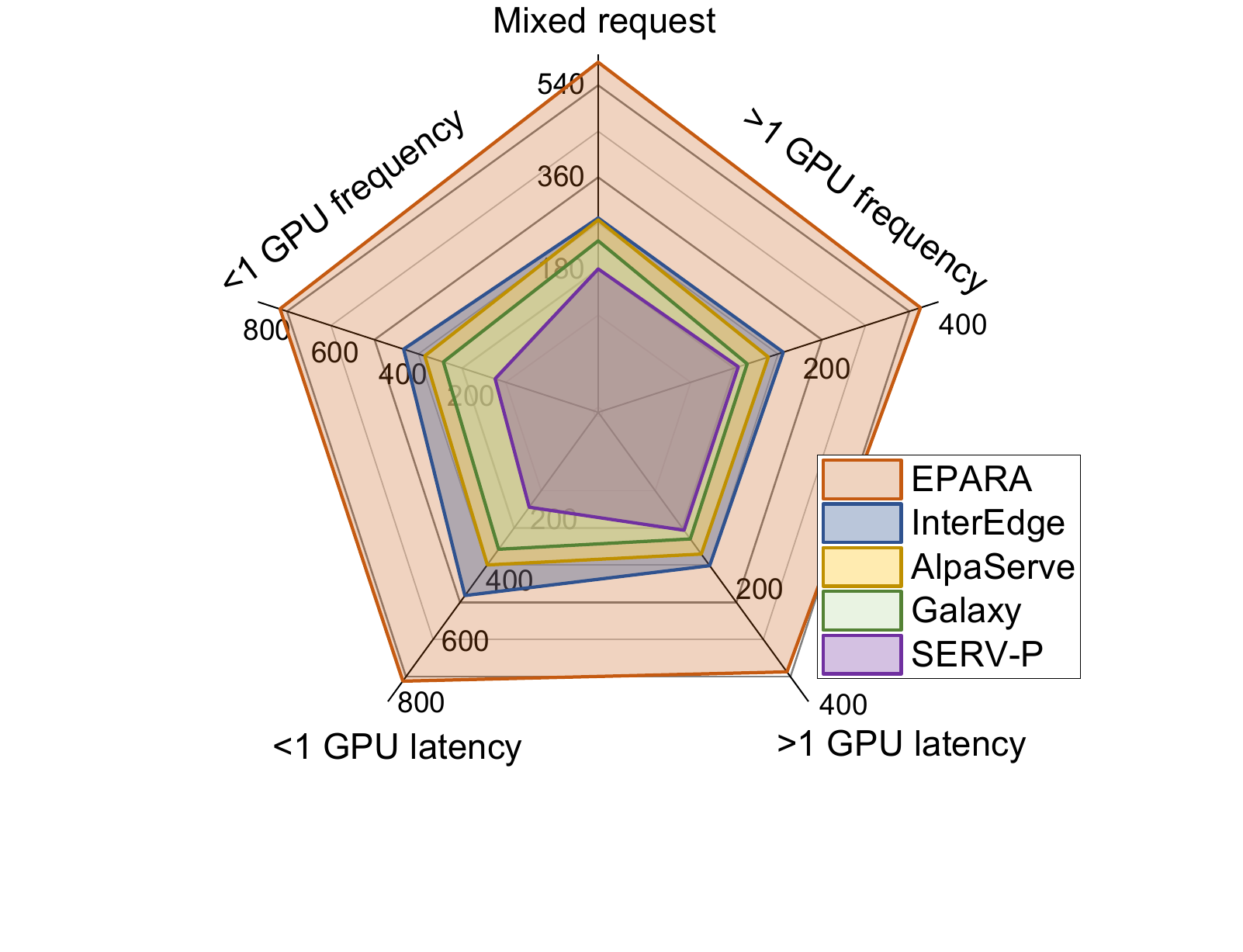}
    % \leavevmode\hspace*{-10pt} % 没用

    % \captionsetup{width=0.35\textwidth}
    % \captionsetup{singlelinecheck=false} % 图和标题不居中对齐，用这个命令后似乎自动靠左了
    % \caption{\makebox[0.35\textwidth][l]{Testbed goodput (reqs/s).}}\label{fig:evaluation-testbed-overall-overall} %这里加入 \makebox[0.33\textwidth][l]{}没啥额外效果
    \captionsetup{singlelinecheck=false}
    \captionsetup{width=0.34\textwidth}
    \caption{Testbed goodput (reqs/sec).}\label{fig:evaluation-testbed-overall-overall}
    \vspace{-3pt}
\end{wrapfigure}
%% testbed
\begin{figure*}[thbp]
    \centering
    \begin{subfigure}[b]{0.2\textwidth}
        \centering
        \includegraphics[width=\textwidth,trim=0cm 0cm 5.5cm 0.5cm, clip]{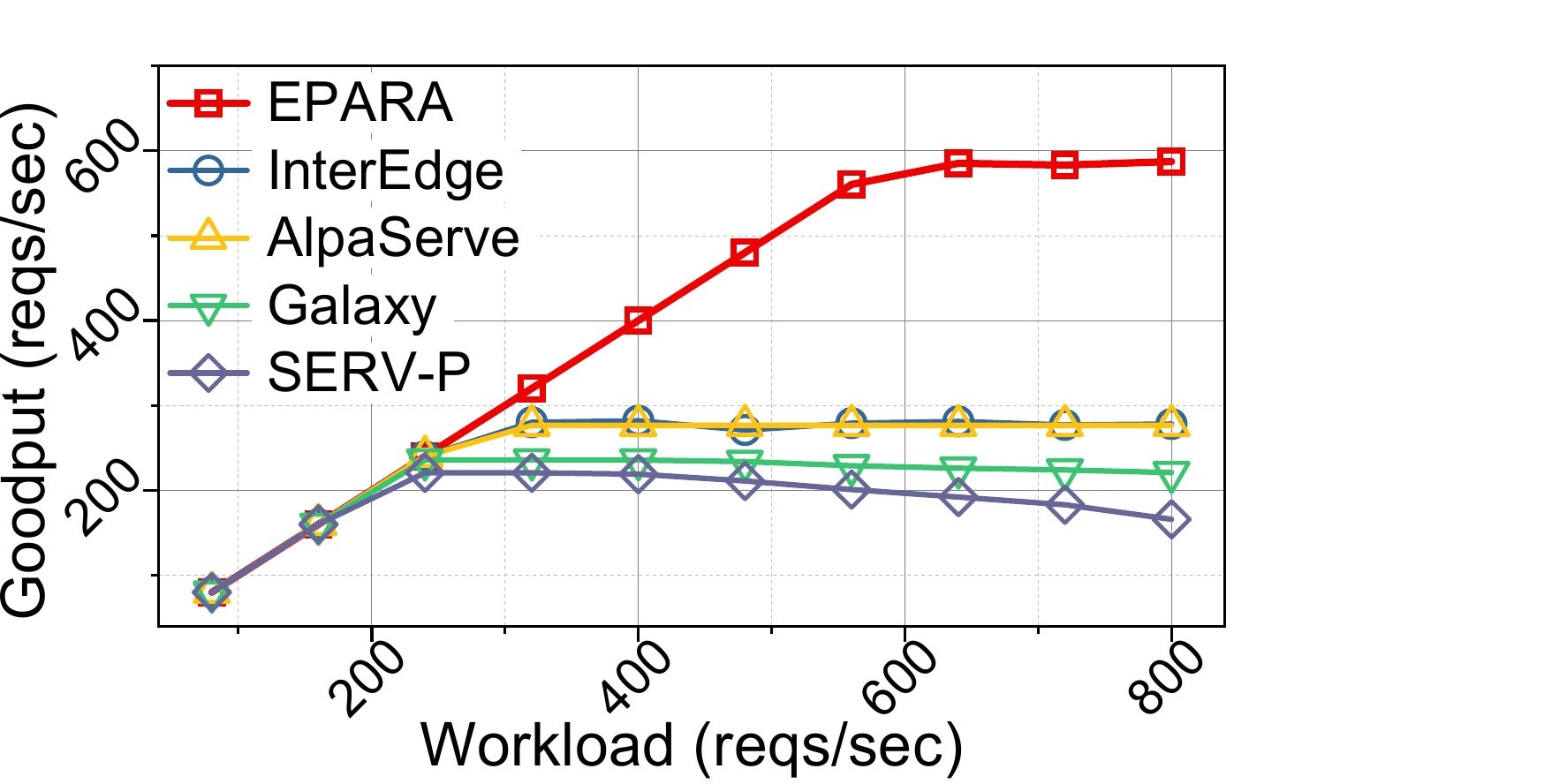}
        \captionsetup{font=footnotesize, justification=centering}
        \vspace{-0.5cm}
        \caption{Mixed request.}
        \label{fig:evaluation-testbed-mix}
    \end{subfigure}
    ~~
    \begin{subfigure}[b]{0.2\textwidth}
        \centering
        \includegraphics[width=\textwidth,trim=0cm 0cm 5.5cm 0.5cm, clip]{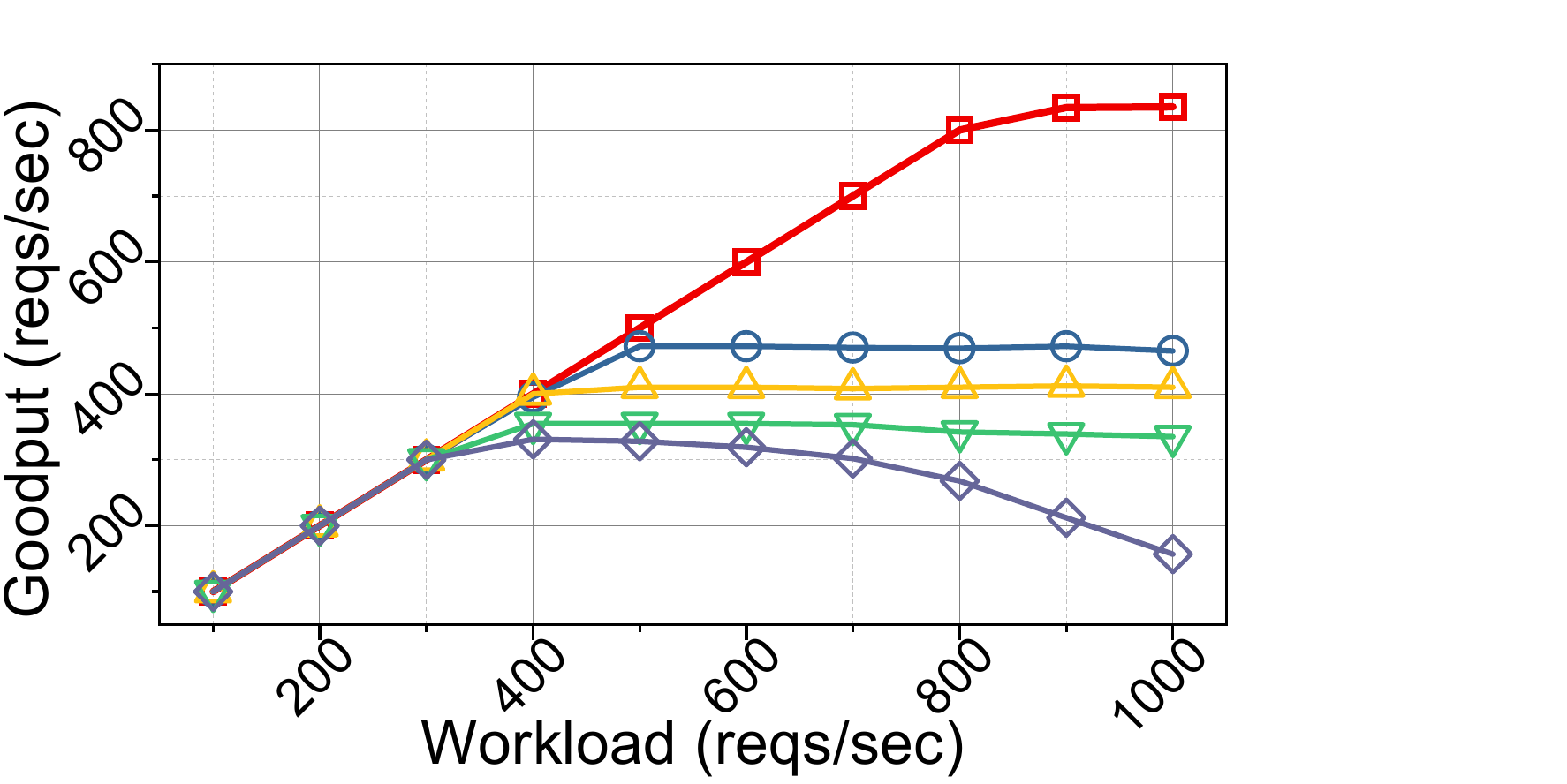}
        % \captionsetup{font={\footnotesize,bf,stretch=1.25}}
        \captionsetup{font=footnotesize, justification=centering}
        \vspace{-0.5cm}
        \caption{<1 GPU Frequency-sensitive.}
        \label{fig:evaluation-testbed-lessfrequency}
    \end{subfigure}
    ~~
    \begin{subfigure}[b]{0.2\textwidth}
        \centering
        \includegraphics[width=\textwidth,trim=0cm 0cm 5.5cm 0.5cm, clip]{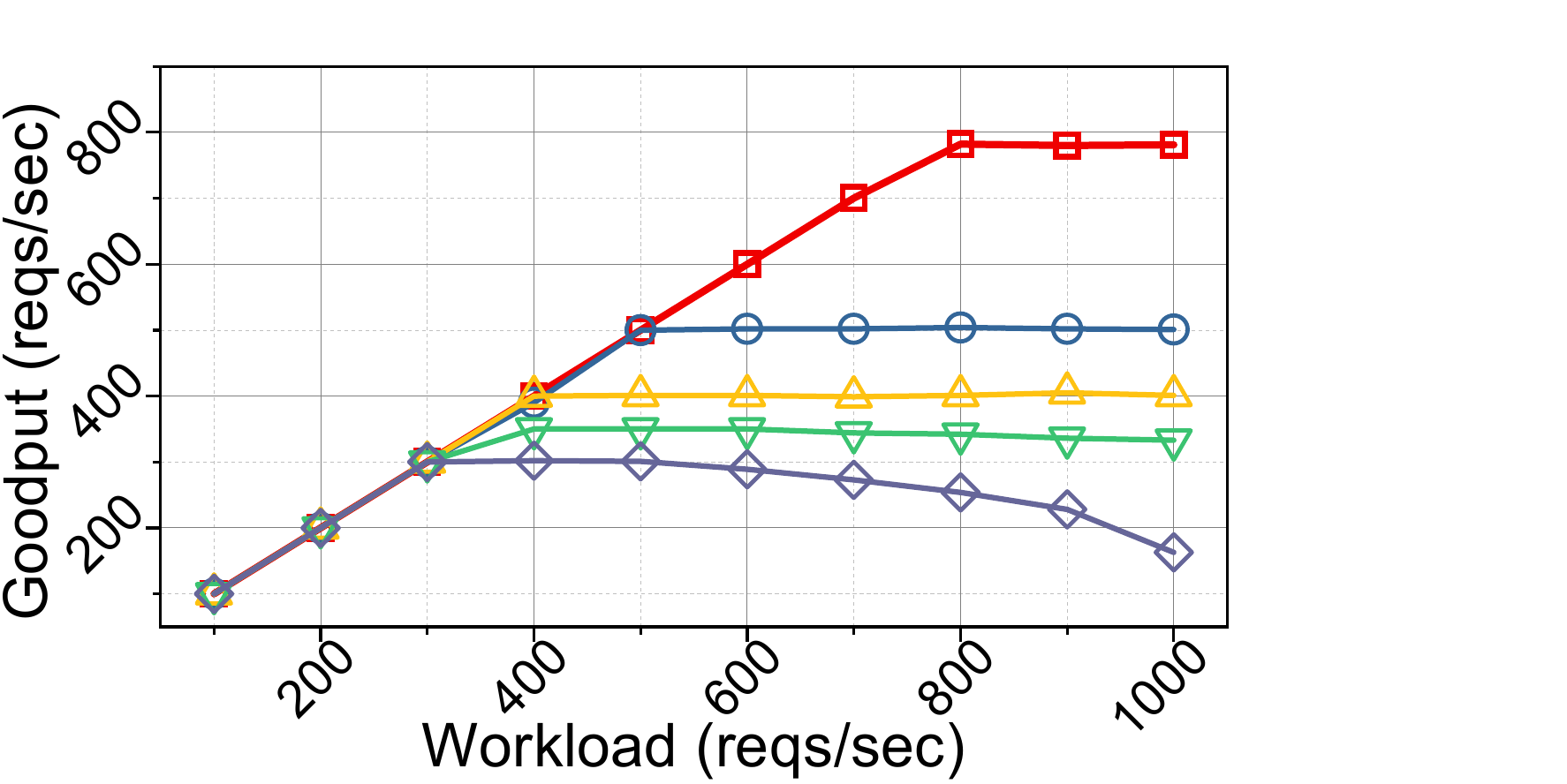}
        % \captionsetup{font={\footnotesize,bf,stretch=1.25}}
        \captionsetup{font=footnotesize, justification=centering}
        \vspace{-0.5cm}
        \caption{<1 GPU Latency-sensitive.}
        \label{fig:evaluation-testbed-lesslatency}
    \end{subfigure}
    ~~
    \begin{subfigure}[b]{0.2\textwidth}
        \centering
        \includegraphics[width=\textwidth,trim=0cm 0cm 5.5cm 0.5cm, clip]{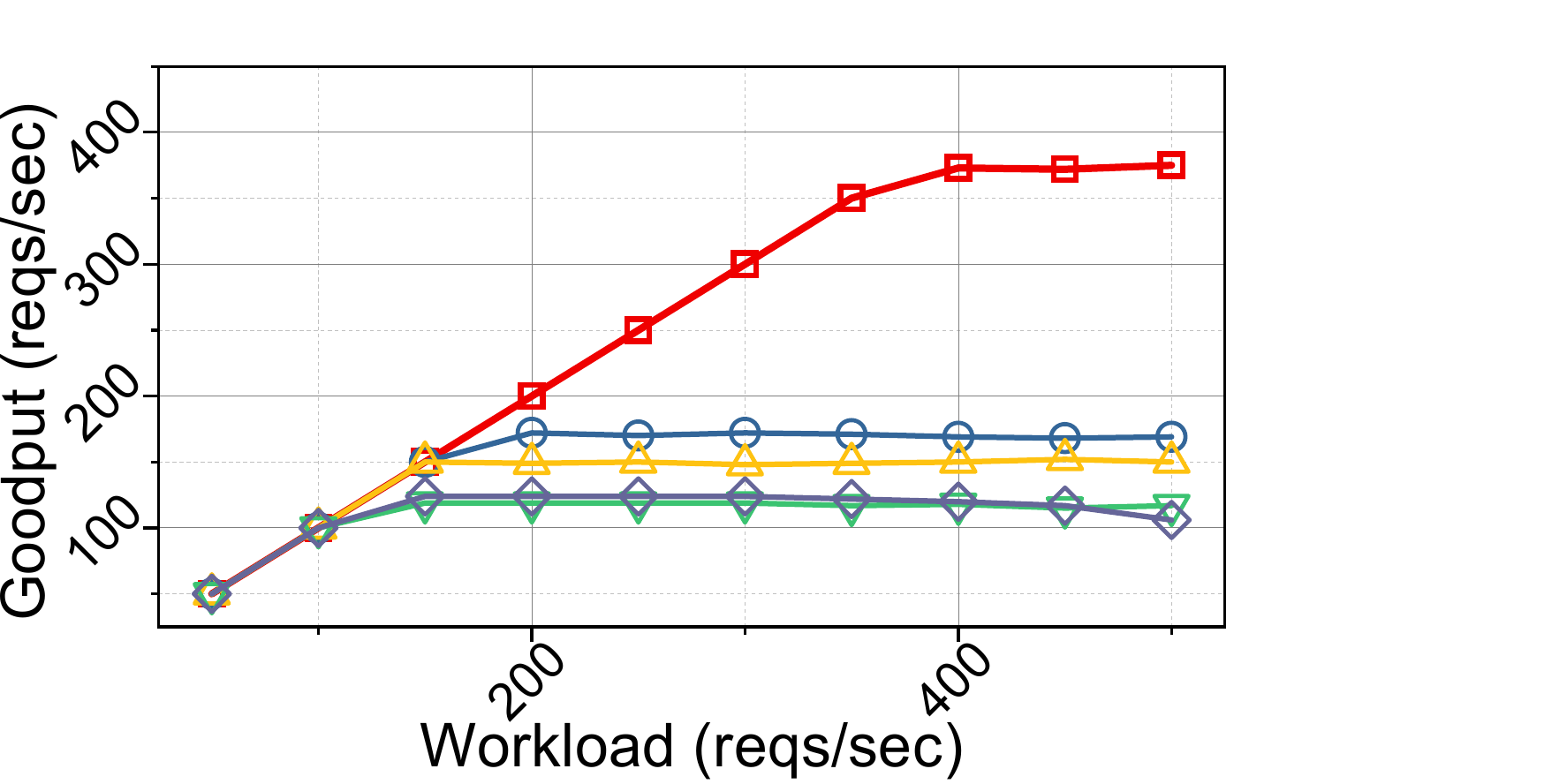}
        % \captionsetup{font={\footnotesize,bf,stretch=1.25}}
        \captionsetup{font=footnotesize, justification=centering}
        \vspace{-0.5cm}
        \caption{>1 GPU Frequency-sensitive.}
        \label{fig:evaluation-testbed-morefrequency}
    \end{subfigure}
    ~~
    \begin{subfigure}[b]{0.2\textwidth}
        \centering
        \includegraphics[width=\textwidth,trim=0cm 0cm 5.5cm 0.5cm, clip]{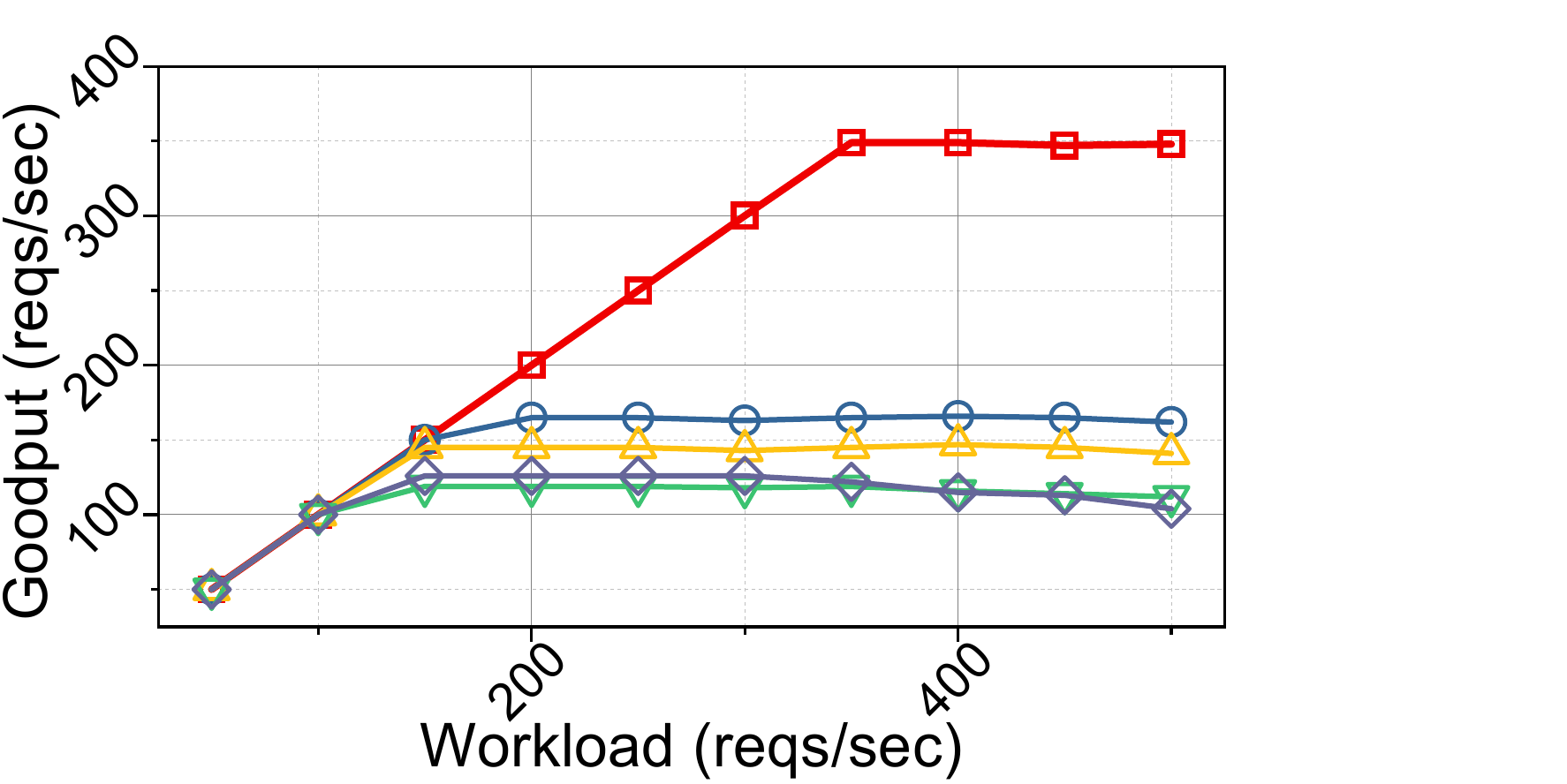}
        % \captionsetup{font={\footnotesize,bf,stretch=1.25}}
        \captionsetup{font=footnotesize, justification=centering}
        \vspace{-0.5cm}
        \caption{>1 GPU Latency-sensitive.}
        \label{fig:evaluation-testbed-morelatency}
    \end{subfigure}
    % \vspace{-0.3cm}
    \caption{\EPARAbf overall performance in real testbed.}
    \label{fig:evaluation-testbed-overall}
    \vspace{-2pt}
\end{figure*}

%% embedded devices
\begin{figure}[thbp]
    \centering
    \begin{subfigure}[thbp]{0.24\textwidth}
        \centering
        \includegraphics[width=\textwidth,trim=0.7cm 17cm 4.5cm 1.5cm, clip]{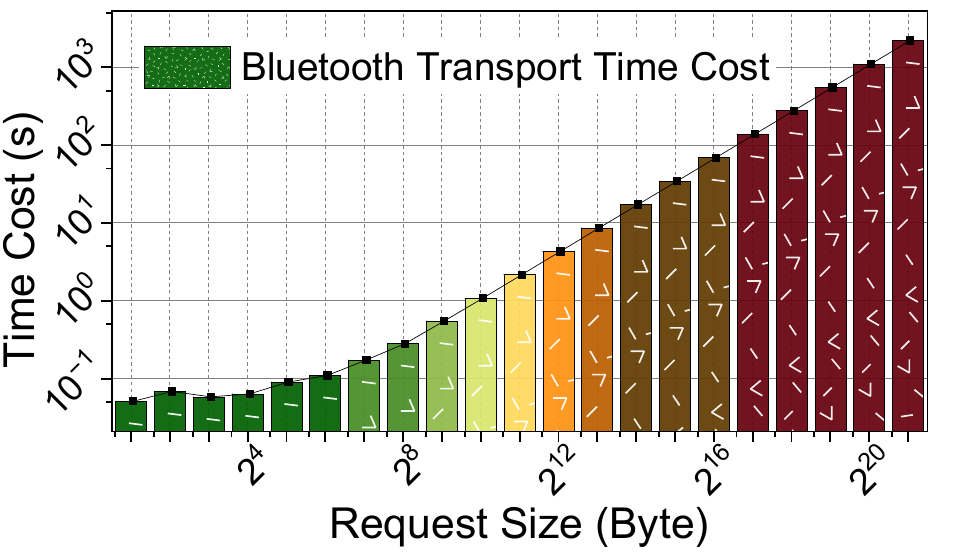}
        \captionsetup{font=footnotesize, justification=centering}
        \vspace{-0.5cm}
        \caption{Bluetooth transport.}
        % \vspace{-0.3cm}
        \label{fig:evaluation-testbed-embedded devices1}
    \end{subfigure}
    ~~
    \centering
    \begin{subfigure}[thbp]{0.24\textwidth}
        \centering
        \includegraphics[width=\textwidth,trim=0.7cm 17cm 4.5cm 1.5cm, clip]{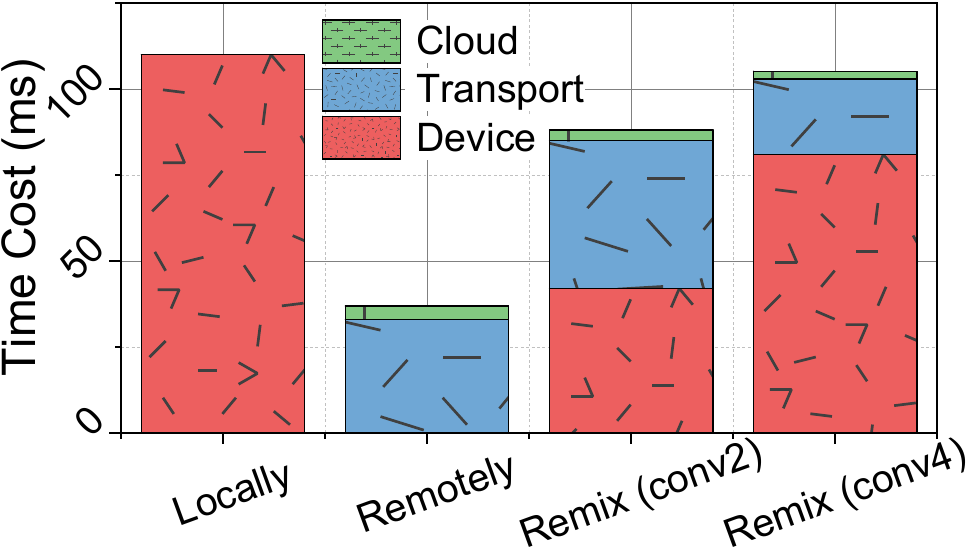}
        \captionsetup{font=footnotesize, justification=centering}
        \vspace{-0.5cm}
        \caption{Accelerator cards.}
        % \vspace{-0.3cm}
        \label{fig:evaluation-testbed-embedded devices2}
    \end{subfigure}
    % \vspace{-0.5cm}
    \caption{\EPARAbf with embedded devices in real testbed.}\label{fig:evaluation-testbed-embedded devices}
    \vspace{9pt}
\end{figure}

% The results in Fig.~\ref{fig:evaluation-testbed-overall-overall} and Fig.~\ref{fig:evaluation-testbed-overall} present the overall and detailed goodput under five workloads.
% \EPARA achi\-eves the best performance in overall average goodput for all workloads. Compared to InterEdge, AlpaServe, Galaxy and SERV-P, \EPARA improves the overall goodput by up to 2.1$\times$, 2.2$\times$, 2.5$\times$ and 3.2$\times$, respectively, for the mixed request workload, and 1.9$\times$, 2.2$\times$, 2.6$\times$ and 3.9$\times$, respectively, for the frequency-sensitive workload. When the request load remains below \EPARA's maximum processing capacity, \EPARA fulfills requests with a probability exceeding 99.4\%. Conversely, for request loads surpassing this capacity, \EPARA maintains a stable throughput of at least 98.1\% of its maximum capacity. These results verify the effectiveness of \EPARA in edge AI inference serving.

The testbed result is shown in Fig.~\ref{fig:evaluation-testbed-overall-overall} and Fig.~\ref{fig:evaluation-testbed-overall}, that respectively present the overall and detailed goodput under five workloads.
\EPARA achi\-eves the best performance in average goodput for all workloads. Compared to InterEdge, AlpaServe, Galaxy and SERV-P, \EPARA improves the overall goodput by up to 2.1$\times$, 2.2$\times$, 2.5$\times$ and 3.2$\times$, respectively, for the mixed request workload, and 1.9$\times$, 2.2$\times$, 2.6$\times$ and 3.9$\times$, respectively, for the frequency-sensitive workload. When the workload remains below \EPARA's maximum goodput, \EPARA fulfills requests with a probability exceeding 99.4\%. Conversely, for workloads exceeding this goodput, \EPARA maintains a stable goodput of at least 98.1\% of its maximum value. These results verify the effectiveness and stability of \EPARA in edge clouds AI serving.

% \vspace{2pt}
\subsubsection{Other Testbed Performance}
~\\
\noindent{\bf Bluetooth devices:}
We leverage HC-05 bluetooth module with Xilinx Basys 3 as an independent edge device to connect to the edge server, and evaluate \EPARA's bluetooth processing capabilities.
The serial ports on both server and device are invoked to establish a bluetooth communication link. Subsequently, we measure the transmission time delay (from send beginning to receive ending) for files of varying sizes, as illustrated in the Fig.~\ref{fig:evaluation-testbed-embedded devices1}. Experimental results indicate that bluetooth data transmission exhibits a 105ms delay for 64B file and a 1039ms delay for 1KB file, making it suitable for text-related tasks with loose latency SLOs.

\noindent{\bf Accelerator cards:}
We employ Xilinx Alveo U50 as edge device accelerator cards integrated into \EPARA to evaluate \EPARA's collaborative computing capabilities~\cite{huangCLIOEnablingAutomatic2020}.
We deploy a part of VGG16~\cite{simonyanVeryDeepConvolutional2015} to U50~\cite{wangPipeCNNOpenCLbasedOpensource2017} based on different offloading points (conv2, conv4), and the remaining part is handled by edge server, as shown in Fig.~\ref{fig:evaluation-testbed-embedded devices2}. 
\EPARA considers this synergy as PP and is able to handle collaborative requests correctly. 

\noindent{\bf Resource monitor:}
We monitor \EPARA's resource utilization capability with Nvidia Nsight~\cite{nvidiaNVIDIANsightSystems2025} through two key metrics: memory resource (VRAM) and computing resource (\texttt{achieved\_occupancy} in Nsight)~\cite{shubhaUSHERHolisticInterference2024}. 
As shown in Fig.\ref{fig:evaluation-testbed-resource}, \EPARA achieves 95\%+ computing resource utilization and 98\%+ VRAM utilization.
These findings indicate that \EPARA can effectively utilize dispersed edge server resources.
This is mainly due to \EPARA using MF to increase batch size and leading datacenter work (AlpaServe), while significantly leading the work that does not use MT (Galaxy).

\subsection{Large-scale Simulations}\label{section:evaluation-simulation}

\noindent{\bf Comparisions}:
In addition to InterEdge~\cite{brownArchitectureEdgeNetworking2024}, AlpaServe~\cite{liAlpaServeStatisticalMultiplexing2023}, Galaxy~\cite{yeGalaxyResourceEfficientCollaborative2024a} and SERV-P~\cite{farhadiServicePlacementRequest2021}, we also compare \EPARA with two recent AI inference systems: USHER~\cite{shubhaUSHERHolisticInterference2024} and DeTransformer~\cite{weiCommunicationEfficientModelParallelism2024}. For SERV-P, we always combine servers into groups of 10 for scheduling, otherwise we cannot solve it within a feasible time.

\noindent{\bf Settings}:
We adopt an event-driven simulation architecture, which achieves up to 10$\times$ speed up compared to the discrete-time stepping approach.
Our simulator fully executes the request scheduling process but bypasses the actual execution of packet transmission, and model computations.
For specific strategies, transmission latency is simulated based on service-specific data volumes and network bandwidth, while computational latency is derived from lookup tables indexed by GPU and AI service, which are precomputed from our real-world experimental results. 

\noindent{\bf Large-scale performance:}
As illustrated in Fig.~\ref{fig:evaluation-simulation overall}, the simulation results demonstrate the goodput performance across varying numbers of servers equipped with eight P100 GPUs. For latency-sensitive requests, \EPARA achieves 1.5$\times$-2.0$\times$ higher goodput compared to baseline methods by simultaneously incorporating inter-server request handling and task-GPU allocation. For frequency-sensitive requests, \EPARA further enhances performance through multi-frame and DP, resulting in 2.8$\times$-3.1$\times$ superior goodput over conventional approaches. For mixed request scenarios, \EPARA maintains competitive advantages with 1.6$\times$-2.4$\times$ goodput improvements compared to existing methods.

%% resource monitor pic
\begin{figure}[!t]
	\centering
	\addtolength{\abovecaptionskip}{-13pt}
	\addtolength{\belowcaptionskip}{5pt}
    \includegraphics[width=0.95\columnwidth,trim=1.5cm 21.4cm 0.5cm 1.4cm, clip]{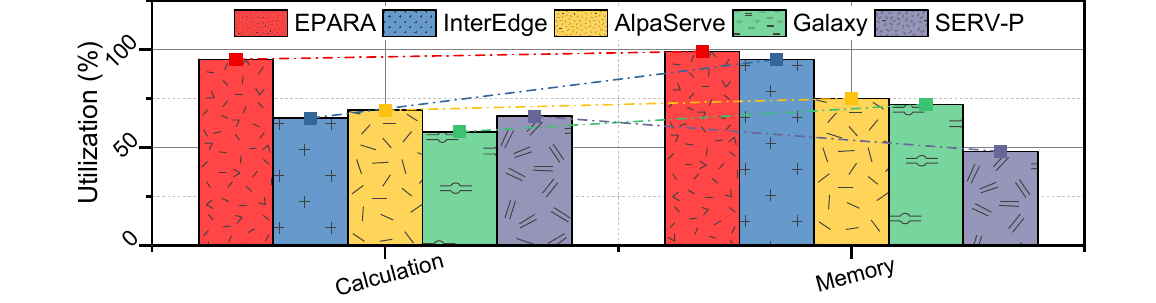}
	\caption{\EPARAbf resources monitor when serving mixed workloads with maximum goodput in real testbed.}\label{fig:evaluation-testbed-resource}
	\vspace{3.5pt}
\end{figure}

% large scale simulation
\begin{figure*}[thbp]
    \centering
    \begin{subfigure}[b]{0.2\textwidth}
        \centering
        \includegraphics[width=\textwidth,trim=0cm 0cm 5.5cm 0.7cm, clip]{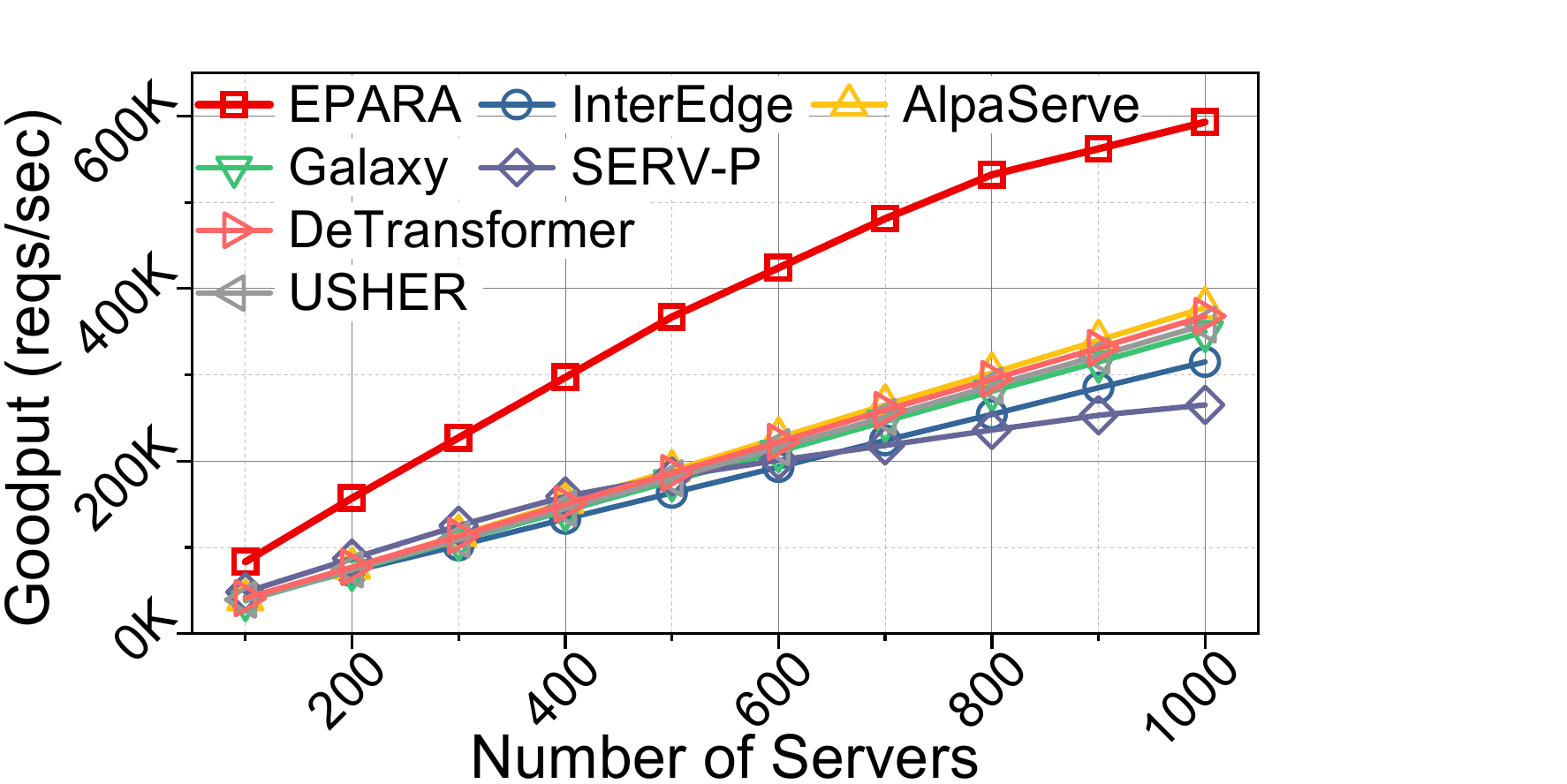}
        \captionsetup{font=footnotesize, justification=centering}
        \vspace{-0.5cm}
        \caption{Mixed request.}
        \label{fig:simulation-overall-mix}
    \end{subfigure}
    ~~
    \begin{subfigure}[b]{0.2\textwidth}
        \centering
        \includegraphics[width=\textwidth,trim=0cm 0cm 5.5cm 0.7cm, clip]{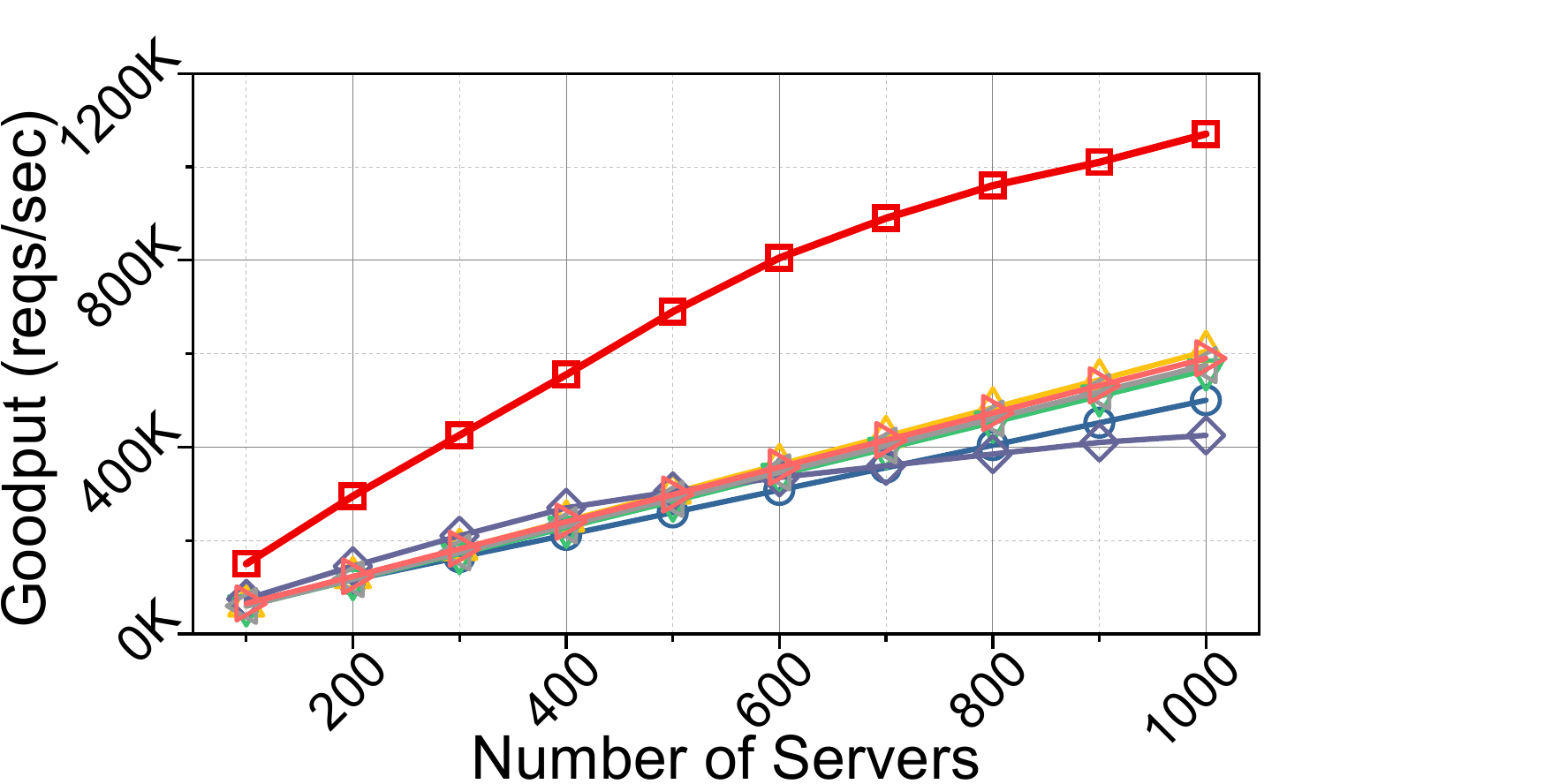}
        % \captionsetup{font={\footnotesize,bf,stretch=1.25}}
        \captionsetup{font=footnotesize, justification=centering}
        \vspace{-0.5cm}
        \caption{<1 GPU frequency-sensitive.}
        \label{fig:simulation-overall-lessfrequency}
    \end{subfigure}
    ~~
    \begin{subfigure}[b]{0.2\textwidth}
        \centering
        \includegraphics[width=\textwidth,trim=0cm 0cm 5.5cm 0.7cm, clip]{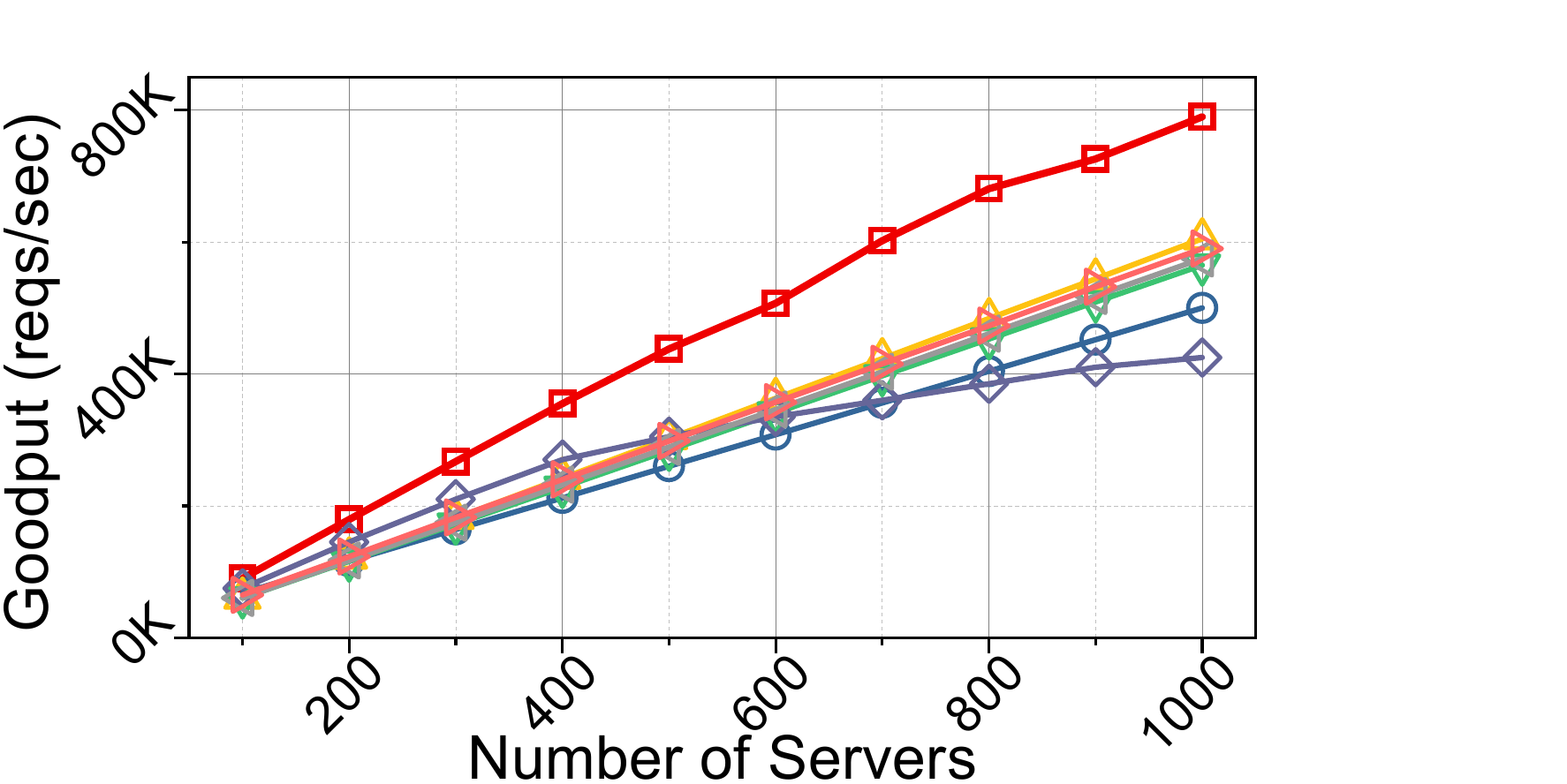}
        % \captionsetup{font={\footnotesize,bf,stretch=1.25}}
        \captionsetup{font=footnotesize, justification=centering}
        \vspace{-0.5cm}
        \caption{<1 GPU latency-sensitive.}
        \label{fig:simulation-overall-lesslatency}
    \end{subfigure}
    ~~
    \begin{subfigure}[b]{0.2\textwidth}
        \centering
        \includegraphics[width=\textwidth,trim=0cm 0cm 5.5cm 0.7cm, clip]{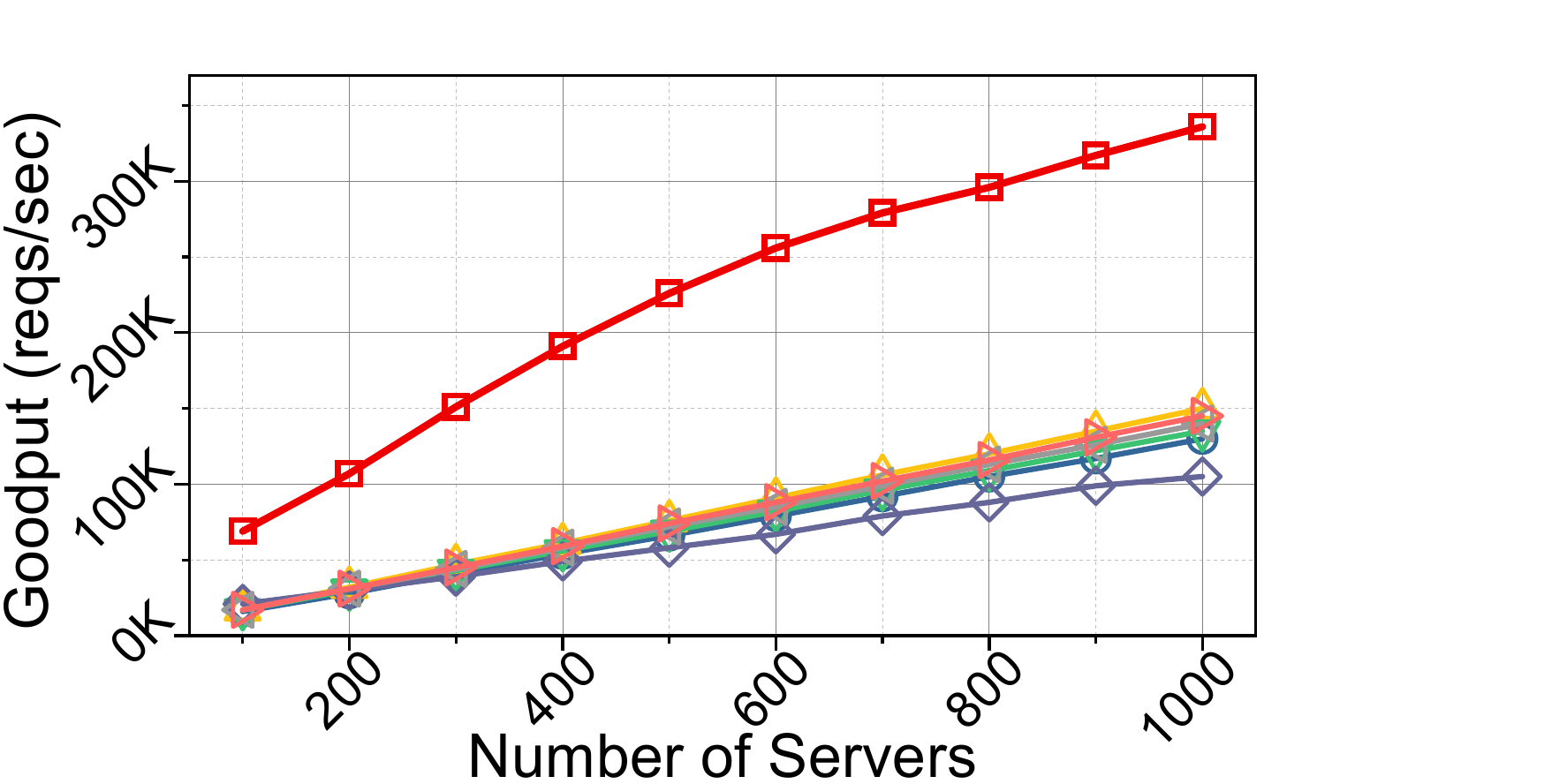}
        % \captionsetup{font={\footnotesize,bf,stretch=1.25}}
        \captionsetup{font=footnotesize, justification=centering}
        \vspace{-0.5cm}
        \caption{>1 GPU frequency-sensitive.}
        \label{fig:simulation-overall-morefrequency}
    \end{subfigure}
    ~~
    \begin{subfigure}[b]{0.2\textwidth}
        \centering
        \includegraphics[width=\textwidth,trim=0cm 0cm 5.5cm 0.7cm, clip]{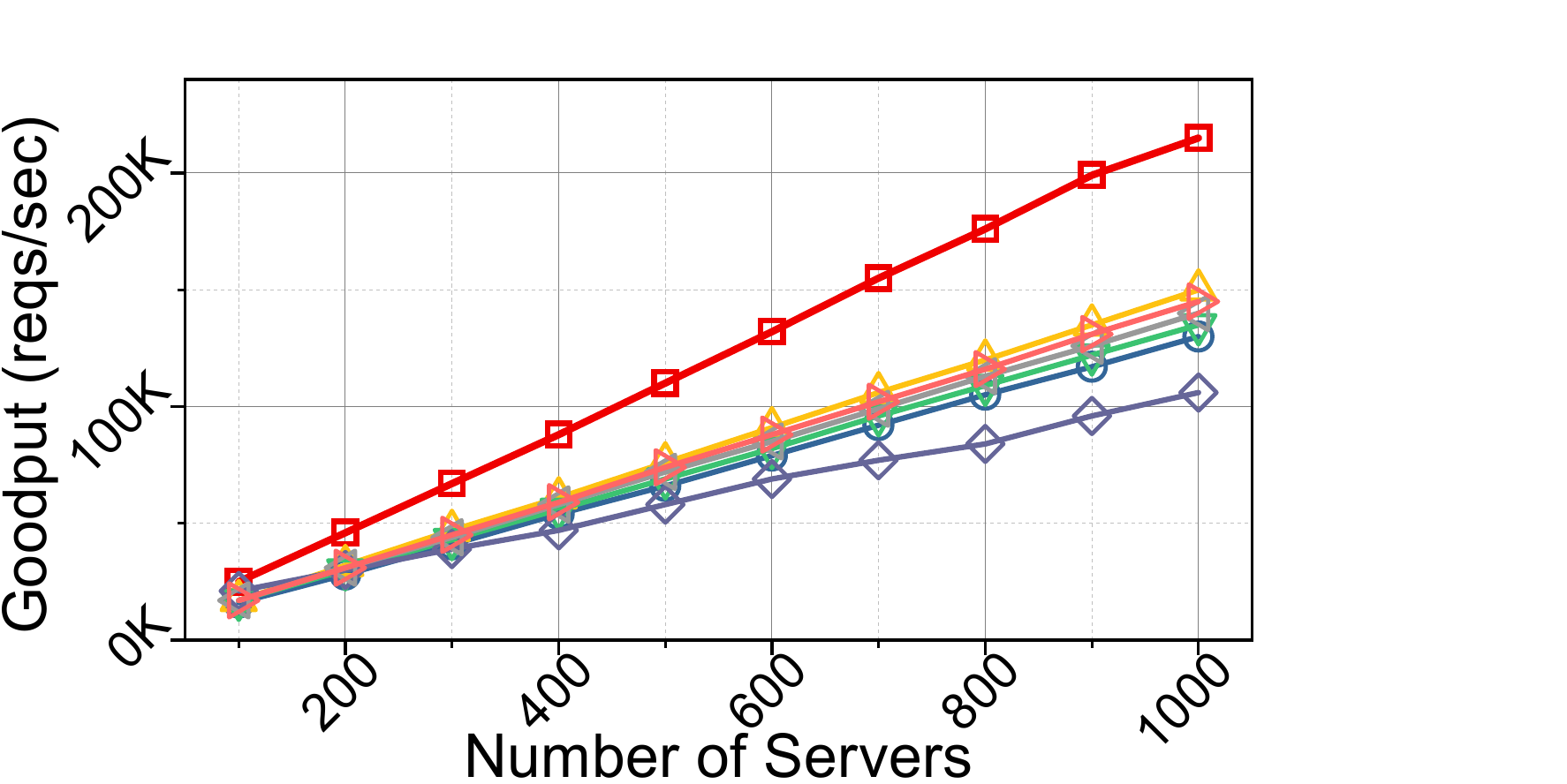}
        % \captionsetup{font={\footnotesize,bf,stretch=1.25}}
        \captionsetup{font=footnotesize, justification=centering}
        \vspace{-0.5cm}
        \caption{>1 GPU latency-sensitive.}
        \label{fig:simulation-overall-morelatency}
    \end{subfigure}

    % \vspace{-0.3cm}
    \caption{Large-scale goodput of comparison methods under different workloads.}
    \label{fig:evaluation-simulation overall}
\end{figure*}
% non-fixed simulation
\begin{figure*}[thbp]
    \centering
    % \hspace*{2pt}
    \begin{subfigure}[b]{0.2\textwidth}
        \centering
        \includegraphics[width=\textwidth,trim=0.72cm 0cm 4.78cm 0.5cm, clip]{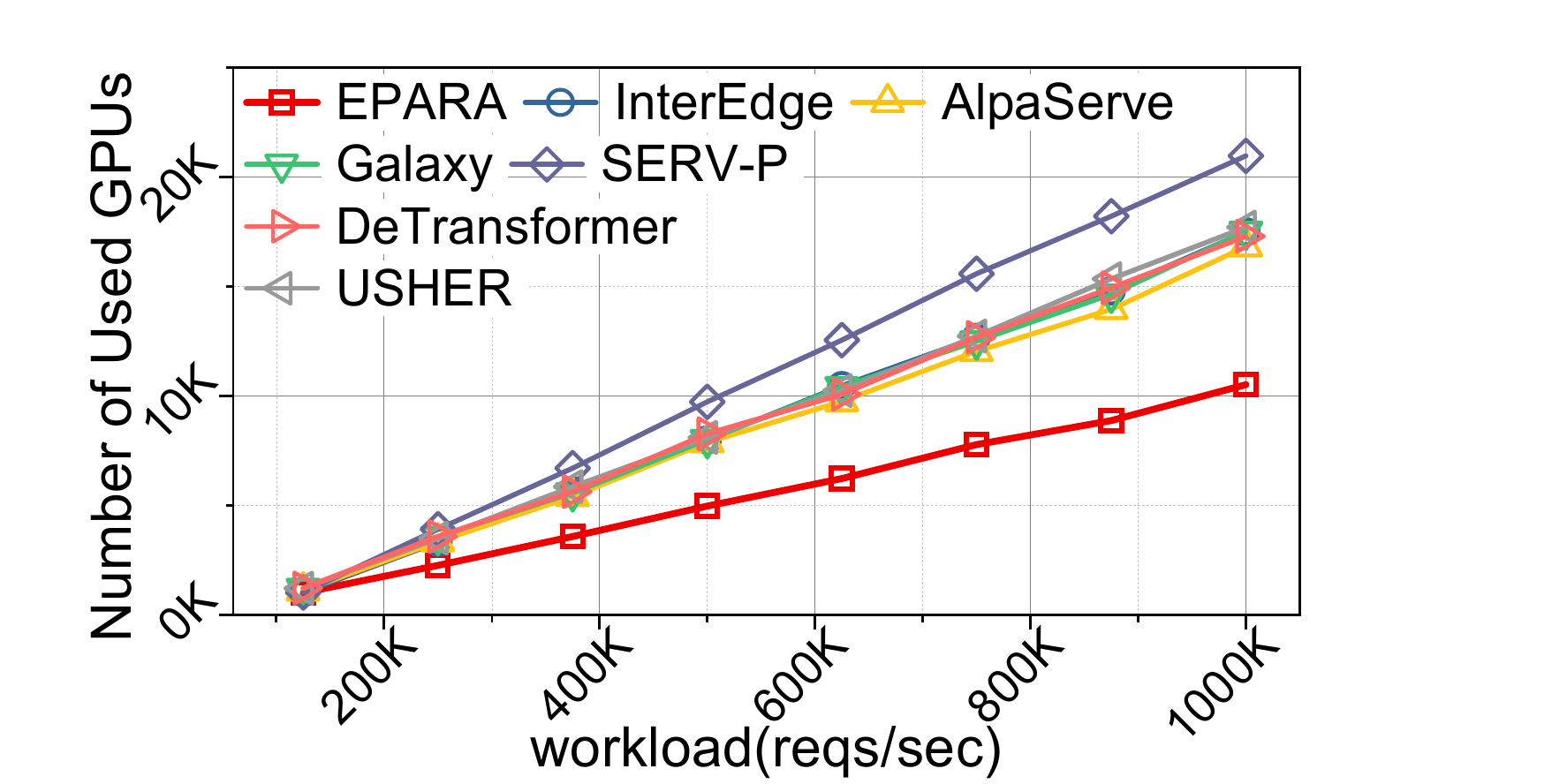}
        \captionsetup{font=footnotesize, justification=centering}
        \vspace{-0.5cm}
        \caption{Mixed request.}
        \label{fig:simulation-overall-mix}
    \end{subfigure}
    ~~
    \begin{subfigure}[b]{0.2\textwidth}
        \centering
        \includegraphics[width=\textwidth,trim=1.2cm 0cm 4.3cm 0.5cm, clip]{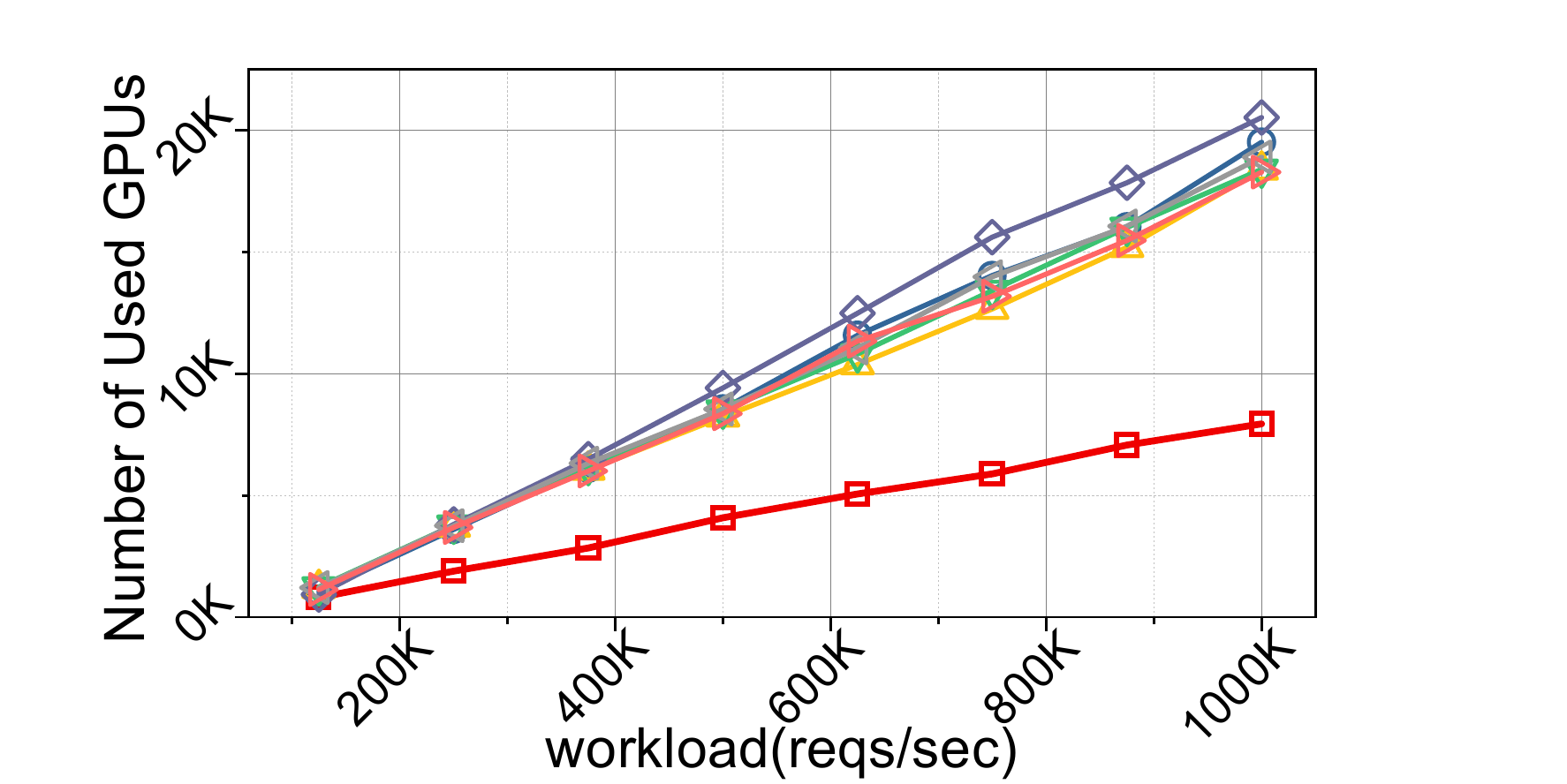}
        % \captionsetup{font={\footnotesize,bf,stretch=1.25}}
        \captionsetup{font=footnotesize, justification=centering}
        \vspace{-0.5cm}
        \caption{<1 GPU frequency-sensitive.}
        \label{fig:simulation-overall-lessfrequency}
    \end{subfigure}
    ~~
    \begin{subfigure}[b]{0.2\textwidth}
        \centering
        \includegraphics[width=\textwidth,trim=1.2cm 0cm 4.3cm 0.5cm, clip]{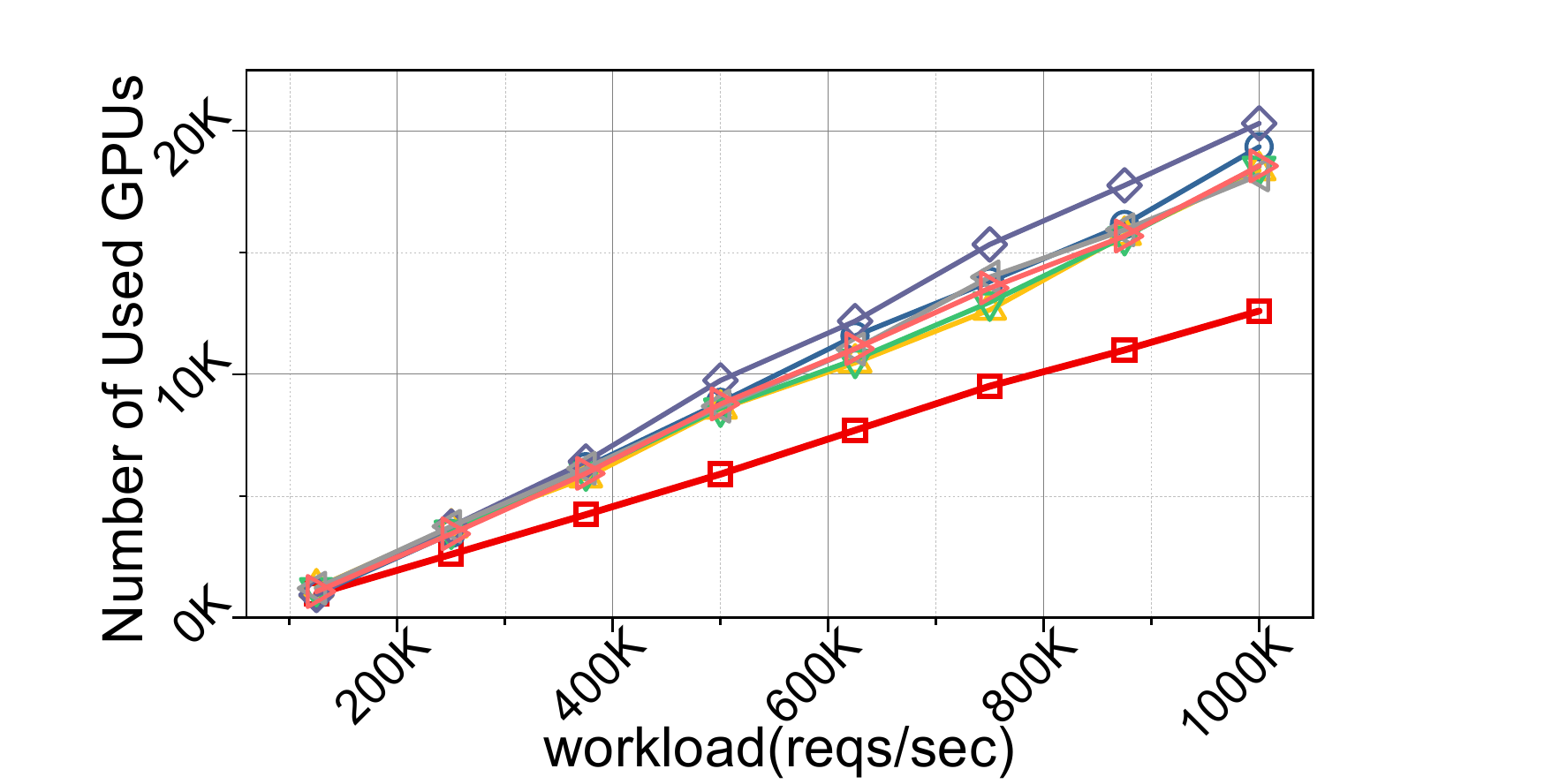}
        % \captionsetup{font={\footnotesize,bf,stretch=1.25}}
        \captionsetup{font=footnotesize, justification=centering}
        \vspace{-0.5cm}
        \caption{<1 GPU latency-sensitive.}
        \label{fig:simulation-overall-lesslatency}
    \end{subfigure}
    ~~
    \begin{subfigure}[b]{0.2\textwidth}
        \centering
        \includegraphics[width=\textwidth,trim=1.5cm 0cm 4cm 0.5cm, clip]{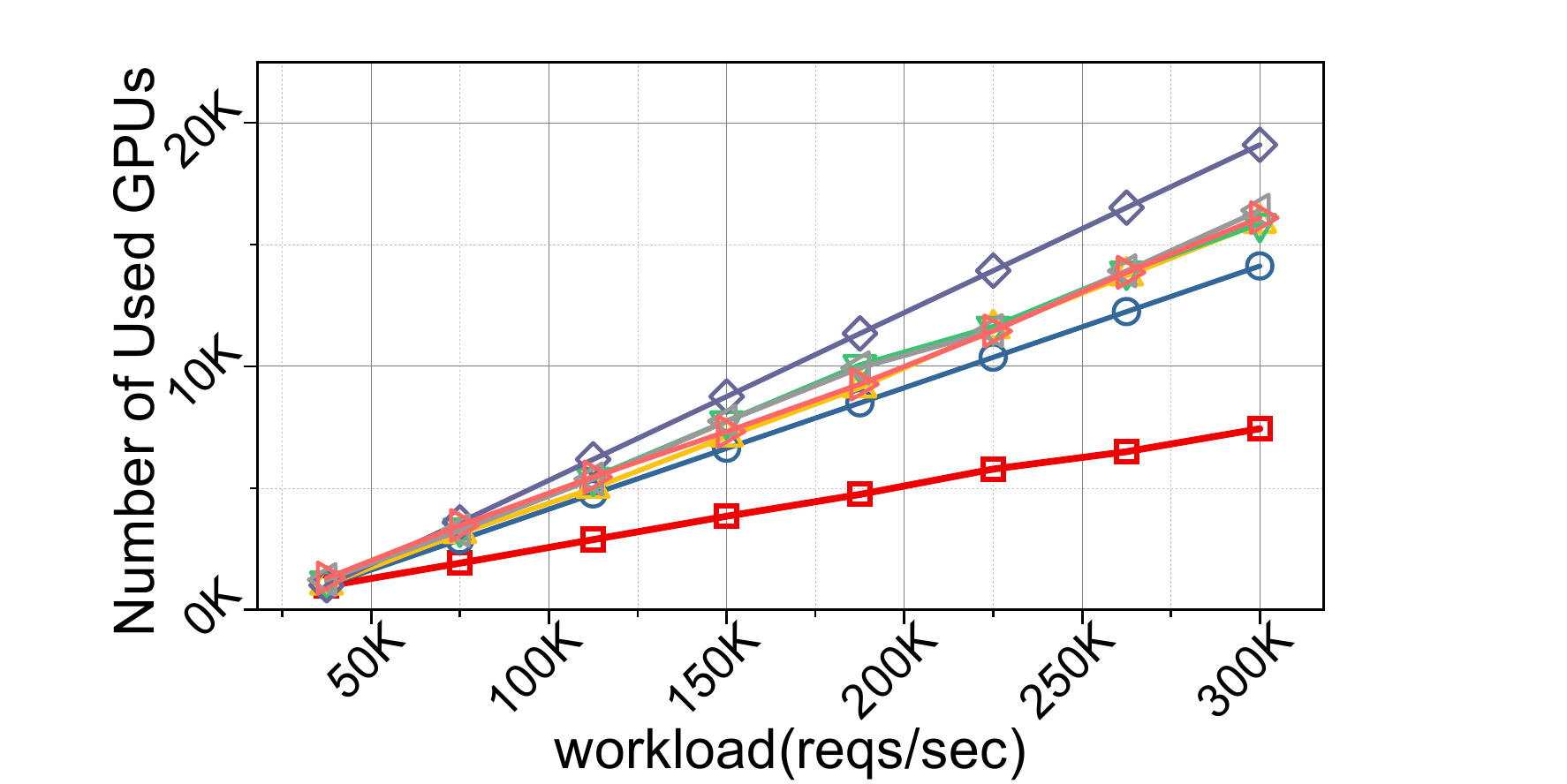}
        % \captionsetup{font={\footnotesize,bf,stretch=1.25}}
        \captionsetup{font=footnotesize, justification=centering}
        \vspace{-0.5cm}
        \caption{>1 GPU frequency-sensitive.}
        \label{fig:simulation-overall-morefrequency}
    \end{subfigure}
    ~~
    \begin{subfigure}[b]{0.2\textwidth}
        \centering
        \includegraphics[width=\textwidth,trim=1.37cm 0cm 4.13cm 0.5cm, clip]{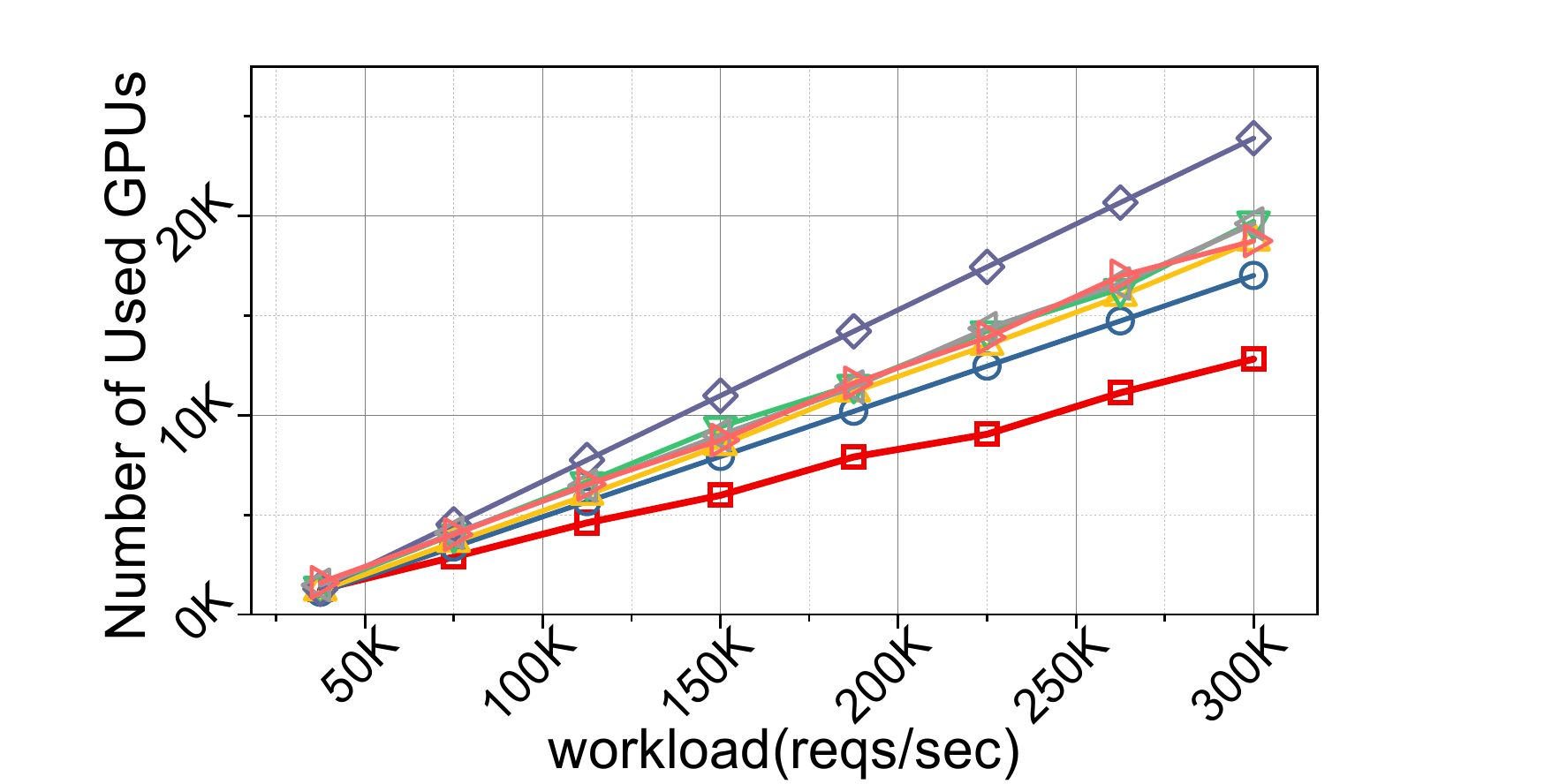}
        % \captionsetup{font={\footnotesize,bf,stretch=1.25}}
        \captionsetup{font=footnotesize, justification=centering}
        \vspace{-0.5cm}
        \caption{>1 GPU latency-sensitive.}
        \label{fig:simulation-overall-morelatency}
    \end{subfigure}

    % \vspace{-0.3cm}
    \caption{Used GPUs of comparison methods in non-fixed clusters under different workloads.}
    \label{fig:evaluation-simulation non-fixed}
\end{figure*}

\noindent{\bf Non-fixed cluster performance:}
We consider a scenario where each server possesses the capability to scale up GPU resources, which represents a common practice in cloud computing environments. Through empirical evaluation, we quantify the GPU requirements of different methods for handling user requests at identical scales. Fig.~\ref{fig:evaluation-simulation non-fixed} shows the number of used GPUs in simulation to complete all inference requests within their SLOs. \EPARA requires 1.5$\times$-2.6$\times$ fewer GPUs because \EPARA can schedule tasks across servers and efficiently execute category-based task parallelism, avoiding situations where a single server needs to significantly increase GPUs to handle peak-time requests.

% effect of allocator
\begin{figure*}[thbp]
    \vspace{2pt}
	\centering
	\abovecaptionskip=0pt
	\belowcaptionskip=-10pt
	\includegraphics[width=\textwidth]{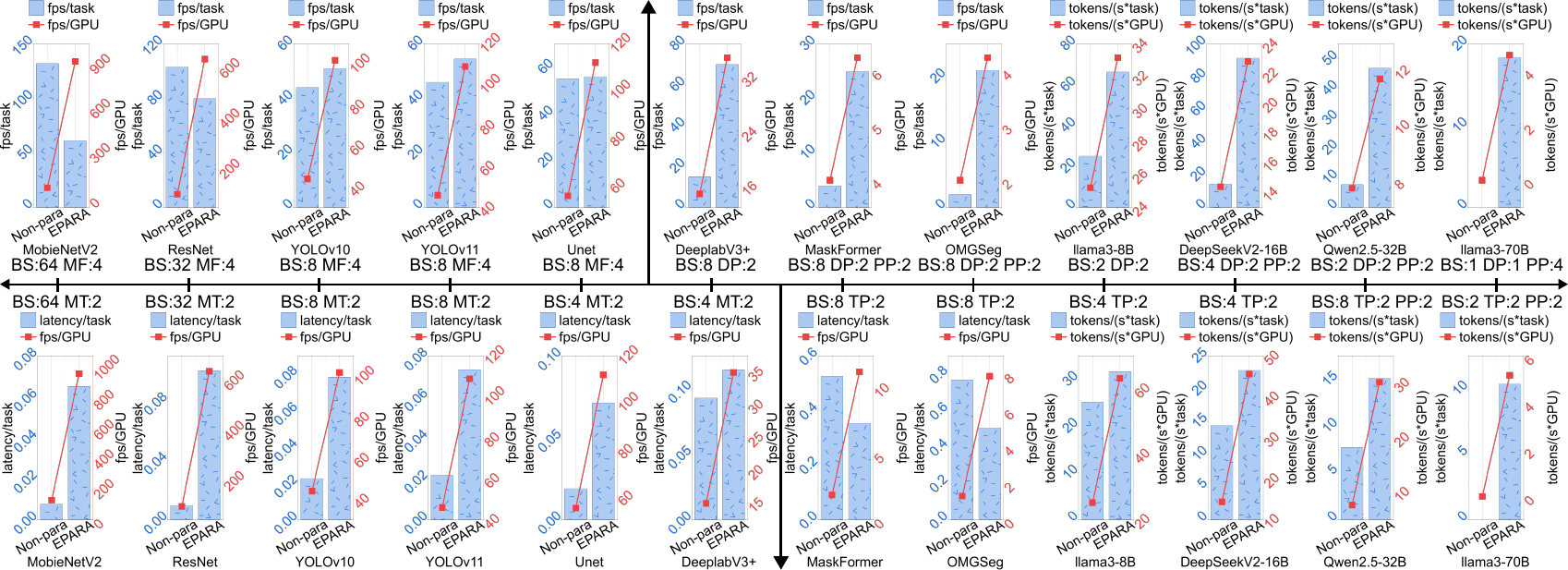}
	\caption{Effect of \EPARAbf parallelism allocator in real testbed.}
	\label{fig:evaluation-deepdive-allocator}
	% \vspace{-15pt}
    \vspace{-2pt}
\end{figure*}

\subsection{EPARA Deep Dive}\label{section:evaluation-deepdive}
\subsubsection{Effectiveness of \EPARAit's Design Components.}\label{section:evaluation-deepdive-effect}
~\\
\noindent{\bf Effect of parallelism allocator:}
We validate the effect of task-categorized parallelism allocator discussed in \cref{section:design-parallelism} in real testbed. As demonstrated in Fig.~\ref{fig:evaluation-deepdive-allocator}, we categorize tasks in Table~\ref{table:evaluation-models} into four distinct classes aligning with those defined in Fig.~\ref{fig:design-task categorized}. \EPARA maintains frequency-sensitive SLOs across two dimensions: frame rate stabilization at 60-100 fps for vision tasks, and service frequency regulation at 10-30 tokens/s (with vllm~\cite{kwonEfficientMemoryManagement2023a}) for generative AI workloads. For all task categories, \EPARA consistently enhances per-GPU service processing capacity by up to 12.4$\times$ compared to non-parallelism deployment.
We analyze each of the four categories in detail as follows.
First, for <1 GPU frequency tasks (e.g., MobileNetV2 video), \EPARA allocator effectively constrains performance beyond SLOs through multi-task, multi-frame and batching, thereby achieving 5.9$\times$-12.4$\times$ higher goodput per GPU.
Second, for >1 GPU frequency tasks (e.g., LLaMA3-70B HCI), \EPARA allocator orchestrates DP, MP, batching, multi-task, and multi-frame (here we set MF equal to BS), achieving 1.3$\times$-2.5$\times$ higher goodput per GPU.
Third, for <1 GPU latency tasks, \EPARA allocator ensures that tasks are processed within SLOs through multi-task and batching, achieving 2.3$\times$-9.1$\times$ higher goodput per GPU. 
Fourth, for >1 GPU latency tasks, \EPARA guarantees end-to-end execution latencies within predefined SLOs by implementing MP, BS and multi-task, achieving 2.9$\times$-4.5$\times$ higher goodput per GPU.

\noindent{\bf Effect and overhead of request handling:}
First, we examine the effect of request handling in \EPARA. Assuming that all computations in \EPARA must be completed at the first hop upon receiving a request without offloading. We compare this scenario with standard \EPARA implementation, as illustrated in Fig.~\ref{fig:evaluation-deepdive-effect-offloading}. Request handling enhances performance by 2.2$\times$–2.4$\times$ for <1 GPU tasks and 2.9$\times$–3.1$\times$ for >1 GPU tasks.
Second, regarding handling latency, each handling operation comprises scheduling latency and offloading latency. For scheduling latency, simulation experiments under varying edge server scales (up to 10k nodes) reveal that \EPARA's scheduling latency remains below 20ms. For offloading transmission latency, it mainly depends on task payload and bandwidth. Through direct-connection TCP protocol in our testbed, network transmission latency remains under 5ms when bandwidth exceeds 100Mbps, demonstrating \EPARA's capability to maintain efficient task allocation and request handling without requiring high bandwidth datacenter network.

\noindent{\bf Effect and overhead of service placement:}
In \cref{section:design-placement}, we adopt a submodular function placement strategy. First, its effect is validated in comparison with LRU, LFU, and MFU. An effective placement should align services with requests, thereby maximizing goodput. As demonstrated in Fig.~\ref{fig:evaluation-deepdive-effect-placementgoodput}, \EPARA's placement strategy achieves up to 1.9$\times$ improvements in overall goodput. 
Second, experimental evaluations are conducted to analyze scheduling latency under varying numbers of edge servers. 
% Notably, the total placement duration should be bounded by the maximum individual task placement time at the lower limit and the sum of scheduling latency and maximum task placement time at the upper limit. 
Fig.~\ref{fig:evaluation-deepdive-effect-placementlatency} reveals that \EPARA's scheduling latency for a single placement remains under 200ms with less than 10k servers.

\begin{figure*}[thbp]
    \centering
    \begin{subfigure}[b]{0.2\textwidth}
        \centering
        \includegraphics[width=\textwidth,trim=0.65cm 17cm 4.8cm 1.5cm, clip]{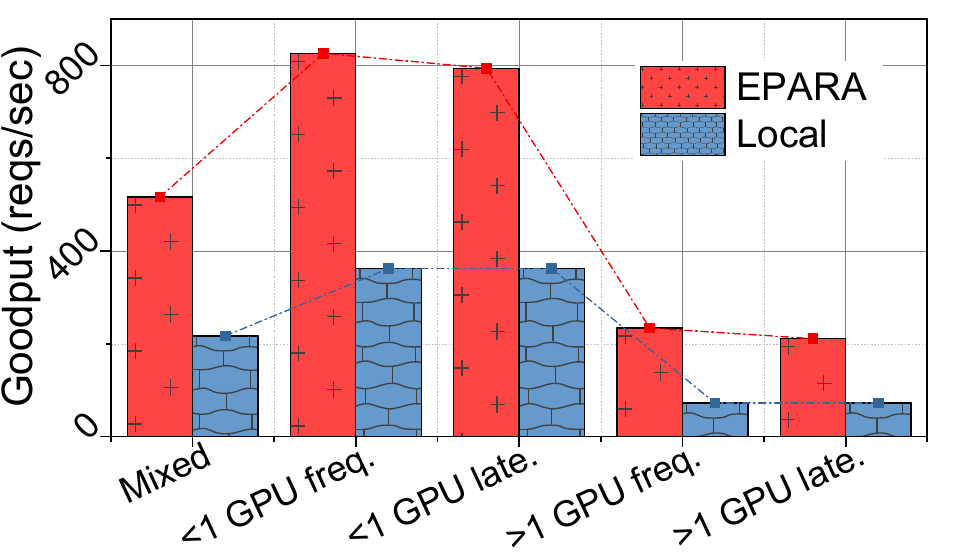}
        % \captionsetup{font={\footnotesize,bf,stretch=1.25}}
        \captionsetup{font=footnotesize, justification=centering}
        \vspace{-0.5cm}
        \caption{Effect of offloading.}\label{fig:evaluation-deepdive-effect-offloading}
    \end{subfigure}
    ~~
    \begin{subfigure}[b]{0.2\textwidth}
        \centering
        \includegraphics[width=\textwidth,trim=0cm 0.1cm 6.6cm 0.8cm, clip]{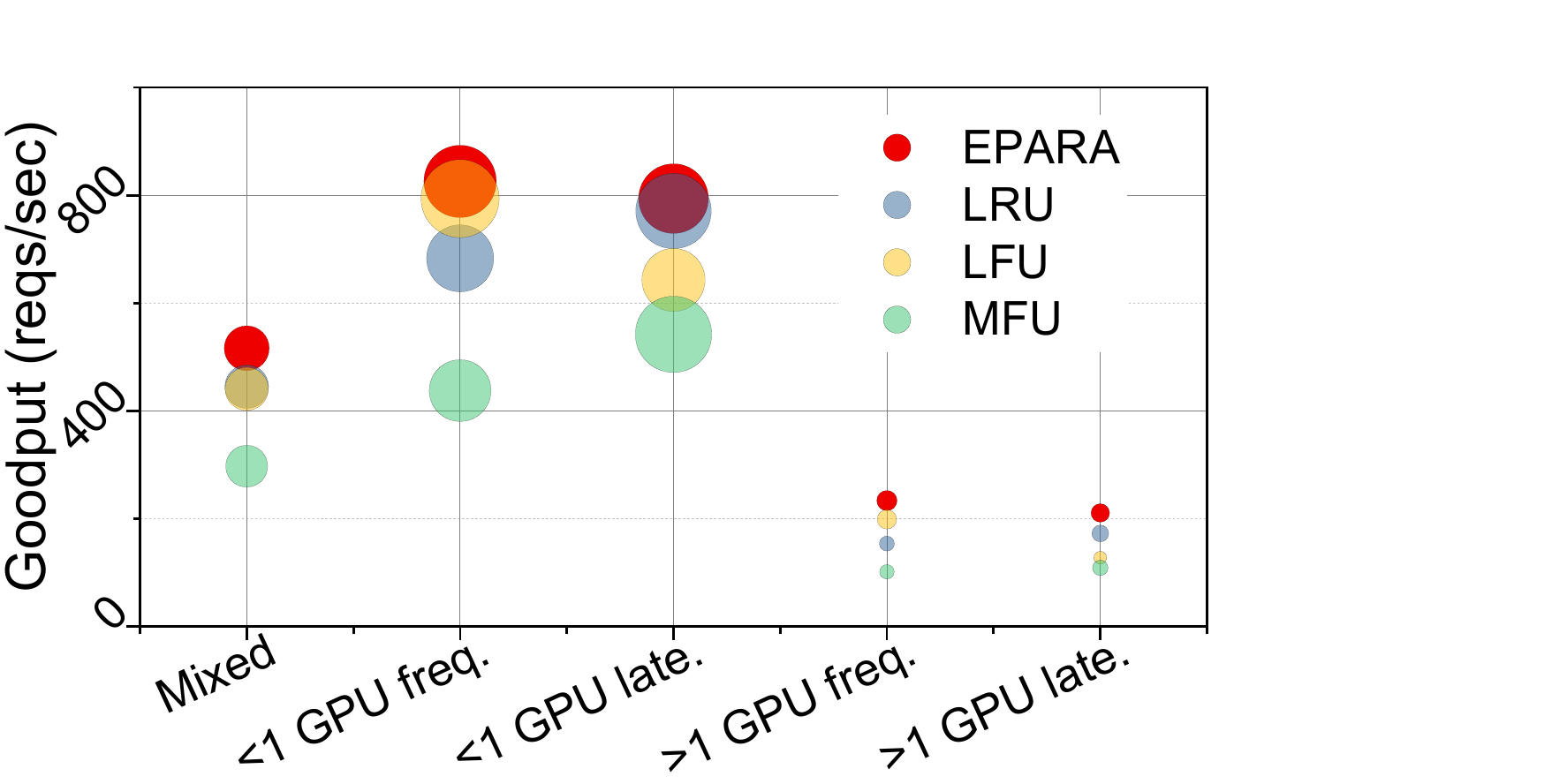}
        % \captionsetup{font={\footnotesize,bf,stretch=1.25}}
        \captionsetup{font=footnotesize, justification=centering}
        \vspace{-0.5cm}
        \caption{Effect of placement.}\label{fig:evaluation-deepdive-effect-placementgoodput}
    \end{subfigure}
    ~~
    \begin{subfigure}[b]{0.2\textwidth}
        \centering
        \includegraphics[width=\textwidth,trim=1.6cm 0cm 4cm 0cm, clip]{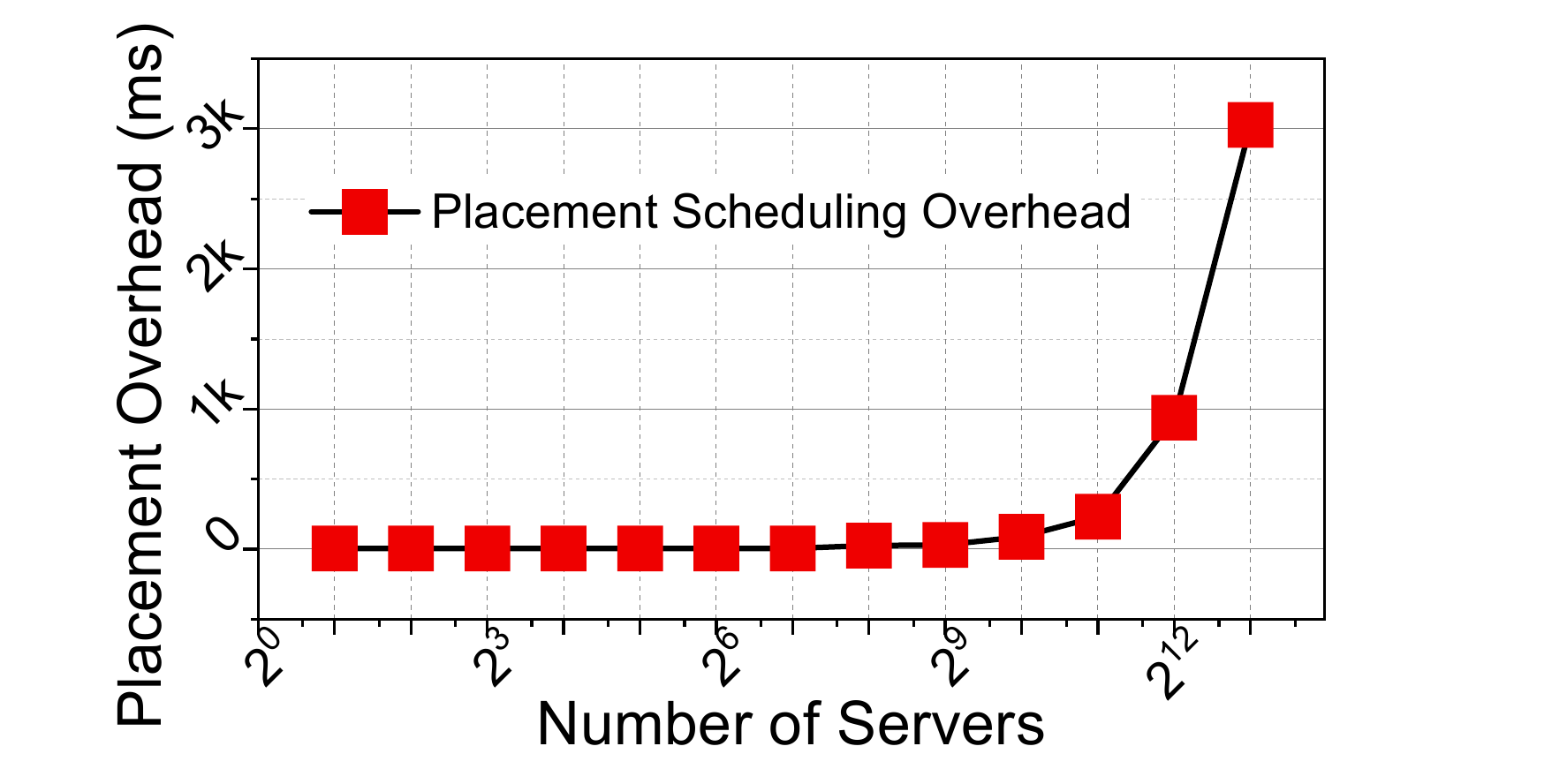}
        % \captionsetup{font={\footnotesize,bf,stretch=1.25}}
        \captionsetup{font=footnotesize, justification=centering}
        \vspace{-0.5cm}
        \caption{Overhead of placement.}\label{fig:evaluation-deepdive-effect-placementlatency}
    \end{subfigure}
    ~~
    \begin{subfigure}[b]{0.2\textwidth}
        \centering
        \includegraphics[width=\textwidth,trim=0cm 0cm 1.3cm 0.1cm, clip]{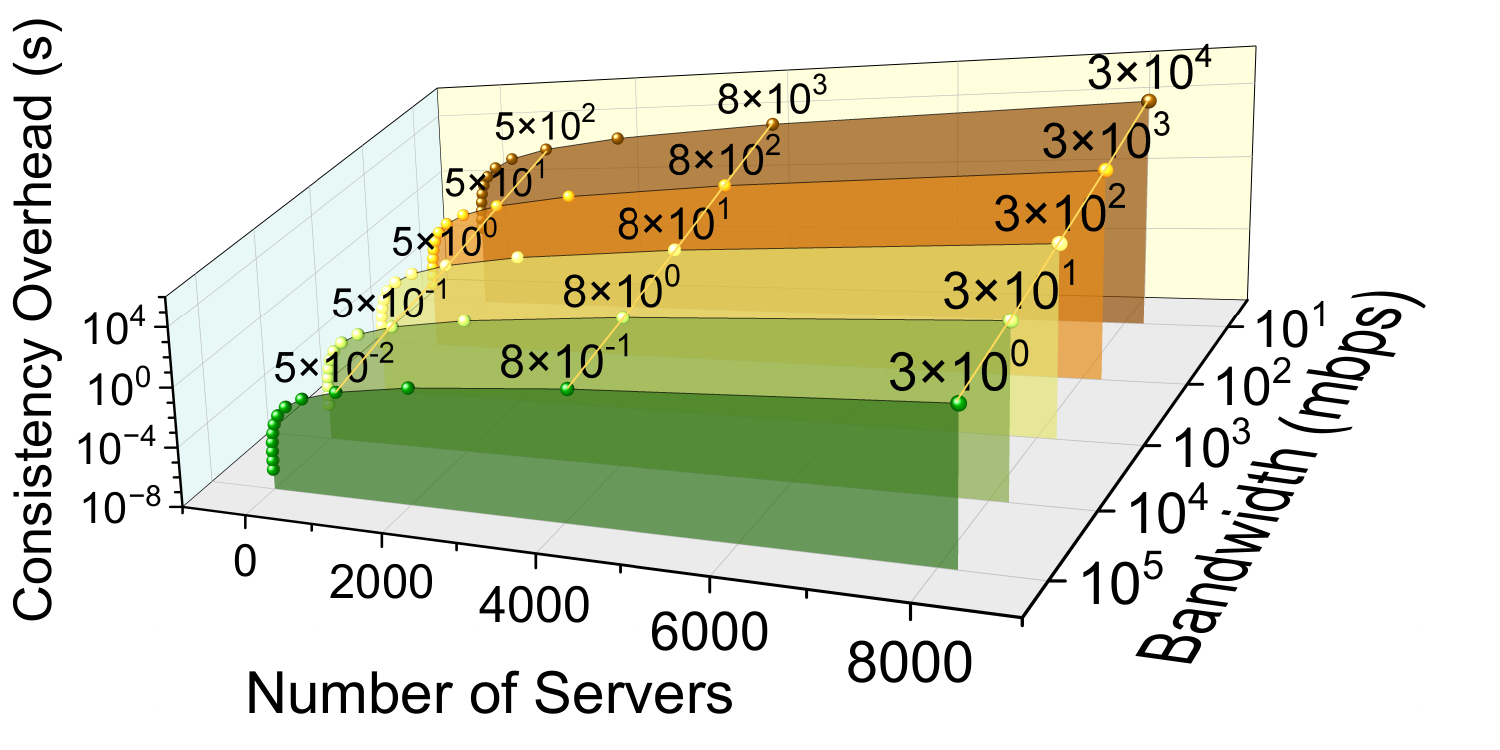}
        % \captionsetup{font={\footnotesize,bf,stretch=1.25}}
        \captionsetup{font=footnotesize, justification=centering}
        \vspace{-0.5cm}
        \caption{Overhead of synchronization.}\label{fig:evaluation-deepdive-effect-synchronizationoverhead}
    \end{subfigure}
    ~~
    \begin{subfigure}[b]{0.2\textwidth}
        \centering
        \includegraphics[width=\textwidth,trim=0.35cm 17cm 5.05cm 1.5cm, clip]{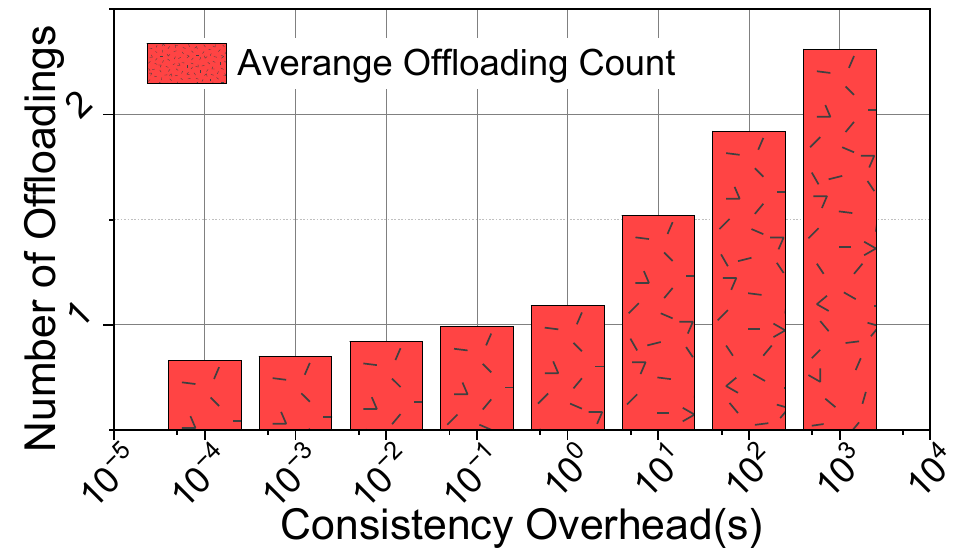}
        \captionsetup{font=footnotesize, justification=centering}
        \vspace{-0.5cm}
        \caption{Impact of synchronization.}\label{fig:evaluation-deepdive-effect-synchronizationoffload}
        % \caption{Impact of synchronization overhead.}
    \end{subfigure}
    % \vspace{-0.3cm}
    \caption{Effect of \EPARAbf handling-synchronization-placement system.}\label{fig:evaluation-deepdive-epara system}
    % \vspace{-3pt}
\end{figure*}

%% extreme pic 
\begin{figure*}[!t]
    \centering
    \begin{subfigure}[b]{0.2\textwidth}
        \centering
        \includegraphics[width=\textwidth,trim=0cm 0.5cm 6cm 0.3cm, clip]{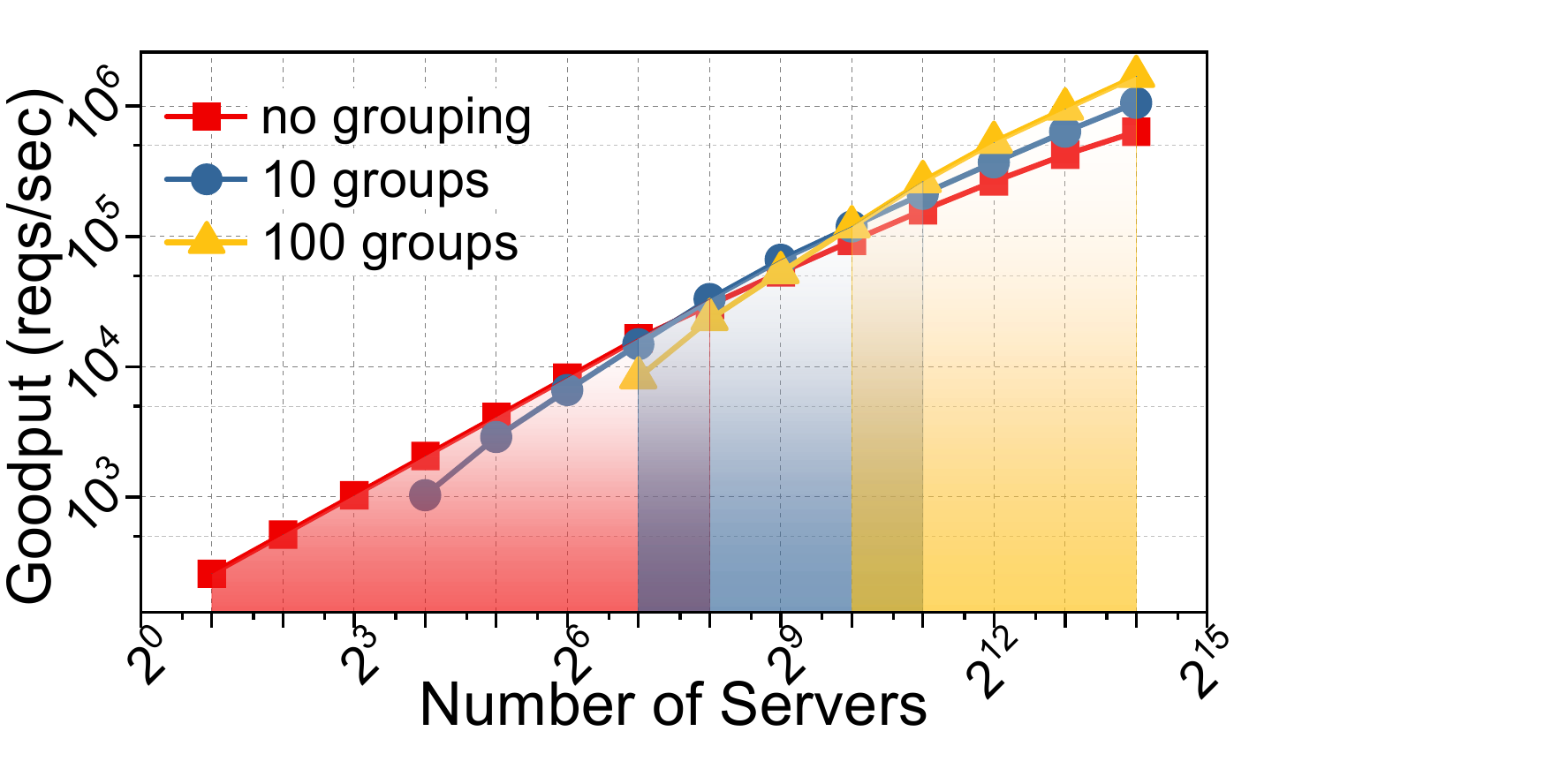}
        % \captionsetup{font={\footnotesize,bf,stretch=1.25}}
        \captionsetup{font=footnotesize, justification=centering}
        \vspace{-0.5cm}
        \caption{Goodput of abundant servers.}
        \label{fig:evaluation-deepdive-extreme-servergoodput}
    \end{subfigure}
    ~~
    \begin{subfigure}[b]{0.2\textwidth}
        \centering
        \includegraphics[width=\textwidth,trim=0cm 0.5cm 6cm 0cm, clip]{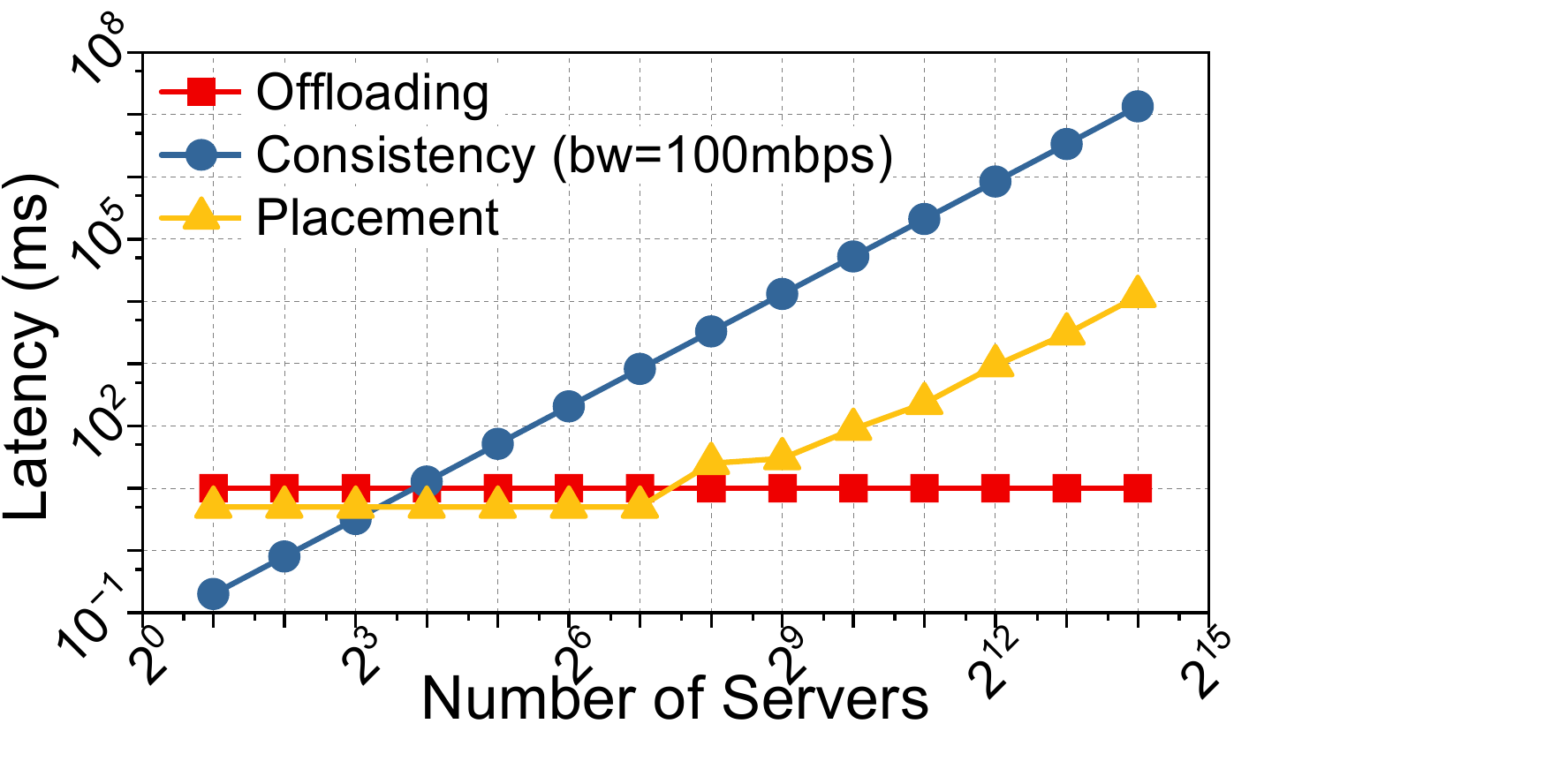}
        % \captionsetup{font={\footnotesize,bf,stretch=1.25}}
        \captionsetup{font=footnotesize, justification=centering}
        \vspace{-0.5cm}
        \caption{Latency of abundant servers.}
        \label{fig:evaluation-deepdive-extreme-serverlatency}
    \end{subfigure}
    ~~
    \begin{subfigure}[b]{0.2\textwidth}
        \centering
        \includegraphics[width=\textwidth,trim=0.7cm 17.4cm 4.75cm 1.1cm, clip]{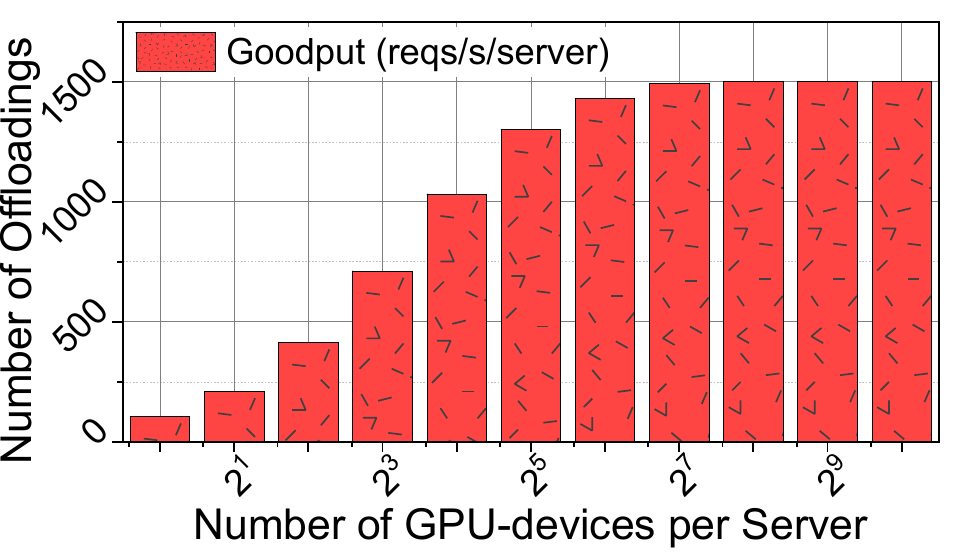}
        % \captionsetup{font={\footnotesize,bf,stretch=1.25}}
        \captionsetup{font=footnotesize, justification=centering}
        \vspace{-0.5cm}
        \caption{Goodput of abundant devices.}
        \label{fig:evaluation-deepdive-extreme-devicegoodput}
    \end{subfigure}
    ~~
    \begin{subfigure}[b]{0.2\textwidth}
        \centering
        \includegraphics[width=\textwidth,trim=0cm 0.3cm 5.8cm 0.2cm, clip]{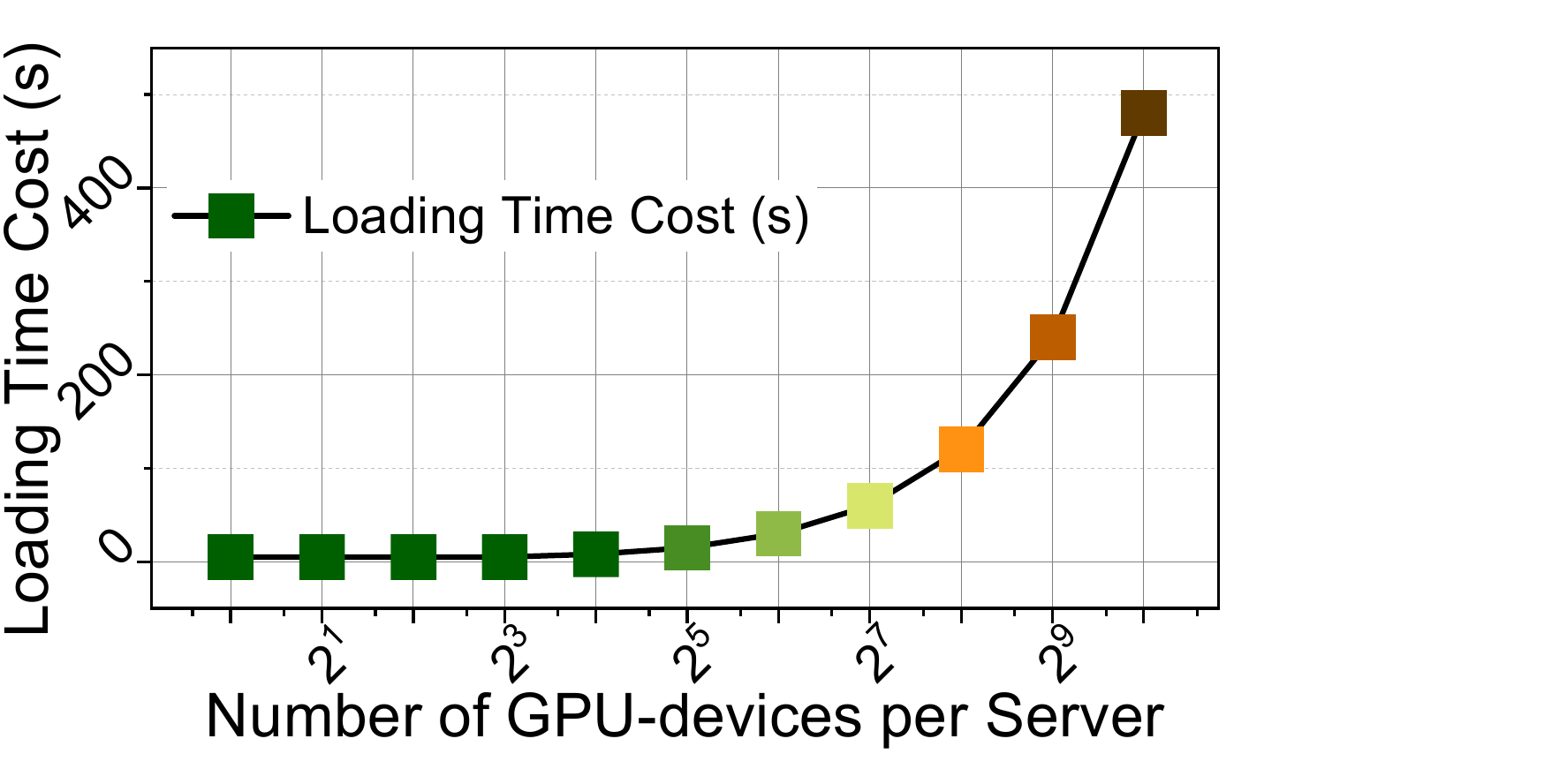}
        % \captionsetup{font={\footnotesize,bf,stretch=1.25}}
        \captionsetup{font=footnotesize, justification=centering}
        \vspace{-0.5cm}
        \caption{Loading latency of devices.}
        \label{fig:evaluation-deepdive-extreme-devicelatency}
    \end{subfigure}
    ~~
    \begin{subfigure}[b]{0.2\textwidth}
        \centering
        \includegraphics[width=\textwidth,trim=0.68cm 17.2cm 4.65cm 1cm, clip]{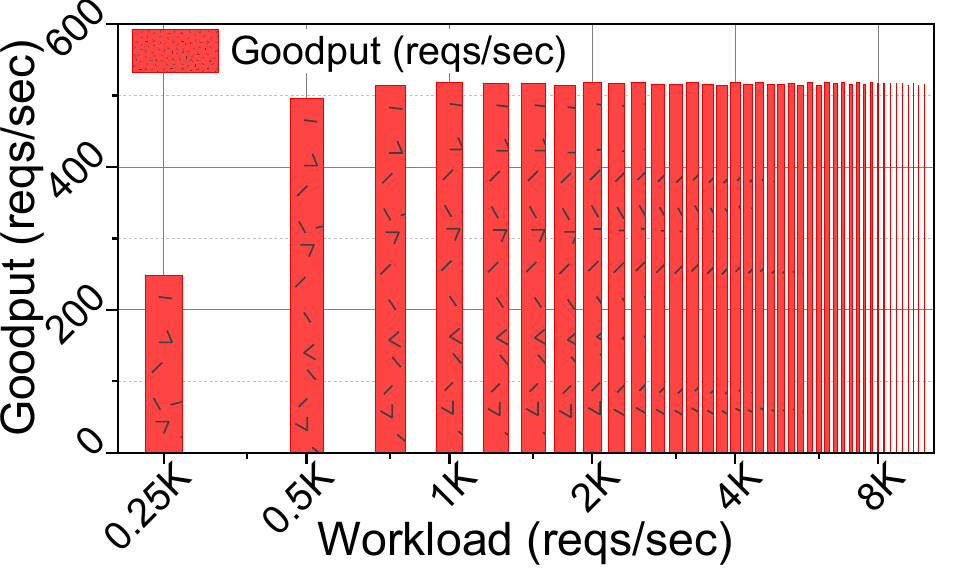}
        \captionsetup{font=footnotesize, justification=centering}
        \vspace{-0.5cm}
        \caption{GPU-sparse system.}
        \label{fig:evaluation-deepdive-extreme-gpusparse}
    \end{subfigure}
    % \vspace{-0.3cm}
    \caption{Handling some extreme cases in \EPARAbf.}
    \label{fig:evaluation-deepdive-extreme cases}
\end{figure*}
% \vspace{2pt}

%% error handling double figure
\begin{figure}[!t]
    \vspace{-0.5pt}
    \centering
    \begin{subfigure}[thbp]{0.24\textwidth}
        \centering
        \includegraphics[width=\textwidth,trim=0cm 0.10cm 3.5cm 0.5cm, clip]{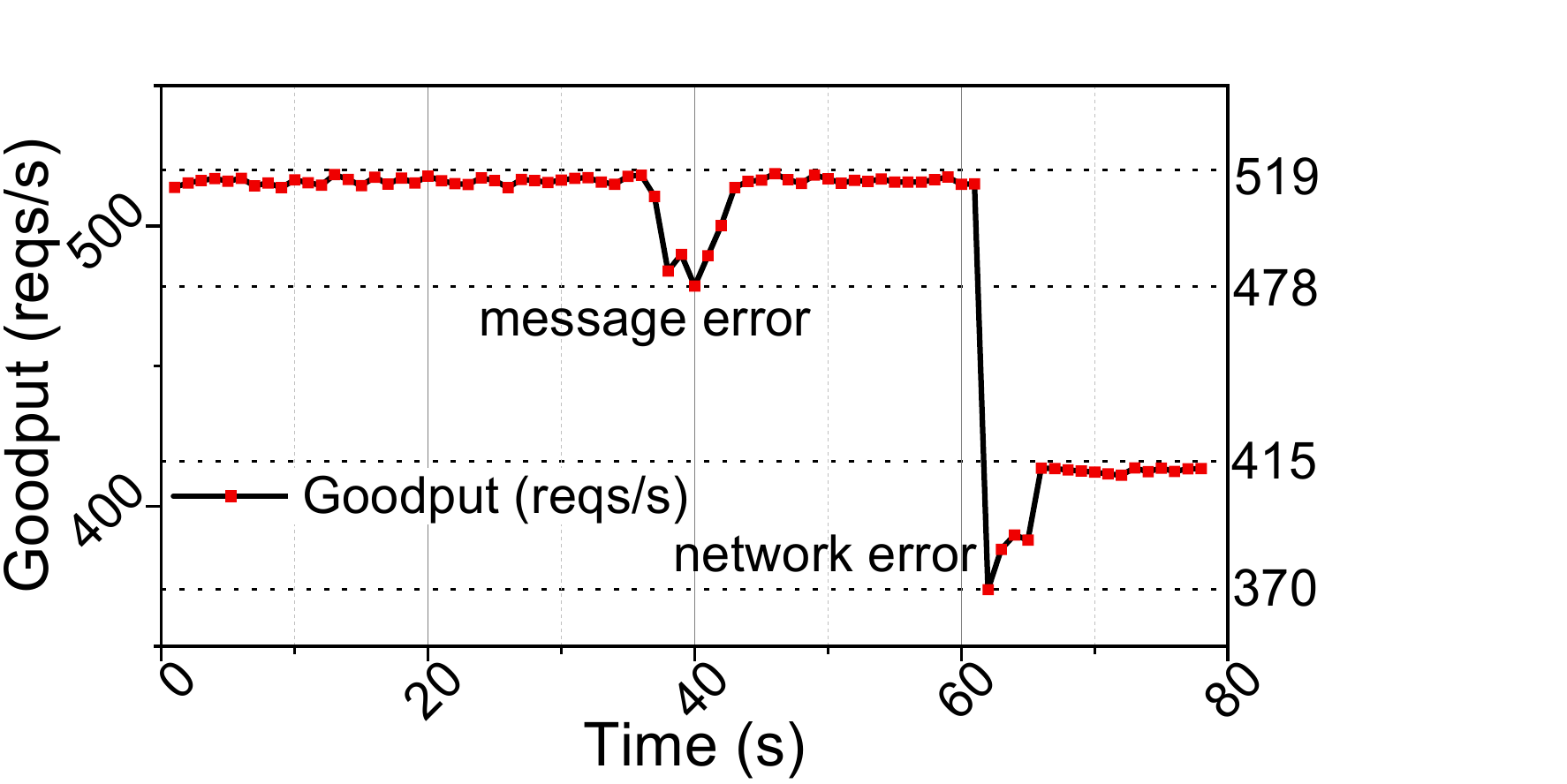}
        \captionsetup{font=footnotesize, justification=centering}
        \vspace{-0.6cm}
        \caption{Synchronization error.}
        % \vspace{-0.3cm}
        \label{fig:evaluation-deepdive-sensitivity-synchronization}
    \end{subfigure}
    ~~
    \centering
    \begin{subfigure}[thbp]{0.24\textwidth}
        \centering
		\includegraphics[width=\textwidth,trim=0cm 0.25cm 3.5cm 0.5cm, clip]{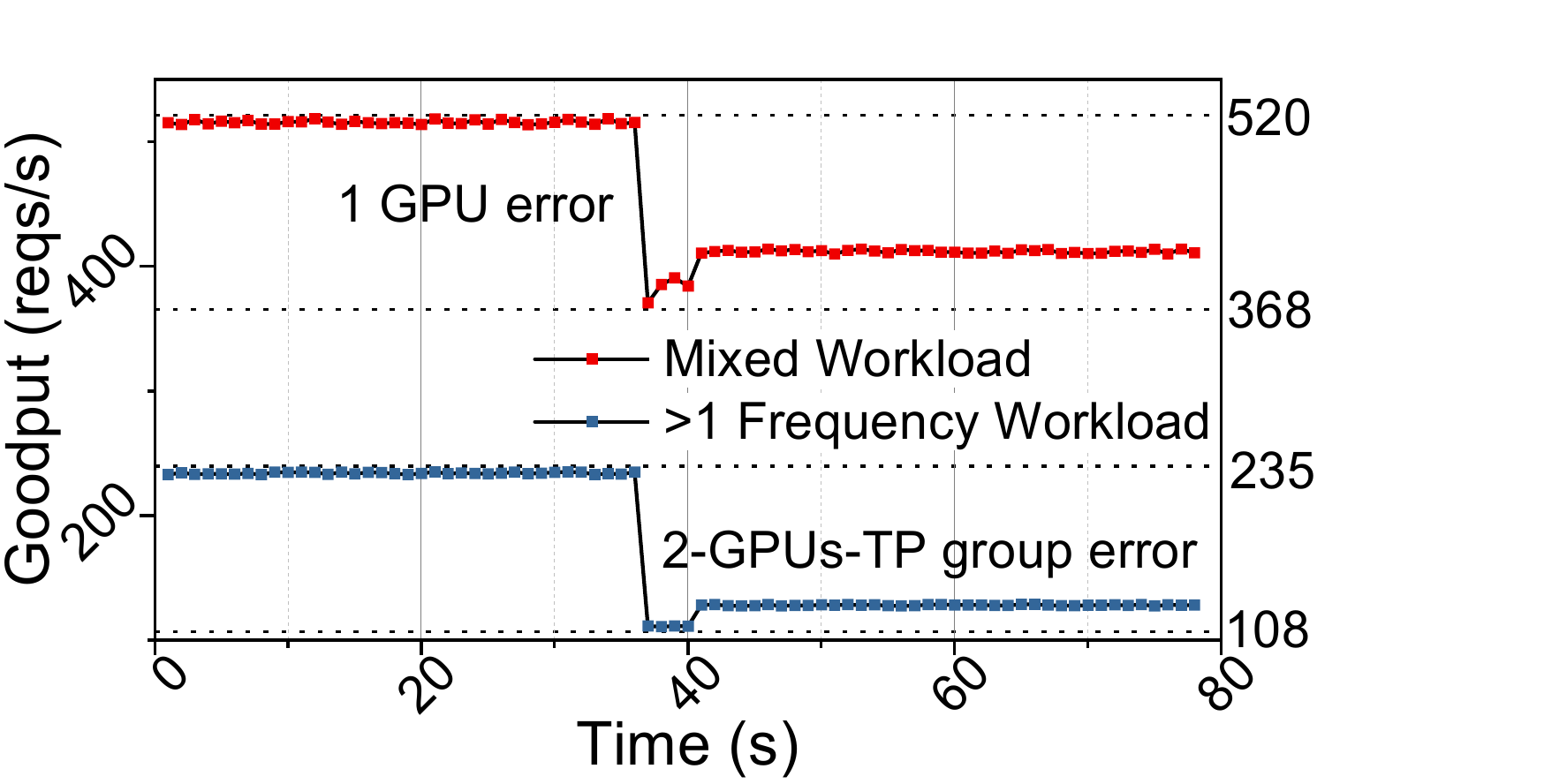}
        \captionsetup{font=footnotesize, justification=centering}
        \vspace{-0.6cm}
        \caption{Hardware error.}
        % \vspace{-0.3cm}
        \label{fig:evaluation-deepdive-sensitivity-hardware}
    \end{subfigure}
    % \vspace{-0.5cm}
    \caption{Sensitivity and error handling in real testbed.}\label{fig:evaluation-deepdive-sensitivity}
    \vspace{10pt}
\end{figure}

\noindent{\bf Overhead and impact of information synchronization:}
First, regarding the overhead of information synchronization in \cref{section:design-decentre}, we experiment with the synchronization delay under various bandwidth and server count conditions. On a small scale, we conduct tests using testbed servers, and on a large scale, we simulate with bandwidth and data volume. 
% The number of information transmission times is directly proportional to the number of nodes, and the amount of information synchronized is also directly proportional to the number of nodes. 
% Consequently, under constant bandwidth conditions, the total synchronization delay is quadratically related to the number of nodes. 
As shown in Fig.~\ref{fig:evaluation-deepdive-effect-synchronizationoverhead}, the information synchronization delay of \EPARA is always within 10s when (bandwith, servers count) equal to (50mbps, 100) or (500mbps, 1000). 
Second, concerning the request handling effect discussed in~\cref{section:design-handling}, \EPARA predetermines service placement during the offloading process. In this context, offloading precision primarily depends on the effectiveness of scheduling strategies and the delay of information sync.
% Assuming centralized information availability across edge nodes, the ideal offloading frequency should theoretically be 0 or 1. However, prolonged sync intervals may lead to information obsolescence, thereby increasing offloading instances. 
Prolonged sync intervals may lead to information obsolescence, thereby increasing offloading instances. 
Testbed experimental verification demonstrates that \EPARA maintains an average offloading count below 1 when sync overhead remains shorter than 100ms, as illustrated in Fig.~\ref{fig:evaluation-deepdive-effect-synchronizationoffload}. Notably, offloading count exhibits a notable upward trend as sync overhead increases.

\subsubsection{Handling Some Extreme Cases}\label{section:
evaluation-deepdive-extreme} 
~\\
% \EPARA works well under different extreme cases. 
%
\noindent{\bf Scalability for abundant servers:}
We evaluate \EPARA's performance in large-scale server deployments through simulation experiments. First, regarding system service capacity, as shown in Fig.~\ref{fig:evaluation-deepdive-extreme-servergoodput}, we observe that \EPARA falls short in maintaining linear service capacity improvement when the number of edge servers continues growing beyond a certain threshold, and we investigate the underlying causes through latency analysis. Second, in terms of latency, Fig.~\ref{fig:evaluation-deepdive-extreme-serverlatency} demonstrates that while request handling maintains satisfactory latency performance, both information synchronization and service placement exhibit significantly increased delays. Since service placement and information synchronization do not directly impact request handling workflows, we posit that moderate latency growth remains acceptable within experimental parameters, whereas excessive delays would compromise offloading precision due to information obsolescence. Third, as illustrated in Fig.~\ref{fig:evaluation-deepdive-extreme-servergoodput}, we identify that a straightforward grouping strategy ensuring 100-500 servers per information exchange group effectively enables \EPARA's scalability in large-server deployments.

\noindent{\bf Scalability of device-saturated system:}
We validate the performance of \EPARA when edge servers manage excessive GPU-enabled edge devices registered for computational offloading. In the testbed configuration, a single edge device is configured to send multiple registration requests to the target edge server for experimental purposes. To address the potential network resource contention caused by device overload, bandwidth constraints are implemented for edge device management. 
First, regarding the overall throughput of edge servers and devices, as shown in Fig.~\ref{fig:evaluation-deepdive-extreme-devicegoodput}, both registration time utilization and GPU utilization rates demonstrate decreases due to increased task deployment latency, while the system maintain throughput stability and service availability. 
Second, concerning the task allocation speed to edge devices, Fig.~\ref{fig:evaluation-deepdive-extreme-devicelatency} illustrates the latency distribution from registration request submission to task assignment reception of edge devices. When concurrent device registrations exceed critical thresholds, service deployment enter queuing states, resulting in loading latency escalation.
For device-saturated networks, potential improvements may involve optimizing inter-device collaboration~\cite{zhangEdgeShardEfficientLLM2024}. However, this would introduce direct device-to-device communication and additional scenarios such as device mobility, which fall outside the scope of \EPARA's current framework.

\noindent{\bf Working with GPU-sparse edge system:}
We evaluate \EPARA's performance through testbed experiments when GPU resources are constrained in edge clouds, subjecting the system to request loads 10 times exceeding its computational capacity. As shown in Fig.~\ref{fig:evaluation-deepdive-extreme-gpusparse} regarding service capacity assessment, \EPARA still successfully fulfills the maximum feasible number of user requests without exhibiting throughput degradation.

% \subsubsection{Sensitivity and Robustness Analysis.}
\subsubsection{Sensitivity and Error Handling.}
~\\
\noindent{\bf Handling wrong synchronization message:}
To address synchronization errors, including undetected information error and detected information loss, we evaluate \EPARA's resilience mechanisms. First, for undetected information errors (e.g., silent data errors), \EPARA passively resolves them with automatic correction during subsequent synchronization cycles. As shown in Fig.~\ref{fig:evaluation-deepdive-sensitivity-synchronization}, such errors only marginally increase offloading counts within the affected synchronization cycle, with negligible impact on overall system throughput. 
Second, for detected information loss (e.g., unresponsiveness), \EPARA actively bypasses faulty servers and propagates synchronization requests to subsequent nodes. The faulty server is flagged as unavailable until manual intervention is performed. Fig.~\ref{fig:evaluation-deepdive-sensitivity-synchronization} demonstrates \EPARA's capability to isolate such faults while maintaining serving continuity.

\noindent{\bf Handling server error:}
To address hardware (e.g., GPU malfunctions) or software errors when serving tasks, \EPARA employs a fault containment strategy. Specifically, when a GPU failure occurs, \EPARA immediately terminates the service of faulty GPU and any GPUs operating in parallel with it. During subsequent service offloading, edge servers controlling the faulty GPU and its associated GPUs automatically exclude their resources from service placement until manual intervention is performed. As illustrated in Fig.~\ref{fig:evaluation-deepdive-sensitivity-hardware}, this mechanism can prevent fault propagation throughout the system during error-handling workflows.

% \section{CASE STUDY 2: SEGEMENTATION}\label{appendix-implement-case_study-segementation}
\subsubsection{Case Study 2: Segmentation in \EPARAit}\label{appendix-implement-case_study-segementation}
~\\
\noindent Since we propose a task-categorized parallelism framework supporting heterogeneous AI services with request/service-level allocation, a case study of semantic/instance segmentation is provided to illustrate \EPARA's operation process.  

\textit{Why measuring segmentation models?}  
First, their applications incorporate both latency-sensitive (image) and frequency-sensitive (video, e.g., autonomous driving) scenarios. Second, their computational demands range from lightweight (UNet~\cite{ronnebergerUNetConvolutionalNetworks2015}) to heavy (OMG-Seg~\cite{liOMGSegOneModel2024}), spanning all four categories defined in \cref{section:design-parallelism}. 
% Implementations are summarized in \emph{Segment} part of Table~\ref{table:evaluation-models}.
Implementations are summarized in Table~\ref{table:appendix-case_study-segementation}.

For settings, we simulate heterogeneous device operators by deploying four categories of services on four servers, each equipped with an Tesla P100 GPU. The services include 1080P image segmentation and 60fps 1080P video segmentation.

Following the methodology in \cref{section:implement-pretraining}, \EPARA conducts adaptive deployment strategies for segmentation models.  
First, We focus on MP and BS for these segmentation models. 
For UNet, DeepLabV3+, SCTNet, MaskFormer, and OMGSeg in latency-sensitive scenarios, \EPARA adopts BS8, BS4, BS4, TP2+BS8, and TP2+BS4, respectively, to reduce inference latency. For UNet, DeepLabV3+, and SCTNet in frequency-sensitive scenarios, \EPARA adopts BS8, BS4, and BS4, respectively, to ensure frame rates.
Second, \EPARA measures the MT of these tasks, and keeps MT equal to 1.
Third, \EPARA focuses on DP and MF strategies for video applications.
For Unet, \EPARA adopts MF4 to ensure the tolerable inter-frame latency. For DeepLabV3+ and SCTNet, \EPARA adopts MF4+DP2 for both services to achieve approximately 60 fps.

We perform segmentation evaluations on \EPARA in a real-world cluster. Results demonstrate that \EPARA better meets the SLOs targets and improves average GPU goodput, as illustrated in Fig.~\ref{fig:implementation-case_study}.

%% appendix case study table
\begin{table}[!t]
    \centering
    \setlength{\abovecaptionskip}{6pt}
    \setlength{\belowcaptionskip}{-10pt}
    \resizebox{1\linewidth}{!} {
    \begin{tabular}{|c||c|c|}
    \hline 
    Category & latency-sensitive (picture) & frequency-sensitive (video) \\
    \hline
    \hline
    $\leq$ 1 GPU & Unet, DeeplabV3+, SCTNet  & Unet \\
    \hline
    $\geq$ 1 GPU & MaskFormer, OMGSeg & DeeplabV3+, SCTNet\\
    \hline
    \end{tabular}
    }
    \vspace{10pt}
    \caption{Segmentation models of case study 2 in \EPARAbf.}\label{table:appendix-case_study-segementation}
    \vspace{-15pt}
\end{table}
% dappendix case study figure
\begin{figure}[!t]
    \begin{subfigure}[b]{0.24\textwidth}
        \centering
        \includegraphics[width=0.49\textwidth,trim=0.5cm 14.3cm 12.2cm 1cm, clip]{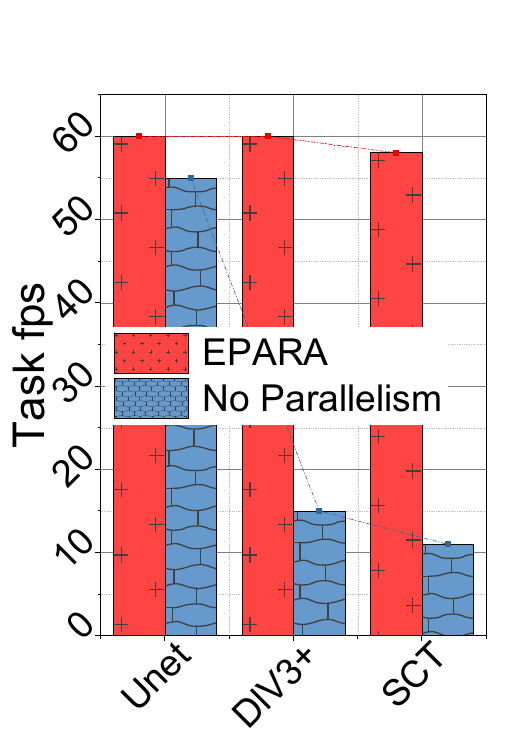}
        \includegraphics[width=0.49\textwidth,trim=0.5cm 14.3cm 12.2cm 1cm, clip]{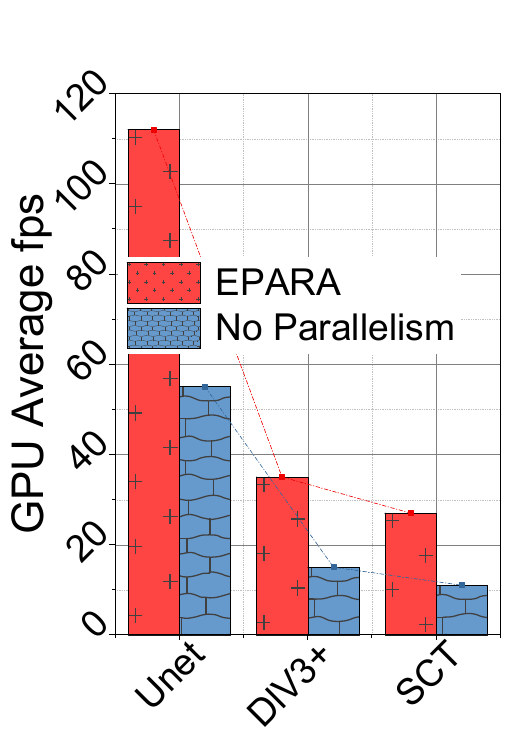}
        \captionsetup{font=footnotesize, justification=centering}
        \vspace{-0.5cm}
        \caption{Video segmentation.}\label{fig:implementation-video}
        % \vspace{-0.3cm}
    \end{subfigure}
    ~~
    \centering
    \begin{subfigure}[b]{0.24\textwidth}
        \centering
        \includegraphics[width=0.49\textwidth,trim=0.5cm 14.3cm 12.2cm 1cm, clip]{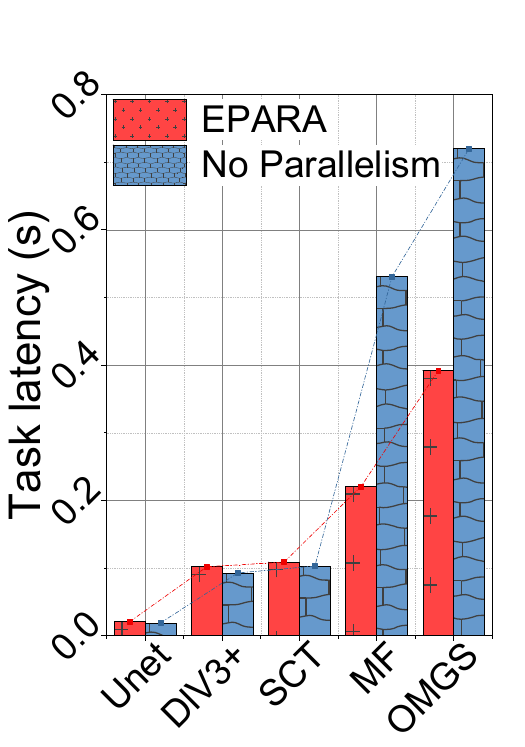}
        \includegraphics[width=0.49\textwidth,trim=0.5cm 14.3cm 12.2cm 1cm, clip]{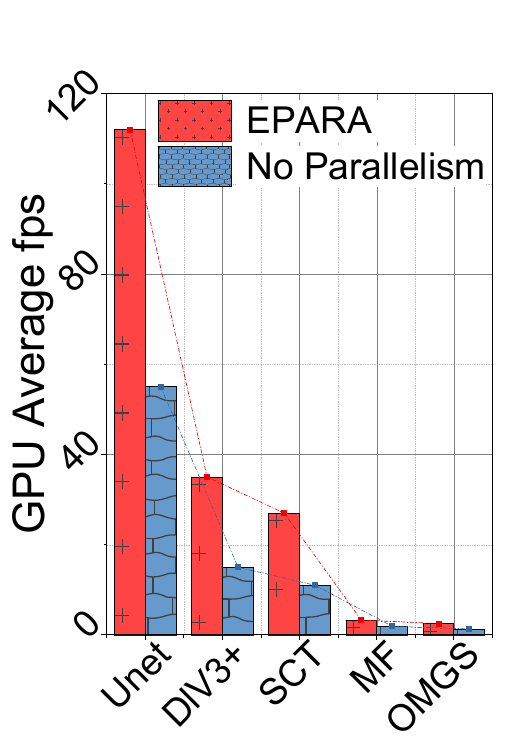}
        \captionsetup{font=footnotesize, justification=centering}
        \vspace{-0.5cm}
        \caption{Picture segmentation.}\label{fig:implementation-picture}
        % \vspace{-0.3cm}
    \end{subfigure}
    % \vspace{-0.5cm}
    \caption{Case study 2: segmentation in \EPARAbf.}\label{fig:implementation-case_study}
    \vspace{0.5cm}
\end{figure}
% !TeX root = ../main.tex
% \section{RELATED WORK}
\section{Related Work}

\noindent Here we discuss the relevant schemes and summarize the differences between some of them and \EPARA as shown in Table~\ref{table:related}.
\begin{table}[!t]
    \begin{center}
    \resizebox{1\linewidth}{!} {
    \begin{tabular}{|p{0em}|c||c|c|c|c|} % 第一列太宽 p{0cm}不能居中，>{\centering}m{1em}无法编译AI说换成这个就能居中了
    \hline
    \multicolumn{2}{|c||}{\multirow{2}{*}{\textbf{Schemes}}}  & \multicolumn{2}{c|}{\textbf{Allocation Level}}& \multicolumn{2}{c|}{\textbf{Application}} \\
    \cline{3-6}

    \multicolumn{2}{|c||}{} & \textit{Request}& \textit{Service}& \textit{\textit{Mode}}& \textit{Scene} \\
    \hline
    \hline
    \multirow{5}{*}{\hspace{-0.15cm}\centering\smash{\rotatebox[origin=c]{90}{\textbf{Edge}}}}& InterEdge~\cite{brownArchitectureEdgeNetworking2024}& No & No & Distr. & Serving \\
    \cline{2-6} 
    &Galaxy~\cite{yeGalaxyResourceEfficientCollaborative2024a} & No & MP+ & Cent. & AI serving \\
    \cline{2-6} 
    &DeTransformer~\cite{weiCommunicationEfficientModelParallelism2024} & No & MP+ & Cent. & AI serving \\
    \cline{2-6} 
    &EdgeShared~\cite{zhangEdgeShardEfficientLLM2024} & No & MP+ & Cent. & LLM serving \\
    \cline{2-6} 
    &SERV-P~\cite{farhadiServicePlacementRequest2021} & No & No & Cent. & Serving \\
    \hline
    \multirow{5}{*}{\hspace{-0.10cm}\centering\smash{\rotatebox[origin=c]{90}{\textbf{Datacenter}}}}& AlpaServe\cite{liAlpaServeStatisticalMultiplexing2023}& No& MP+& Cent.& AI serving\\ % 这里在文本后通过 \ 进行空格以调整居中有可能报错
    % \multirow{4}{*}{\hfill \centering\smash{{\rotatebox[origin=c]{90}{\textbf{allocate}}}}\hfill\mbox{}}& AlpaServe& More table copy& &  \\
    \cline{2-6} 
    & USHER~\cite{shubhaUSHERHolisticInterference2024} & No & MP+ &Cent. & AI serving\\
    \cline{2-6} 
    & SHEPHERD~\cite{zhangSHEPHERDServingDNNs2023}& No & MP & Cent.& AI serving \\
    \cline{2-6} 
    & Nexus~\cite{shenNexusGPUCluster2019}& Queue & MP & Cent. & Video AI serv.\\
    \cline{2-6} 
    & CASSINI~\cite{rajasekaranCASSININetworkAwareJob2024} & Network & 3D & Cent. & AI training\\
    \hline
    \multicolumn{2}{|c||}{\EPARAbf (ours)}& DP+MF & MP+ & Mixed & AI serving\\
    \hline
    \end{tabular}
    }
    \end{center}
    \vspace{10pt}
    \caption{Summary of prior works and comparison to \EPARAbf .}\label{table:related}
    \vspace{-10pt}
\end{table}

\noindent{\bf Edge AI inference:}
Galaxy~\cite{yeGalaxyResourceEfficientCollaborative2024a}  DeTransformer~\cite{weiCommunicationEfficientModelParallelism2024} are two edge devices systems for transformer-based AI with MP. 
CLIO~\cite{huangCLIOEnablingAutomatic2020} focuses on automatic model offloading for edge devices. 
Jupiter~\cite{shengyuanyeJupiterFastResourceEfficient2024} is an edge-devices LLMs inference system concentrating on prefill and decode phases.
EdgeShared~\cite{zhangEdgeShardEfficientLLM2024} is an LLM inference system, focusing on MP through edge devices. These works mainly discuss AI serving through edge devices under centralized management.

\noindent{\bf Edge network task scheduling:} 
InterEdge~\cite{brownArchitectureEdgeNetworking2024} is a novel architecture for edge networking. 
Linklab~\cite{dongLinkLab20Multitenant2023} is a multi-tenant edge computing system. 
% EdgeShared~\cite{zhangEdgeShardEfficientLLM2024} is an edge device and cloud computing system for LLMs.
TORS~\cite{liMultiHopTaskOffloading2025} is a multi-hop task offloading system for edge networks.
KubeEdge~\cite{kubeedge} is a Kubernetes-based edge computing system for universal tasks.

\noindent{\bf Edge AI training:}
Delta and Asteroid~\cite{gongDeltaCloudassistedData2024,yeAsteroidResourceEfficientHybrid2024} are edge AI training systems.
They focus on edge model training, which is concerned with data enrichment and collaboration between devices and cloud services. AdaInf~\cite{shubhaAdaInfDataDrift2023} concentrates on adaptive model training and parameter adjustment in edge clouds.

\noindent{\bf Edge tasks adaptation:} 
AdaptiveNet and EdgeML~\cite{wenAdaptiveNetPostdeploymentNeural2023,edgeml04} focus on AI models adaptation for edge devices.
AccuMO, Gemel, and Hairpin~\cite{kongAccuMOAccuracyCentricMultitask2023,padmanabhanGemelModelMerging2023,mengHairpinRethinkingPacket2024} are dedicated to video AI application adaptation.
REMIX and NeuLens\cite{jiangFlexibleHighresolutionObject2021,houNeuLensSpatialbasedDynamic2022} aim to the joint configuration of video AI models and applications.

\noindent{\bf ML serving system:} AlpaServe, USHER, SHEPHERD, and gpulet~\cite{liAlpaServeStatisticalMultiplexing2023,shubhaUSHERHolisticInterference2024,zhangSHEPHERDServingDNNs2023,choiServingHeterogeneousMachine2022} are machine learning serving systems in datacenter or cloud computing scenarios. DistServe, Parrot, Sarathi-Serve, FastServe~\cite{zhongDistServeDisaggregatingPrefill2024,linParrotEfficientServing2024,agrawalTamingThroughputLatencyTradeoff2024,wuFastDistributedInference2024} are LLMs serving systems. Nexus~\cite{shenNexusGPUCluster2019} focuses on video AI serving with queue management in cloud computing.
% !TeX root = ../main.tex

% \section{CONCLUSION}
\section{Conclusion}
\noindent \EPARA is an edge AI inference framework that aims to enhance serving capabilities of edge computing systems. \EPARA achieves task-resource allocation by categorizing tasks based on latency/frequency sensitivity and GPU resource consumption. \EPARA includes three core components: task-resource allocation, request handling, and service placement. We implement a \EPARA prototype with adaptive deployment strategies and management protocols, and conduct two case studies. We evaluate \EPARA through real-world testbed experiments, large-scale simulations, and deepdive analyses. \EPARA proves to be a viable solution that fulfills our objectives.

% {\noindent{\it This work does not raise any ethical issues.}}

% \input{section/acknowledge.tex}

% \newpage
%% 强制换页
\clearpage

\bibliographystyle{num-ACM-Reference-Format}
\bibliography{refs_ccf}

% \newpage
%% 强制换页
% \clearpage

% !TeX root = ../main.tex

\appendix
% \section*{APPENDIX}
\section*{Appendix}
\noindent\emph{Note:} Appendices are supporting material that has not been peer-reviewed.

% \section{SUBMODULAR FUNCTION PROOF}\label{section:appendix-submodular}
\section{Submodular Function and Lower Bound}\label{section:appendix-submodular}

\begin{theorem}[\textbf{Submodular Function}]\label{theorem:Submodular Function}
  \EPARAit state-aware placement $\varphi(X)$ is a submodular function.
\end{theorem}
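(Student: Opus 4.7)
The plan is to verify submodularity in its diminishing-returns form: for every $A\subseteq B\subseteq X$ and every $x=x_{ln}\notin B$, establish
\[
\varphi(A\cup\{x\})-\varphi(A)\;\geq\;\varphi(B\cup\{x\})-\varphi(B).
\]
First I would fix notation. Let $\mathcal{R}(\Theta)$ denote the multiset of requests in $R^T$ that the greedy handler of \cref{section:design-handling} satisfies under placement $\Theta$, where a frequency-sensitive task contributes by the fraction of its frames meeting the SLO as in \eqref{equation:optimal}. Then $\varphi(\Theta)=|\mathcal{R}(\Theta)|$, so the target inequality reduces to $|\mathcal{R}(A\cup\{x\})\setminus \mathcal{R}(A)|\geq|\mathcal{R}(B\cup\{x\})\setminus\mathcal{R}(B)|$.

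The first substantive step is a monotonicity lemma: $A\subseteq B\Rightarrow \varphi(A)\leq\varphi(B)$. I would argue this by a replay argument. The handler decisions used to satisfy each request under $A$ remain legal under $B$, because (i) the local-first rule in \cref{section:design-handling} is strictly enlarged by extra placements, and (ii) any offloading destination that was reachable under $A$ is still reachable under $B$, so no request previously completed within its SLO can become infeasible. Thus $\mathcal{R}(A)\subseteq\mathcal{R}(B)$, and every ``new'' satisfaction in the marginal difference is counted positively.

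The main step is a coupling argument for marginal gains. The placement $x=x_{ln}$ can only create new satisfactions for requests whose service is $l$ and whose routing path terminates at server $n$. I would show that the multiset of such newly satisfiable requests after augmenting $A$ dominates the corresponding multiset for $B$. Concretely, any request $r$ that is newly satisfied by $B\cup\{x\}$ is either already in $\mathcal{R}(A)$ (excluded from the $B$-side marginal gain by definition) or absent from both $\mathcal{R}(A)$ and $\mathcal{R}(B)$. In the latter case, server $n$ retains at least as much free MPS computational quota $a_l$ and VRAM $b_l$ under $A\cup\{x\}$ as under $B\cup\{x\}$, since $A$ imposes no more occupancy than $B$, so the handler can route $r$ to $n$ under $A\cup\{x\}$ as well. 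A per-GPU accounting of $a_l$ and $b_l$, as defined before \eqref{equation:approximation}, closes the inequality.

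The main obstacle I anticipate is that the handler's random offloading rule with probabilities proportional to $\tilde{p}^{r}_{\dot n}(\ddot t_n)/\sum_m \tilde{p}^{r}_{m}(\ddot t_m)$ makes routing \emph{placement-dependent}, so a verbatim replay of the $A$-routing under $B$ does not reproduce the handler's actions. To sidestep this I would argue not about identical routings but about the downward-closed \emph{feasibility structure} of request--server assignments: a coverage-style objective such as $\varphi$ is submodular whenever the collection of feasible request--server assignments is closed under removing placements, and the resource bounds $a_l,b_l$ together with the local-first priority of \cref{section:design-handling} guarantee precisely this closure. If the random routing introduces residual slack, I would take an expectation over the handler's internal coins; submodularity of the realized coverage implies submodularity of its expectation, so $\varphi$ inherits the property.
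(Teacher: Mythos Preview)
Your diminishing-returns argument is exactly the route the paper takes: it defines $\rho_\Omega(i)=\varphi(\Omega+i)-\varphi(\Omega)$ and states $\rho_A(\xi)\ge\rho_B(\xi)$ for $A\subseteq B\subseteq X$ and $\xi\in X\setminus B$. The paper's proof is in fact far terser than yours---after observing that the handler makes $y_{tlnm}$ a deterministic function of the placement set, it simply asserts this inequality without any replay, coupling, resource-accounting, or treatment of the randomized-offloading obstacle you identify; your additional justification goes well beyond what the paper supplies.
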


\begin{proof}
  \label{proof:subodular function}
Notice that there is a scheduling variable $y_{tlnm}$ in the model that represents if the request $r_{tln}$ is satisfied on $m$. It is not difficult to find that when we determine two sets of $x_{ln}$ and $r_{tln}$, it must correspond to a certain set of $y_{tlnm}$, that is, $y_{tlnm}$ is a function of $x_{ln}$, $y_{tlnm}=f(x_{ln})$, where $f$ represents request handling strategy (\cref{section:design-handling}) and $y_{tlnm}$ is the shadow scheduling variable. Therefore, $x_{ln}$ completely determines the final output of the scheduling process. 

We make $X=\{x_{ln}\mid l \in L, n \in N\}, Y=\{y_{tlnm}\mid t\in T, l \in L, m \in N, n \in N\}$ and use $\varphi$ to represent the objective function of placement. \EPARA wants to maximize the objective function $\varphi(Y)$, which is equivalent to maximizing $\varphi(f(X))$, where $f$ represents the handling strategy in \cref{section:design-handling}.

It is only necessary to show that the placement process $\varphi(X)$ is submodular when the input to the function is only $X$. The function can be expressed as $H:2^{X}\rightarrow R$, where $X$ is a finite set of all caching possibilities, $2^X$ is the power set of $X$, and $R$ is a set of real numbers. 

We let,
\begin{equation}
\rho_{\Omega}(i)=\varphi(\Omega+i)-\varphi(\Omega). 
\end{equation}

For any $A\subseteq B\subseteq X$, and for any cache $\xi\in X\setminus B$, there is always,
\begin{equation}
\rho_A(\xi) \geq \rho_B(\xi).
\end{equation}

Therefore, the final objective function $\varphi$ is the submodular function of $X$.
\end{proof}

\begin{theorem}[\textbf{Lower Bound Approximation}]\label{theorem:approximation}
    \EPARAit state-aware placement yields a $1/(1+P)$ approximation, where
    \begin{center}
    $P = \left\lceil \frac{\max a_{l}}{\min_{a_{l} > 0} a_{l}} \right\rceil + \left\lceil \frac{\max b_l}{\min_{ b_l > 0} b_l} \right\rceil$, 
    \end{center}
    when each server has at most one GPU that cannot fully utilize computational power or VRAM.
\end{theorem}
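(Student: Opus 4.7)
The plan is to combine the submodularity of $\varphi$ (established in Theorem~1) with a charging argument that exploits the knapsack-style structure of the feasibility constraints on each GPU. Let $G = (g_1, \ldots, g_k)$ denote the sequence of placements produced by Algorithm~\ref{algorithm:fullmodel}, let $G_i = \{g_1,\ldots,g_i\}$, and let $O^{\ast}$ denote any optimal feasible placement. Writing $\Delta_i(e) = \varphi(G_{i-1}+e) - \varphi(G_{i-1})$, the target inequality $\varphi(G) \geq \varphi(O^{\ast})/(1+P)$ will be rewritten as $\varphi(O^{\ast}) - \varphi(G) \leq P \cdot \varphi(G)$ and attacked through three steps.

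First, I would establish the standard greedy comparison step. Because the algorithm at step $i$ chooses $g_i$ to maximize $\Delta_i$ over every element still feasible given the current GPU budgets, any $o \in O^{\ast}$ that is feasible when step $i$ runs satisfies $\Delta_i(o) \leq \Delta_i(g_i)$; combined with monotonicity and submodularity, this gives, for any charging map $\pi: O^{\ast}\setminus G \to G$,
\[
\varphi(O^{\ast}) - \varphi(G) \;\leq\; \sum_{i=1}^{k} |\pi^{-1}(g_i)| \cdot \Delta_i(g_i).
\]
Combined with the telescoping identity $\sum_i \Delta_i(g_i) = \varphi(G)$, the whole bound reduces to exhibiting a charging map $\pi$ for which $|\pi^{-1}(g_i)| \leq P$ for every $i$.

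Second, I would construct $\pi$. For each $o = x_{ln} \in O^{\ast} \setminus G$, define $t(o)$ to be the first iteration at which $o$ becomes infeasible. Under the hypothesis that each server has at most one GPU that is capacity-binding, that infeasibility at step $t(o)$ must be caused by the placement $g_{t(o)}$ itself, and $g_{t(o)}$ must be co-located on the GPU to which $O^{\ast}$ assigns $o$. Depending on whether the violated constraint is the MPS computational budget or the VRAM budget, I would split $O^{\ast}\setminus G$ into two groups and let $\pi(o) = g_{t(o)}$ within each group. Third, I would bound $|\pi^{-1}(g_i)|$ per group: all $o$'s charged via the computational constraint are pairwise compatible on the same GPU in $O^{\ast}$, each consumes at least $\min_{a_l > 0} a_l$, yet collectively they must overflow a slot of size at most $\max a_l$, yielding at most $\lceil \max a_l / \min_{a_l > 0} a_l \rceil$ charges; the VRAM side is symmetric. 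Adding the two yields the $P$ in the claim.

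The main obstacle I expect is the second step: carefully arguing that the blocking event for $o$ can always be pinned on a single $g_{t(o)}$ residing on the GPU that $O^{\ast}$ uses for $o$, and not on some combination of earlier placements scattered across different GPUs. This is exactly where the hypothesis about having at most one non-fully-utilized GPU per server does the work, since it rules out pathological cases in which the greedy has already partially filled several GPUs so that $o$ is blocked by a mixture of residues whose weights do not telescope into the clean ratio $\lceil \max / \min \rceil$. Once this pinning is handled, the group sizes add exactly to $P$ rather than a looser bound, and substituting into the display above closes the argument.
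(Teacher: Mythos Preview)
Your overall strategy is sound, but the charging map you propose does not deliver the per-element bound $|\pi^{-1}(g_i)|\le P$. Take a single GPU with computational capacity $C$, all services of size $1$ (so $\lceil \max a_l/\min a_l\rceil=1$), greedy output $G=\{g_1,\dots,g_C\}$, and an optimal $O^\ast=\{o_1,\dots,o_C\}$ disjoint from $G$. Every $o_j$ remains feasible through step $C-1$ and first becomes infeasible only after $g_C$ is added, so under your rule $t(o_j)=C$ for every $j$ and $|\pi^{-1}(g_C)|=C$, not $1$. The sentence ``collectively they must overflow a slot of size at most $\max a_l$'' is exactly where the reasoning breaks: first-infeasibility charging does not force the blocked optimal elements to pack into the space that $g_i$ itself occupies.

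The paper takes a different route. It frames the feasibility region as an intersection of $P$ matroids and reproduces the classical greedy bound for that setting; crucially, that proof does \emph{not} establish a per-step bound but only the cumulative inequality $\sum_{j\le k}\theta_{j-1}\le Pk$ (with $\theta_{j-1}$ counting optimal elements first spanned in iteration $j$), and then converts this prefix-sum constraint into $\sum_j \rho_{j-1}\theta_{j-1}\le P\sum_j \rho_{j-1}$ via an LP-duality rearrangement that exploits the monotone decrease of the greedy marginals $\rho_{j-1}$. Your argument can be salvaged in either of two ways: keep your charging but replace the per-element claim by the analogous prefix-sum bound and insert that rearrangement step; or switch to an exchange-based charging in which you maintain $O^\ast=O_0\supseteq O_1\supseteq\cdots$ with each $G_i\cup O_i$ feasible, removing at most $P$ elements of $O_{i-1}\setminus G_i$ whenever $g_i$ is added. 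The latter is precisely $P$-extendibility of the two-knapsack constraint, and it does yield $|\pi^{-1}(g_i)|\le P$ directly.
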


\begin{proof}~\\
\textbf{1) We normalize the problem and algorithm in}~\cref{section:design-placement}~\cite{fisherAnalysisApproximationsMaximizing1978}.
\begin{equation}
    \begin{split}
    \max_{\Theta \subseteq X}\{\varphi(\Theta): \Theta \in \bigcap_{p=1}^{P} \mathscr{F}_p,\mathscr{M}_{p}=(X,\mathscr{F}_{p})\ are \ matroids,\\ 
    p=1,\dots,P,\varphi(\Theta)\ Submodular\ and\ nondecreasing\}
    \end{split}
\end{equation}

\noindent The greedy heuristic for nondecreasing set functions on independence systems ($X, \mathscr{F}$). 

Initialization. Let $\Theta^{0}$ = $\emptyset$, $X^{0}=X$ and set $t=1$. 

Iteration k 

Step 0. If $X^{k-1}=\emptyset$, stop with $\Theta^{k-1}$ the greedy solution.

Step 1. Select $i(k)\in X^{k-1}$ with ties settled arbitrarily, for which $\rho_{i(k)}(\Theta^{k-1})=\max_{i\in X^{k-1}}\rho_{i}(\Theta^{k-1})$. 

Step 2a. If $\Theta^{k-1}\cup\{{i(k)}\}\notin{\mathscr{F}}$, set $X^{k-1}=X^{k-1}-\{i(k)\}$ and return to Step 0.

Step 2b. If $\Theta^{k-1}\cup\{{i(k)}\}\in{\mathscr{F}}$, set $\rho_{k-1}=\rho_{i(k)}(\Theta^{k-1})$, $\Theta^{k}=\Theta^{k-1}\cup\{i(k)\}$ and $X^t=X^{t-1}-\{i(k)\}$

Step 3. Set $t\rightarrow t+1$ and continue. 

Finally, we let $\Phi$ denote the optimal value and $\Phi^G$ the value of a greedy solution to this problem.

\noindent\textbf{2) Proposition 1.} 
$U^{k}\subseteq \bigcup_{p=1}^{P}\theta p^{p}(\Theta^k),t=0,1,\dots.$

\textit{Proof.} If \( j \in U^k \), then either \( j \in S^k \subseteq \text{sp}^p(S^k) \) for all \( p \), or \( j \) failed the independence test at Step 2a, which implies \( j \in \text{sp}^p(S^k) \) for some \( p \).

\noindent\textbf{3) Proposition 2.}
$\text{If}\ \sum_{i=0}^{k-1} \sigma_i \leq k$ for $k = 1, \dots, K$, and $\rho_{i-1} \geq \rho_i$, $i = 1, \dots, K-1$ with $\rho_i, \sigma_i \geq 0$, then $\sum_{i=0}^{K-1} \rho_i \sigma_i \leq \sum_{i=0}^{K-1} \rho_i$.

\textit{Proof.} Consider the linear program
%$$
%V = \max_{\sigma} \left\{ \sum_{i=0}^{K-1} \rho_i \sigma_i : \sum_{i=0}^{k-1} \sigma_i \leq k, \ k %= 1, \dots, K, \ \sigma_i \geq 0, \ i = 0, \dots, K-1 \right\}
%$$
\[
V = \max_{\sigma} \left\{
\begin{aligned}
  & \sum_{i=0}^{K-1} \rho_i \sigma_i : \sum_{i=0}^{k-1} \sigma_i \leq k, \\
  & k = 1, \dots, K, \ \sigma_i \geq 0, \ i = 0, \dots, K-1
\end{aligned}
\right\}
\]

with dual
$$
W = \min_{u}\left\{ 
\begin{aligned}
    & \sum_{k=1}^{K} k u_{k-1} : \sum_{k=i}^{K-1} u_k \geq \rho_i, \ i = 0, \dots, K-1, \\
    & \ u_k \geq 0, \ k = 0, \dots, K-1
\end{aligned} 
\right\}.
$$ 

As \( \rho_i \geq \rho_{i+1} \), the solution \( u_i = \rho_i - \rho_{i+1} \), \( i = 0, \dots, K-1 \) (where \( \rho_K = 0 \)) is dual feasible with value 
\[
\sum_{k=1}^K k(\rho_{k-1} - \rho_k) = \sum_{i=0}^{K-1} \rho_i.
\]
By weak linear programming duality,
\[
\sum_{i=0}^{K-1} \rho_i \sigma_i \leq V \leq W \leq \sum_{i=0}^{K-1} \rho_i.
\]

\noindent\textbf{4) Proposition 3.} 
If the greedy heuristic is applied to problem 1), then
$\frac{\Phi - \Phi^G}{\Phi - \varphi(\emptyset)} \leq \frac{P}{P+1}.$
The bound is tight for all $P$.

\textit{Proof.}
Let $T$ and $\Theta$ be optimal and greedy solutions respectively, with $|\Theta| = K$. 
For $k = 1, \dots, K$ let $\theta_{k-1} = |T \cap (U^k - U^{k-1})|$, where $U^{k-1}$ is the set of all elements 
considered during the first $t$ iterations before the addition of a $t$ th element to $\Theta^{t-1}$. 

Next we show 

$\text{(a) }\sum_{j \in T-\Theta} \rho_j(\Theta) \leq \sum_{i=1}^{K} \rho_{i-1} \theta_{i-1}$, and

(b) $\sum_{i=1}^{k} \theta_{i-1} \leq Pk$ for $k = 1, \dots, K$, which implies via Proposition 2 that 
$\sum_{i=1}^{K} \rho_{i-1} s_{i-1} \leq P \sum_{i=1}^{K} \rho_{i-1}$. The result then follows from Proposition 1 since
%\Phi = \varphi(T) \leq z(\Theta) + \sum_{j \in T-\Theta} \rho_j(\Theta) \leq \Phi^G + P %\sum_{i=1}^{K} \rho_{i-1} = \Phi^G + P(\Phi^G - \varphi(\emptyset)).
$$
\begin{aligned}
    \Phi &= \varphi(T) \leq z(\Theta) + \sum_{j \in T-\Theta} \rho_j(\Theta) \leq \Phi^G + P \sum_{i=1}^{K} \rho_{i-1} \\
    & = \Phi^G + P(\Phi^G - \varphi(\emptyset)).
\end{aligned}
$$
$$
\begin{aligned}
\text{(a)}\sum_{j \in T-\Theta} \rho_j(\Theta) \leq \sum_{j \in T} \rho_j(\Theta) &= \sum_{k=1}^{K} \sum_{j \in T \cap (U^t - U^{t-1})} \rho_j(\Theta) \\
&\leq \sum_{k=1}^{K} \rho_{k-1} s_{k-1}
\end{aligned}
$$
as $\rho_j(\Theta) \leq \rho_{k-1}$ for $j \in U^t - U^{t-1}$ by the nature of the greedy heuristic.

(b) From the definition of $s_{k-1}$, and assuming without loss of generality that $U^0 = \emptyset$, we have $\sum_{i=1}^k \theta_{i-1} = |T \cap U^k|$. Now by Proposition 1 $U^k \subseteq \bigcup_{p=1}^P \theta p^p(\Theta^k)$ so that $|T \cap U^t| \leq \sum_{p=1}^P |T \cap \Theta\text{p}^p(S^k)|$. But as $T$ is independent in matroid $(N, \mathscr{F}^p)$ and $r_p(\theta\text{p}^p(\Theta^t)) = k$, we obtain $|T \cap \theta\text{p}^p(\Theta^t)| \leq k$. Therefore $\sum_{i=1}^k \theta_{i-1} \leq P k$.

To show that the bound is tight we exhibit a family of problems with $X = \{1, \ldots, P + 2\}, \ \Phi^G = 1$ and $\Phi = P + 1$. We associate the variable $x_i \in \{0, 1\}$ with element $i$ and represent $\Theta \subseteq N$ by its characteristic vector; $x_i = 1$ if $i \in \Theta$ and $x_i = 0$ if $i \notin \Theta$. Consider the problem
%% 1st version
% $$
% Z = \max \ x_1 + x_2 + \cdots + x_{P+1} + x_{P+2} - x_1 x_{P+2} \\
% $$
% $$
% x_1 + x_2 \leq 1, \quad \mathscr{M}_1 
% $$
% $$
% x_1 + x_3 \leq 1, \quad \mathscr{M}_2 
% $$
% $$
% x_1 + x_P \leq 1, \quad \mathscr{M}_{P-1}
% $$
% $$
% x_1 + x_{P+1} \leq 1,
% x_1 + x_{P+2} \leq 1, \quad \mathscr{M}_P $$
% $$
% x_i \in \{0, 1\}, \quad i = 1, \ldots, P + 2.
% $$
%%2nd version
$$
\begin{aligned}
  Z = \max \enspace
  &x_1 + x_2 + \cdots + x_{p+1} + x_{p+2} - x_1 x_{p+2} \\
  \\[-0.7cm]
  &x_1 + x_2  &\leq 1, \mathscr{M}_1 \hspace{0.45cm}\\
  \\[-0.7cm]
  &x_1  \enspace \enspace \enspace\enspace +x_3&\leq 1,\mathscr{M}_2 \hspace{0.45cm}\\
  \\[-0.9cm]
  & \vdots \! &\vdots \hspace{1.26cm} \\
  \\[-0.7cm]
  & x_1 \enspace \enspace \enspace\enspace\enspace\enspace+x_p&\leq 1, \mathscr{M}_{P-1} \hspace{0.08cm} \\
  \\[-0.7cm]
  & x_1  \enspace \enspace \enspace\enspace\enspace\enspace \enspace \enspace\enspace +x_{p+1}&\leq 1, \mathscr{M}_{P}\hspace{0.38cm}\\
  \\[-0.7cm]
  & x_1  \enspace \enspace \enspace\enspace\enspace\enspace \enspace \enspace\enspace\enspace\enspace\enspace\enspace\enspace+x_{p+2}&\leq 1, \mathscr{M}_P\hspace{0.38cm}
\end{aligned}
$$
\[
x_i \in \{0, 1\}, \quad i = 1, \dots, p+2.
\]
%%% 20250318 Deepseek加的大括号，不行
% $$
% \begin{aligned}
%   Z = \max \enspace
%   &x_1 + x_2 + \cdots + x_{p+1} + x_{p+2} - x_1 x_{p+2} \\
%   \\[-0.7cm]
%   &x_1 + x_2  &\leq 1, \mathscr{M}_1 \hspace{0.45cm}\\
%   \\[-0.7cm]
%   &x_1  \enspace \enspace \enspace\enspace +x_3&\leq 1,\mathscr{M}_2 \hspace{0.45cm}\\
%   \\[-0.9cm]
%   & \vdots \! &\vdots \hspace{1.26cm} \\
%   \\[-0.7cm]
%   & \left. \begin{aligned}
%     & x_1 \enspace \enspace \enspace\enspace\enspace\enspace+x_p &\leq 1, \mathscr{M}_{P-1} \hspace{0.08cm} \\[-0.7cm]
%     & x_1  \enspace \enspace \enspace\enspace\enspace\enspace \enspace \enspace\enspace +x_{p+1} &\leq 1, \mathscr{M}_{P}\hspace{0.38cm}
%   \end{aligned} \right\} \\
%   \\[-0.7cm]
%   & x_1  \enspace \enspace \enspace\enspace\enspace\enspace \enspace \enspace\enspace\enspace\enspace\enspace\enspace\enspace+x_{p+2} &\leq 1, \mathscr{M}_P\hspace{0.38cm}
% \end{aligned}
% $$

%3rd version
% \[
% Z = \max \left\{
% \begin{aligned}
%   &x_1 + x_2 + \cdots + x_{p+1} + x_{p+2} - x_1 x_{p+2} \\
%   &x_1 + x_2 \\
%   &x_1 + x_3 \\
%   &\vdots \\
%   &x_1 + x_p \\
%   &x_1 + x_{p+1} \\
%   &x_1 + x_{p+2}
% \end{aligned}
% \right.
% \quad
% \begin{aligned}
%   &\leq 1, \quad M_1 \\
%   &\leq 1, \quad M_2 \\
%   &\vdots \\
%   &\leq 1, \quad M_{P-1} \\
%   &\leq 1, \quad M_P
% \end{aligned}
% \]

It is easy to verify that this quadratic objective function is submodular and nondecreasing. An optimal solution is given by $x_1 = 0$ and $x_i = 1$, $i = 2, \ldots, P + 2$ so that $\Phi = P + 1$. However, the greedy heuristic can select elements $1$ and $P + 2$ in the given order, which yields $\Phi^G = 1$. 

\noindent\textbf{5) final proof.} 
Based on the result from Proposition 3, $\frac{\Phi - \Phi^G}{\Phi - \varphi(\emptyset)} \\ \leq \frac{P}{P+1}$, we can get that the lower bound is $\frac{1}{P+1}$. In our situation, $P = \left\lceil \frac{\max a_{l}}{\min_{a_{l} > 0} a_{l}} \right\rceil + \left\lceil \frac{\max b_l}{\min_{ b_l > 0} b_l} \right\rceil$ is the number of matroids. So the result can be easily deduced. 
\end{proof}

\section{Environment Setting in Testbed}\label{section:appendix-testbed_setting}
\noindent Table~\ref{table:appendix-environment_settings} summarizes the hardware and software environment settings in our testbed experiments.
%% appendix environment table
% \begin{center}
\begin{table}[!t]
	\centering
	\resizebox{0.8\linewidth}{!} {
	\begin{tabular}{|c||c|c|}
	\hline 
	&Environment&Parameter\\
	\hline
	\hline
	\multirow{4}{*}{\rotatebox{90}{Hardware$\ \ \ \ \ $}}&Mellonax CX6 NIC bandwidth & 100Gb/s \\
	\cline{2-3}
	&AS4610-54T switch bandwidth & 10Gb/s \\
	\cline{2-3}
	&Nvidia Tesla P100 VRAM & 16GB \\
	\cline{2-3}
	&Switch port number & 54 \\
    \cline{2-3}
    &U50 clock frequency & 200MHz \\
    \cline{2-3}
    &Basys 3 clock frequency & 100MHz \\
	\hline
	\multirow{4}{*}{\rotatebox{90}{Software$\ \ \ \ \ \ \ \ \ $}}&Nvidia driver & 550.54.14 \\						
	\cline{2-3}
	&Nvidia cuda version & 11.8 \\						
	\cline{2-3}
	&PyTorch~\cite{paszkePyTorchImperativeStyle2019} version & 2.5.1 \\			
	\cline{2-3}
	&Ubuntu version & 20.04 LTS \\
	\cline{2-3}
    &Xilinx vivado version & 2020.1 \\
    \cline{2-3}
	&\EPARA max offloading count & 5 \\
    \cline{2-3}
	&\EPARA placement mode & offline \\
	\hline 
	\end{tabular}
	}
	\vspace{10pt}
	\caption{Testbed environments.}\label{table:appendix-environment_settings}
	\vspace{-14pt}
\end{table}
% \end{center}
% \vspace{-20pt}
\vfill % 填补底部空白

\end{document}